\newcommand{\tbl}[2]{\begin{center} #2 \end{center} \caption{#1}}
\newcommand{\url}[1]{\texttt{#1}}
\newcommand{\ack}[1]{\textbf{Acknowledgment:} #1}
\newtheorem{lemma}{Lemma}
\newtheorem{theorem}{Theorem}
\global\long\def\dist{\mathrm{dist}}
\global\long\def\searchspace{\mathrm{SS}}
\global\long\def\mynew{w^{\mathrm{new}}}
\global\long\def\myold{w^{\mathrm{old}}}
\global\long\def\distI{\dist_{\mathrm{I}}}
\global\long\def\distA{\dist_{\mathrm{A}}}
\global\long\def\distUD{\dist_{\mathrm{UD}}}
\newcommand{\lyxdot}{.}
\newcommand{\musec}{\textmu{}s}
\newcommand{\ie}{i.\,e.}
\newcommand{\eg}{e.\,g.}
\newcommand{\cf}{cf.}
\newcommand{\wrt}{wrt.}
\newcommand{\enabled}{\ensuremath{\bullet}}
\newcommand{\disabled}{\ensuremath{\circ}}
\title{Customizable Contraction Hierarchies\thanks{Partial support by DFG grant WA654/16-2 and EU grant 288094 (eCOMPASS)
and Google Focused Research Award.}}
\author{
Julian Dibbelt, Ben Strasser and Dorothea Wagner \\ Karlsruhe Institute of Technology
}
\begin{document}

\maketitle

\begin{abstract}
We consider the problem of quickly computing shortest paths in weighted graphs. Often, this is achieved in two phases: 1) derive auxiliary data in an expensive preprocessing phase, 2) use this auxiliary data to speedup the query phase. By adding a fast weight-customization phase, we extend Contraction Hierarchies to support a three-phase workflow:
The expensive preprocessing is split into a phase exploiting solely the unweighted topology of the graph, as well as a lightweight phase that adapts the auxiliary data to a specific weight.
We achieve this by basing our Customizable Contraction Hierarchies on nested dissection orders. We provide an in-depth experimental analysis on large road and game maps that shows that Customizable Contraction Hierarchies are a very practicable solution in scenarios where edge weights often change. 
\end{abstract}

\section{Introduction}

Computing optimal routes in road networks has many applications such as navigation, logistics, traffic simulation or web-based route planning. 
Road networks are commonly formalized as weighted graphs and the optimal route is formalized as the shortest path in this graph. 
Unfortunately, road graphs tend to be huge in practice with vertex counts in the tens of millions, rendering Dijkstra's algorithm~\cite{d-ntpcg-59} impracticable for interactive use:
It incurs running times in the order of seconds even for a single path query. 
For practical performance on large road networks, preprocessing techniques that augment the network with auxiliary data in a (possibly expensive) offline phase have proven useful. 
See~\cite{bdgmpsww-rptn-14} for an overview. 
Many techniques work by adding extra edges called \emph{shortcuts} to the graph that allow query algorithms to bypass large regions of the graph efficiently.
While variants of the optimal shortcut selection problem have been proven to be NP-hard~\cite{bddsw-tspca-12}, determining good shortcuts is feasible in practice even on large road graphs.
Among the most successful speedup techniques using this building block are Contraction Hierarchies~(CH) by~\cite{gssd-chfsh-08,gssv-erlrn-12}.
At its core the technique consists of a systematic way of adding shortcuts by iteratively contracting vertices along a given order.
Even though ordering heuristics exist that work well in practice~\cite{gssv-erlrn-12}, the problem of computing an optimal ordering is NP-hard in general~\cite{bckkw-psuth-10}.
Worst-case bounds have been proven in~\cite{adfgw-h-13} in terms of a weight-dependent graph measure called highway dimension and \cite{m-o-12} have shown that many of these bounds are tight on many graph classes.

A central restriction of CHs as proposed by~\cite{gssv-erlrn-12} is that their preprocessing is \emph{metric-dependent}, that is edge weights, also called \emph{metric}, need to be known. 
Substantial changes to the metric, \eg, due to user preferences or traffic congestion, may require expensive recomputation. 
For this reason, a Customizable Route Planning~(CRP) approach was proposed in~\cite{dgpw-crp-11}, extending the multi-level-overlay MLD techniques of~\cite{sww-daola-00,hsw-emlog-08}. 
It works in three phases: In a first, expensive phase, auxiliary data is computed that solely exploits the topological structure of the network, disregarding its metric. 
In a second, much less expensive phase, this auxiliary data is \emph{customized} to a specific metric, enabling fast queries in the third phase. 
In this work we extend CH to support such a three-phase approach. 

\paragraph{Game Scenario}

Most existing CH papers focus solely on road graphs, with~\cite{s-chgg-13}
being a notable exception, but there are many other applications with
differently structured graphs in which fast shortest path computations
are important. One of such applications is games. Think of a real-time
strategy game where units quickly have to navigate across a large
map with many choke points. The basic topology of the map is fixed,
however, when buildings are constructed or destroyed, fields
are rendered impassable or freed up. Furthermore, every player has his own knowledge of the map.
Many games include a feature called \emph{fog of war}: 
Initially only the fields around the player's starting location are revealed.
As his units explore the map, new fields are revealed.
Since a~unit must not route around a building that the player has not yet seen, every player needs his own metric.
Furthermore,
units such as hovercrafts may traverse water and land, while other
units are bound to land. This
results in vastly different, evolving metrics for different unit types per player, making metric-dependent
preprocessing difficult to apply.  Contrary
to road graphs one-way streets tend to be extremely rare, and thus
being able to exploit the symmetry of the underlying graph is a useful
feature.

\paragraph{Metric-Independent Orders for CHs}
One of the central building blocks of this paper is the use of metric-independent \emph{nested dissection orders}~(ND-orders)~\cite{g-ndrfe-73} for CH precomputation instead of the metric-dependent order of~\cite{gssv-erlrn-12}. 
This approach was proposed by~\cite{bcrw-sssch-13}, and a preliminary case study can be found in~\cite{z-wchw-13}. 
A similar idea was followed by~\cite{dw-fcrn-13}, where the authors employ partial CHs to engineer subroutines of their customization phase. 
They also refer to preliminary experiments on full CH but did not publish results. 
Similar ideas have also appeared in~\cite{pwk-caspl-12}: They consider %
graphs of low \emph{treewidth}~(see below) and leverage this property to compute CH-like structures, without explicitly using the term CH. 
Related techniques by~\cite{w-tedi-10,cz-spdst-00} work directly on the tree decomposition.
Interestingly,
our experiments show that even large road networks have relatively low treewidth: Real-world road networks with vertex counts in the~$10^{7}$ have treewidth in the~$10^{2}$.

\paragraph{Tree-Decompositions, Sparse Matrices and Minimum Fill-In}

Customizable speedup techniques for shortest path queries are a very recent development but the idea to order vertices along nested dissection orders is significantly older.
To the best of our knowledge the idea first appeared in 1973 in~\cite{g-ndrfe-73} and was refined in~\cite{lrt-gnd-79}.
They use nested dissection orders to reorder the columns and rows of \emph{sparse matrices} to assure that Gaussian elimination preserves as many zeros as possible.
From the matrix they derive a graph and show that vertex contraction in this graph corresponds to Gaussian variable elimination.
Inserting an extra edge in the graph destroys a zero in the matrix.
The additional edges are called the \emph{fill-in}.
The \emph{minimum fill-in problem} asks for a vertex order that results in a minimum number of extra arcs.
In CH terminology these extra edges are called shortcuts.
The super graph constructed by adding the additional edges is a \emph{chordal graph}.
The \emph{treewidth} of a graph~$G$ can be defined using chordal super graphs:
For every super graph consider the number of vertices in the maximum clique minus one.
The treewidth of a graph~$G$ is the minimum of this number over all chordal super graphs of~$G$.
This establishes a relation between sparse matrices and treewidth and in consequence with CHs.
We refer to~\cite{b-tsa-07} and~\cite{b-atgt-93} for an introduction to the broad field of treewidth and tree decompositions.

\begin{wrapfigure}{o}{4.3cm}%
\vspace{-1.1em}
\begin{centering}
\vspace{-0.3cm}\subfigure[No shortcuts, maximum search space is four arcs]{\includegraphics[scale=1.5,page=1]{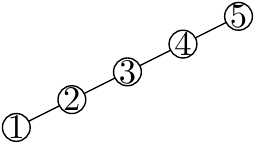}

}
\par\end{centering}

\begin{centering}
\vspace{-0.3cm}\subfigure[Two shortcuts, maximum search space is two arcs]{\includegraphics[scale=1.5,page=2]{path}

}
\par\end{centering}

\caption{\label{fig:path}Contraction Hierarchies for a path graph.}
\end{wrapfigure}%
Minimizing the number of extra edges, \ie, minimizing the fill-in, is NP-hard~\cite{y-cnp-81} but fixed parameter tractable in the number of extra edges~\cite{kst-tpcpc-99}.
Note, however, that from the CH point of view, optimizing the number of extra edges, \ie, the number of shortcuts, is not the sole optimization criterion.
Consider for example a path graph as depicted in Figure~\ref{fig:path}:
Optimizing the CH search space and the number of shortcuts are competing criteria. 
A tree relevant in the theory of treewidth is the \emph{elimination tree}.
\cite{bcrw-sssch-13} have shown that the maximum search space size in terms of vertices corresponds to the height of this elimination tree. 
Unfortunately, minimizing the elimination tree height is also NP-hard~\cite{p-t-88}.
For planar graphs, it has been shown that the number of additional edges is in~$O(n\log n)$~\cite{gt-t-86}.
However, this does not imply a~$O(\log n)$ search space bound in terms of vertices as search spaces can share vertices.

\paragraph{Directed and Undirected Graphs}
Real-world road networks contain one-way streets and highways. 
Such networks are thus usually modeled as directed graphs.
Our algorithms fully support direction of traffic---however, we introduce it at a different stage of the toolchain than most related techniques, which should not be confused with only supporting undirected networks.
Our first preprocessing phase works exclusively on the underlying undirected and unweighted graph, obtained by dropping all edge directions and edge weights.
Direction of traffic as well as traversal weights are only introduced in the second~(customization) phase, where every edge can have two weights: an upward and a downward weight. 
If an edge corresponds to an one-way street, then one of these weights is set to~$\infty$.
Note that this setup is a strength of our algorithm:  
Throughout large parts of the toolchain we are not confronted with additional complexity induced by directed edges.
This contrasts with many other techniques, where considering edge direction induces changes in nearly every step of the algorithm.

\paragraph{Our Contribution}

The main contribution of our work is to show that Customizable Contraction
Hierarchies (CCH) solely based on the ND-principle are feasible and
practical. Compared to CRP~\cite{dgpw-crp-11} we achieve a similar
preprocessing--query tradeoff, albeit with slightly better query performance
at slightly slower customization speed and we need somewhat more space.
Interestingly, for less well-behaved metrics such as travel distance,
we achieve query times below the original metric-dependent CH of~\cite{gssv-erlrn-12}.
Besides this main result, %
we show that given a fixed contraction order, a metric-independent CH can be constructed in time essentially linear in the size of the Contraction Hierarchy with working memory consumption linear in the input graph. 
Our specialized algorithm has better theoretic worst-case running time and performs significantly better empirically than the dynamic adjacency arrays used in~\cite{gssv-erlrn-12}.
Another contribution of our work are perfect witness searches. 
We show that for a fixed metric-independent vertex order it is possible to construct CHs with a provably minimum number of arcs in a few seconds on continental road graphs. 
Our construction algorithm has a running time performance completely independent of the weights used.
We further show that an order based on nested dissection gives a constant factor approximation of the maximum and average search space sizes in terms of the number of arcs and vertices for metric-independent CHs on a class of graphs with very regular recursive vertex separators.
Experimentally, we show that road graphs have such a recursive separator structure. 

\paragraph{Outline}
Section~\ref{sec:basics} sets necessary notation. 
Section~\ref{sec:trad-ch-order} discusses metric-dependent orders as used by~\cite{gssv-erlrn-12}, highlighting specifics of our implementation. 
Next, we discuss metric-independent orders in Section~\ref{sec:ch-order}.
In Section~\ref{sec:ch-construction}, we describe how to efficiently construct the arcs of the CH.
The next Section~\ref{sec:enum-triangles} discusses how to efficiently enumerate triangles in the CH --- an operation needed throughout the customization process detailed in Section~\ref{sec:customization}.
In Section~\ref{sec:customization} we further describe the details of the perfect witness search.
Finally, Section~\ref{sec:query} concludes the algorithm description by introducing the algorithms used in the query phase to compute shortest path distances and compute the corresponding paths in the input graph.
We then present an extensive experimental study that thoroughly evaluates the proposed algorithm.
We finish with the conclusion, and with directions for future work.

\section{Basics}
\label{sec:basics}

We denote by~$G=(V,E)$ an \emph{undirected} $n$-vertex \emph{graph}
where $V$ is the set of \emph{vertices} and $E$ the set of \emph{edges}.
Furthermore, $G=(V,A)$ denotes a \emph{directed graph}, where $A$
is the set of \emph{arcs}. A graph is \emph{simple} if it has no loops
or multi-edges. Graphs in this paper are simple unless noted
otherwise, \eg, in parts of Section~\ref{sec:ch-construction}. 
Furthermore, we assume that input graphs are strongly connected. 
We denote by~$N(v)$ the neighborhood of vertex~$v \in G$, \ie, the set of vertices adjacent to~$v$; for directed graphs the neighborhood ignores arc direction.
A \emph{vertex separator} is a vertex subset~$S\subseteq V$ whose
removal separates $G$ into two disconnected subgraphs induced by
the vertex sets~$A$ and~$B$. The sets $S$, $A$ and $B$ are disjoint and their union forms~$V$. 
Note that the subgraphs induced by~$A$ and $B$ are not necessarily connected and may be empty.
A separator~$S$ is \emph{balanced}
if $\left|A\right|,\left|B\right|\le2n/3$. 

A \emph{vertex order}~$\pi:\{1\ldots n\}\rightarrow V$ is a bijection.
Its inverse~$\pi^{-1}$
assigns each vertex a \emph{rank}. Every
undirected graph can be transformed into a \emph{upward directed graph} with respect to a vertex order~$\pi$, \ie, every edge~$\{\pi(i),\pi(j)\}$ with~$i<j$ is replaced
by an arc~$(\pi(i),\pi(j))$. Note that all upward directed graphs
are acyclic. We denote by~$N_{u}(v)$ the upward neighborhood of $v$, \ie, the neighbors of~$v$ with a
higher rank than~$v$, and by~$N_{d}(v)$ the downward neighborhood of $v$, \ie, the vertices with a lower rank than
$v$. We denote by~$d_{u}(v)=\left|N_{u}(v)\right|$ the \emph{upward degree} and by~$d_{d}(v)=\left|N_{d}(v)\right|$ the \emph{downward degree} of a vertex.

\emph{Undirected edge weights} are denoted using~$w:E\rightarrow\mathbb{R}_{>0}$.
With respect to a vertex order~$\pi$ we define an \emph{upward weight}
$w_{u}:E\rightarrow\mathbb{R}_{>0}$ and a \emph{downward weight}~$w_{d}:E\rightarrow\mathbb{R}_{>0}$.
For directed graphs, one-way streets are modeled by setting~$w_{u}$ or~$w_{d}$ to~$\infty$.

A path~$P$ is a sequence of adjacent vertices and incident edges. Its \emph{hop-length} is
the number of edges in~$P$. Its \emph{weight-length} with respect to~$w$ is the
sum over all edges' weights. Unless noted otherwise, length always refers to weight-length in this paper. 
A shortest $st$-path is a path of minimum
length between vertices $s$ and $t$. The minimum length in~$G$
between two vertices is denoted by~$\dist_{G}(s,t)$. We set $\dist_{G}(s,t)=\infty$
if no path exists. An \emph{up-down path}~$P$ with respect to~$\pi$
is a path that can be split into an upward path~$P_{u}$ and a downward
path~$P_{d}$. The vertices in the upward path~$P_{u}$ must occur
by increasing rank~$\pi^{-1}$ and the vertices in the downward path~$P_{d}$
must occur by decreasing rank~$\pi^{-1}$. 
The upward and downward paths meet at the vertex with the maximum rank on the path.
We refer to this vertex as the \emph{meeting vertex}. 

The vertices of every acyclic directed graph~(DAG) can be partitioned into
\emph{levels}~$\ell:V\rightarrow\mathbb{N}$ such that for every arc
$(x,y)$ it holds that $\ell(x)<\ell(y)$. We only consider levels
such that each vertex has the lowest possible level. Note that such
levels can be computed in linear time given a directed acyclic graph.

The unweighted \emph{vertex contraction} of~$v$ in~$G$ consists
of removing~$v$ and all incident edges and inserting edges between
all neighbors~$N(v)$ if not already present. The inserted edges are
referred to as \emph{shortcuts} and the other edges are \emph{original
edges}. Given an order~$\pi$ the \emph{core graph}~$G_{\pi,i}$ is
obtained by contracting all vertices~$\pi(1)\ldots\pi(i-1)$ in order of their rank. We call the
original graph~$G$ augmented by the set of shortcuts a \emph{contraction hierarchy} $G_{\pi}^{*}=\bigcup_i G_{\pi,i}$.
Furthermore, we denote by~$G_{\pi}^{\wedge}$ the corresponding upward directed graph.

\begin{wrapfigure}{o}{3.5cm}%
\begin{center}
\includegraphics[scale=1.5]{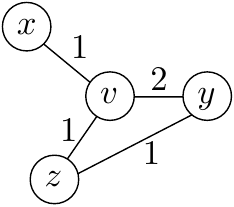}
\end{center}

\caption{Contraction of~$v$.
If the pair~$x,y$ is considered first, a shortcut~$\{x,y\}$ with weight 3 is inserted. 
If the pair~$x,z$ is considered first, an edge~$\{x,z\}$ with weight 2 is inserted. 
This shortcut is part of a witness $x\rightarrow z\rightarrow y$ for the pair~$x,y$. The shortcut~$\{x,y\}$ is thus \emph{not} added if the pair~$x,z$ is considered first.}
\label{fig:shortcut-increasing}
\end{wrapfigure}

Given a fixed weight~$w$, one can exploit that in many applications it is sufficient to only
preserve all shortest path distances~\cite{gssv-erlrn-12}. \emph{Weighted vertex contraction}
of a vertex~$v$ in the graph~$G$ is defined as the operation of removing~$v$ and inserting (a minimum number of shortcuts) among the neighbors of~$v$ to obtain a graph~$G'$ such that $\dist_G(x,y) = \dist_{G'}(x,y)$ 
for all vertices~$x\neq v$ and~$y \neq v$. To compute~$G'$, one iterates over all pairs of neighbors~$x,y$ of~$v$ increasing by~$\dist_G(x,y)$.
For each pair one checks whether a~$xy$-path of length~$\dist_{G}(x,y)$ exists in~$G \backslash \{v\}$, \ie, one checks whether removing~$v$ destroys the~$xy$-shortest path. 
This check is called \emph{witness search}~\cite{gssv-erlrn-12} and the~$xy$-path is called \emph{witness}, if it exists.
If a witness is found, the considered vertex pair is skipped and no shortcut added. 
Otherwise, if an edge~$\{x,y\}$ already exists, its weight is decreased to~$\dist_G(x,y)$, or a new shortcut edge with that weight is added to~$G$.
This new shortcut edge is considered in witness searches for subsequent neighbor pairs as part of~$G$.
If shortest paths are not unique, it is important to iterate over the pairs increasing by~$\dist_G(x,y)$, because otherwise more edges than strictly necessary can be inserted:  
Shorter shortcuts can make longer shortcuts superfluous. 
However, if we insert the shorter shortcut after the longer ones, the witness search will not consider them. 
See Figure~\ref{fig:shortcut-increasing} for an example. 
Note that the witness searches are expensive and therefore the witness search is usually aborted after a certain number of steps~\cite{gssv-erlrn-12}. 
If no witness was found until, we assume that none exists and add a shortcut.
This does not affect the correctness of the technique but might result in slightly more shortcuts than necessary.
To distinguish, \emph{perfect witness search} is without such an one-sided error.

For an order~$\pi$ and a weight~$w$ the \emph{weighted core graph}~$G_{w,\pi,i}$ is obtained
by contracting all vertices~$\pi(1)\ldots\pi(i-1)$.
The original graph~$G$ augmented by the set of weighted shortcuts is called a \emph{weighted contraction hierarchy}~$G_{w,\pi}^{*}$. 
The corresponding upward directed graph is denoted by~$G_{w,\pi}^{\wedge}$.

The search space~$\searchspace(v)$ of a vertex~$v$ is the subgraph
of~$G_{\pi}^{\wedge}$ (respectively~$G_{w,\pi}^{\wedge}$) reachable
from~$v$. For every vertex pair~$s$ and~$t$, it has been shown
that a shortest up-down path must exist. This up-down path can be
found by running a bidirectional search from~$s$ restricted
to~$\searchspace(s)$ and from~$t$ restricted to~$\searchspace(t)$
\cite{gssv-erlrn-12}.
A graph is \emph{chordal} if for every cycle of at least four vertices there exists a pair of vertices that are non-adjacent in the cycle but are connected by an edge.
An alternative characterization is that a vertex order~$\pi$ exists such that for every~$i$ the neighbors of~$\pi(i)$ in~$G_{\pi,i}$, \ie, the core graph before the contraction of $\pi(i)$, form a clique \cite{fg-imig-65}. 
Such an order is called a \emph{perfect elimination order}. 
Another way to formulate this characterization in CH terminology is as follows: A graph is chordal if and only if a contraction order exists such that the CH construction without witness search does not insert any shortcuts.
A chordal super graph can be obtained by adding the CH shortcuts.

The elimination tree~$T_{G,\pi}$ is a tree directed towards its root
$\pi(n)$. The parent of vertex~$\pi(i)$ is its upward neighbor~$v\in N_{u}(\pi(i))$
of minimal rank~$\pi^{-1}(v)$. Note that this definition already
yields a straightforward algorithm for constructing the elimination tree. As shown in \cite{bcrw-sssch-13},
the set of vertices on the path from~$v$ to~$\pi(n)$ is the set
of vertices in~$\searchspace(v)$. Computing a contraction hierarchy
without witness search of graph~$G$ consists of computing a chordal
super graph~$G_{\pi}^{*}$ with perfect elimination order~$\pi$. The
height of the elimination tree corresponds to the maximum number of vertices
in the search space. Note that the elimination tree is only defined
for undirected unweighted graphs.

\section{Metric-Dependent Orders}
\label{sec:trad-ch-order}

Most publications on applications and extensions of Contraction Hierarchies use greedy orders in the spirit of~\cite{gssv-erlrn-12}, but details of vertex order computation and witness search vary. For reproducibility, we describe our precise approach in this section, extending on the general description of metric-dependent CH preprocessing given in Section~\ref{sec:basics}.
Our witness search aborts once it finds some path shorter than the shortcut---or
when both forward and backward search each have settled at most $p$
vertices. For most experiments we choose $p=50$. The only exception is the distance metric on road
graphs, where we set $p=1500$. 
We found that a higher value of $p$ increases the time per witness-search but leads to sparser cores.
For the distance metric we needed a high value because otherwise our cores get too dense.
This effect did not occur for the other weights considered in the experiments.
Our weighting heuristic is similar to the one of \cite{adgw-hhlsp-12}. 
We denote by $L(x)$ a value that approximates the level of vertex $x$.
Initially all $L(x)$ are $0$. 
If $x$ is contracted, then for every incident edge $\{x,y\}$ we perform $\ell(y)\leftarrow\max\{\ell(y),\ell(x)+1\}$.
We further store for every arc $a$ a hop length $h(a)$. This is
the number of arcs that the shortcut represents if fully unpacked.
Denote by $D(x)$ the set of arcs removed if $x$ is contracted and
by $A(x)$ the set of arcs that are inserted. Note that $A(x)$ is
not necessarily a full clique because of the witness search and because
some edges may already exist. We greedily contract a vertex~$x$
that minimizes its \emph{importance} $I(x)$ defined by 
\[
I(x)=L(x)+\frac{\left|A(x)\right|}{\left|D(x)\right|}+\frac{\sum_{a\in A(x)}h(a)}{\sum_{a\in D(x)}h(a)}.
\]
We maintain a priority queue that contains all vertices weighted by
$I$. Initially all vertices are inserted with their exact importance.
As long as the queue is not empty, we remove a vertex~$x$ with minimum
importance $I(x)$ and contract it. This modifies the importance of
other vertices. However, our weighting function is chosen such that
only the importance of adjacent vertices is influenced, if the witness
search was perfect. We therefore only update the importance values
of all vertices~$y$ in the queue that are adjacent to~$x$. In practice, 
with a limited witness search, we sometimes choose a vertex~$x$ with
a sightly suboptimal $I(x)$. However, preliminary experiments have
shown that this effect can be safely ignored. Hence, for the experiments
presented in Section~\ref{sec:experiments}, we do not use lazy updates
or periodic queue rebuilding as proposed in~\cite{gssv-erlrn-12}.

\section{Metric-Independent Order}\label{sec:ch-order}

The metric-dependent orders presented in the previous section lead to very good results on road graphs with travel time metric. 
However, the results for the distance metric are not as good and the orders are completely impracticable to compute Contraction Hierarchies without witness search as our experiments in Section~\ref{sec:experiments} show.
To support metric-independence, we therefore use \emph{nested dissection} orders as suggested in \cite{bcrw-sssch-13} or ND-orders for short. 
An order~$\pi$ for~$G$ is computed recursively by determining a balanced separator~$S$ of minimum cardinality that splits~$G$ into two parts induced by the vertex sets~$A$ and~$B$. 
The vertices of~$S$ are assigned to $\pi(n-\left|S\right|)\ldots\pi(n)$ in an arbitrary order. 
Orders~$\pi_{A}$ and~$\pi_{B}$ are computed recursively and assigned to
$\pi(1)\ldots\pi(\left|A\right|)$ and $\pi(\left|A\right|+1)\ldots\pi(\left|A\right|+\left|B\right|)$,
respectively. 
The base case of the recursion is reached when the subgraphs are empty.
Computing ND-orders requires good graph bisectors,
which in theory is $NP$-hard. However, recent years have seen heuristics
that solve the problem very well even for continental road graphs~\cite{ss-tlagh-13,dgrw-ecbbg-12,dgrw-gpnc-11}.
This justifies assuming in our particular context that an efficient
bisection oracle exists. 
We experimentally examine the performance of
nested dissection orders computed by NDMetis~\cite{kk-mspig-99}
and KaHIP~\cite{ss-tlagh-13} in Section~\ref{sec:experiments}.
After having obtained the nested dissection order we reorder the in-memory
vertex IDs of the input graph accordingly, \ie, the contraction order
of the reordered graph is the identity function. This improves cache locality
and we have seen a resulting acceleration of a factor 2 to 3 in query
times. In the remainder of this section we prepare and provide a theoretical
approximation result. 

For $\alpha\in(0,1)$, let $K_{\alpha}$, be a class of graphs that
is closed under subgraph construction and admits balanced separators
$S$ of cardinality $O(n^{\alpha})$.
\begin{lemma}
\label{lem:nd-performance}For every $G\in K_{\alpha}$ a ND-order
results in $O(n^{\alpha})$ vertices in the maximum search space.
\end{lemma}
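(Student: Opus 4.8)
The plan is to reduce the statement to a bound on the height of the elimination tree $T_{G,\pi}$ and then to establish a clean separator recurrence for that height. Recall from the excerpt that, by \cite{bcrw-sssch-13}, $\searchspace(v)$ equals the set of ancestors of $v$ in $T_{G,\pi}$, so the maximum search space in terms of vertices is exactly the elimination tree height. I would therefore define $h(G)$ as this height for the ND-order $\pi$, and aim to prove $h(G)=O(n^\alpha)$ by induction on $n$ along the nested dissection recursion.

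The heart of the argument is a decomposition of the search space across the top-level separator. Let $S$ be the balanced separator chosen first, splitting $G$ into $A$ and $B$; by construction $S$ receives the top ranks and $A$, $B$ the lower ones. I would first record two structural facts. First, since $S$ separates $A$ from $B$, eliminating an $A$-vertex only connects vertices of $A\cup S$, so no fill edge ever joins $A$ to $B$; an inductive argument over the elimination order then shows that the $A$--$A$ edges of the chordal completion $G_\pi^{*}$ are exactly those of the chordal completion of the induced subgraph $G[A]$, i.e. $G_\pi^{*}[A]=(G[A])^{*}$. Second, because $S$ occupies the highest ranks, any strictly rank-increasing path that leaves a vertex of $A$ and enters $S$ can never return to $A\cup B$. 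Combining these, for $v\in A$ every increasing path in $G_\pi^{\wedge}$ from $v$ to another $A$-vertex stays inside $A$, and $v$ cannot reach $B$ at all; hence $\searchspace(v)\subseteq\searchspace_{G[A]}(v)\cup S$, giving $|\searchspace(v)|\le h(G[A])+|S|$. The symmetric statement holds for $v\in B$, and for $v\in S$ we trivially have $|\searchspace(v)|\le|S|$, since all ancestors of an $S$-vertex again lie in $S$.

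This yields the recurrence $h(G)\le\max\{h(G[A]),h(G[B])\}+|S|$. Since $K_\alpha$ is closed under subgraph construction, $G[A],G[B]\in K_\alpha$; balancedness gives $|A|,|B|\le 2n/3$; and the $K_\alpha$ separator guarantee gives $|S|=O(n^\alpha)$, say $|S|\le c\,n^\alpha$. Writing $h_{\max}(m)$ for the worst case over graphs in $K_\alpha$ with at most $m$ vertices, I obtain $h_{\max}(n)\le h_{\max}(2n/3)+c\,n^\alpha$. Unrolling only the single dominating branch and summing the geometric series $\sum_{k\ge 0}c\,n^\alpha(2/3)^{\alpha k}=\frac{c}{1-(2/3)^\alpha}\,n^\alpha$, which converges because $\alpha>0$ forces $(2/3)^\alpha<1$, closes the induction with $h(G)=O(n^\alpha)$, i.e. $O(n^\alpha)$ vertices in the maximum search space.

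The main obstacle I anticipate is making the first structural fact fully rigorous: one must argue carefully that performing the global elimination and then restricting to $A$ produces precisely the fill-in of $G[A]$ in isolation, rather than something larger created ``through'' the separator $S$ --- the point being that $A$--$A$ adjacencies are only ever introduced while eliminating $A$-vertices, and the presence of higher-ranked $S$-neighbors never adds an $A$--$A$ edge at such a step. Once this invariant is established by induction over the elimination order, the separator decomposition of $\searchspace$ and the recurrence above follow routinely.
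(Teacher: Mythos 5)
Your proof is correct and takes essentially the same route as the paper: the paper's entire ``proof'' is the remark that the lemma follows by ``a straightforward argument using a geometric series as described in \cite{bcrw-sssch-13}'', which is precisely the separator recurrence $h(n)\le h(2n/3)+c\,n^{\alpha}$ that you set up and unroll. Your additional work (reducing to elimination-tree height, localizing the fill-in so that $G_\pi^{*}[A]=(G[A])^{*}$, and the search-space decomposition across $S$) correctly supplies the details that the paper delegates to the citation.
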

The proof of this lemma is a straightforward argument using a geometric
series as described in \cite{bcrw-sssch-13}.
As a direct consequence, the average number of vertices is also in
$O(n^{\alpha})$ and the number of arcs in $O(n^{2\alpha})$. 
\begin{lemma}
\label{lem:sep-creates-clique}For every connected graph $G$ with
minimum balanced separator $S$ and every order $\pi$, the chordal
super graph $G_{\pi}^{*}$ contains a clique of $\left|S\right|$ vertices.
Furthermore, there are at least $n/3$ vertices such that this clique
is a subgraph of their search space in $G_{\pi}^{\wedge}$. 
\end{lemma}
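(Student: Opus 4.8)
The plan is to isolate a single vertex $v$ whose \emph{bag} $B_v:=\{v\}\cup N_u(v)$ does three jobs at once: (i) it is a clique in $G_\pi^*$, (ii) it is a balanced separator of $G$, and (iii) it lies deep enough in the elimination tree $T_{G,\pi}$ that at least $n/3$ vertices have it in their search space. Claim~1 then follows because a balanced separator can never be smaller than the minimum balanced separator $S$, so $|B_v|=d_u(v)+1\ge|S|$ and any $|S|$ of its vertices form the required clique; claim~2 follows from (iii), since the whole of $B_v$ lies in the search space of the $\ge n/3$ vertices produced there.

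First I would recall two facts already available in the excerpt. Since $G_\pi^*$ is chordal with perfect elimination order $\pi$, the higher-ranked neighbours $N_u(v)$ of any $v$ form a clique, and together with $v$ this yields the clique $B_v$, giving (i). Secondly, $\searchspace(u)$ is exactly the set of vertices on the path from $u$ to the root $\pi(n)$ in $T_{G,\pi}$. I would additionally use the standard observation that the bags $\{B_v\}_{v\in V}$, indexed by the nodes of $T_{G,\pi}$, form a tree decomposition of $G$, and that deleting a bag $B_v$ separates $G$ into the parts corresponding to the connected components of $T_{G,\pi}-v$: the subtree of each child of $v$ and the single remaining \emph{upper} component obtained from $V$ by discarding the subtree of $v$ together with $N_u(v)$.

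To pin down $v$ I would descend $T_{G,\pi}$ from the root: starting at $\pi(n)$, whose subtree has all $n$ vertices, I repeatedly step into a child whose subtree still contains at least $n/3$ vertices, stopping at the deepest such vertex $v$. By construction the subtree of $v$ has at least $n/3$ vertices, which gives (iii) immediately: every $u$ in that subtree reaches $v$ in $G_\pi^\wedge$, so $\searchspace(u)\supseteq\searchspace(v)\supseteq B_v$, and there are at least $n/3$ such $u$. At the same time each child subtree of $v$ now has fewer than $n/3$ vertices, whereas the upper component has at most $n-n/3=2n/3$ vertices. Hence every component of $G$ after removing $B_v$ has at most $2n/3$ vertices, with at most one of them exceeding $n/3$; a routine grouping then distributes the components into two sides $A$ and $B$, each of size at most $2n/3$, so that $B_v$ is a balanced separator, giving (ii).

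Putting the pieces together, minimality of $S$ yields $|S|\le|B_v|$, so $B_v$ contains a clique on $|S|$ vertices that lies in the search space of all the $\ge n/3$ vertices in the subtree of $v$. The step I expect to be the main obstacle is making (ii) fully rigorous: one must invoke the separation property of the elimination-tree decomposition to be sure the upper part and the child subtrees really are disconnected in $G$ once $B_v$ is removed, and then verify the (elementary but fiddly) packing claim that components of size at most $2n/3$, at most one of which exceeds $n/3$, can always be split into two sides of size at most $2n/3$. The choice of $v$ as the \emph{deepest} node whose subtree still has at least $n/3$ vertices is exactly what controls both quantities simultaneously: it forces every child subtree below $n/3$, bounding the separator balance, while keeping the subtree of $v$ above $n/3$, bounding the search-space count from below.
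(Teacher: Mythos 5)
Your proof is correct, and it takes a genuinely different route from the paper's. The paper argues directly on prefix subgraphs of $G_{\pi}^{*}$: it picks the smallest $i$ such that the subgraph induced by $\pi(1),\ldots,\pi(i)$ has a connected component $A$ with $|A|\ge n/3$, takes as its clique the external neighborhood of $A$ (splitting into the cases $|A|\le 2n/3$ and $|A|>2n/3$, the latter requiring a hand-picked union $A'$ of components of $A\setminus\{\pi(i)\}$), and proves cliqueness, balance, and the search-space claim from scratch via path arguments (two external neighbors of a connected set of lower-ranked vertices lie on a path on which they carry the two highest ranks, hence are adjacent in $G_{\pi}^{*}$). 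You instead choose the deepest elimination-tree node $v$ with $|\mathrm{subtree}(v)|\ge n/3$ and use the bag $B_v=\{v\}\cup N_u(v)$: cliqueness comes free from the perfect-elimination property, the $n/3$ search spaces come free from the path-to-root characterization of search spaces, and the paper's two cases collapse into a single grouping argument. The price is the machinery you invoke as ``standard'': that the bags along $T_{G,\pi}$ form a tree decomposition, and that removing $B_v$ leaves no edge between distinct child subtrees or between them and the upper part. Neither fact is stated in the paper, and they are precisely what the paper's inline path arguments establish; to make your proof self-contained you would show that every edge of $G_{\pi}^{*}$ joins an ancestor--descendant pair of $T_{G,\pi}$ and, by induction on depth, that $N_u(x)\subseteq\mathrm{subtree}(v)\cup B_v$ for every $x$ in the subtree of $v$. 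Both are true, and your packing step also goes through: at most one component exceeds $n/3$, since child subtrees have fewer than $n/3$ vertices and the upper part has at most $2n/3$, so either the single large component stands alone against the rest, or a greedy assignment balances the small ones. The two constructions are in fact closely related: the connected components of the paper's prefix graph are exactly the maximal elimination-tree subtrees all of whose ranks are at most $i$, so the paper's $A$ equals $\mathrm{subtree}(\pi(i))$ and its clique is a subset of your $B_{\pi(i)}$; your proof is a streamlined elimination-tree formulation of the same underlying idea, while the paper's is the self-contained version that avoids importing tree-decomposition theory.
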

This lemma is a minor adaptation and extension of~\cite{lrt-gnd-79}, who only prove that such a clique exists but not that it lies within enough search spaces.
We provide the full proof for self-containedness.
\begin{proof}
Consider the subgraph~$G_{i}$ of $G_{\pi}^{*}$ induced by the vertices
$\pi(1)\ldots\pi(i)$. Do not confuse with the core graph~$G_{\pi,i}$.
Choose the smallest $i$, such that a connected component $A$ exists
in $G_{i}$ such that $\left|A\right|\ge n/3$. As $G$ is connected,
such an $A$ must exist. We distinguish two cases:
\begin{enumerate}
\item $\left|A\right|\le2n/3$: 
Consider the set of vertices $S'$ adjacent to $A$ in $G_{\pi}^{*}$ but not in $A$. 
Let $B$ be the set of all remaining vertices.
$S'$ is by definition a separator. It is balanced because $\left|A\right|\le2n/3$
and $\left|B\right|=n-\underset{\ge n/3}{\underbrace{\left|A\right|}}-\underset{\ge0}{\underbrace{\left|S'\right|}}\le2n/3$.
As $S$ is minimum, we have that $\left|S'\right|\ge\left|S\right|$.
For every pair of vertices $u\in S'$ and $v\in S'$ there exists a path through
$A$ as $A$ is connected. 
The vertices $u$ and $v$ are not in $G_i$ as otherwise they could be added to $A$.
The ranks of $u$ and $v$ are thus strictly larger than $i$.
On the other hand, the ranks of the vertices in $A$ are at most $i$ as they are part of $G_i$.
The vertices $u$ and $v$ thus have the highest ranks on the path.
They are therefore contracted last and therefore an edge $\{u,v\}$ in $G^{*}$ must exist.
$S'$ is therefore a clique.
Furthermore, from every $u\in A$ to every $v\in S'$ there exists
a path such that $v$ has the highest rank. Hence, $v$ is in the
search space of $u$, \ie, there are at least$\left|A\right|\ge n/3$
vertices whose search space contains the full $S'$-clique.
\item $\left|A\right|>2n/3$: As $i$ is minimum, we know that $\pi(i)\in A$
and that removing it disconnects $A$ into connected subgraphs $C_{1}\ldots C_{k}$.
We know that $\left|C_{j}\right|<n/3$ for all $j$ because $i$ is
minimum. We further know that $\left|A\right|=1+\sum\left|C_{j}\right|>2n/3$.
We can therefore select a subset of components $C_{k}$ such that
the number of their vertices is at most $2n/3$ but at least $n/3$.
Denote by $A'$ their union. Note that $A'$ does not contain $\pi(i)$.
Consider the vertices $S'$ adjacent to $A'$ in $G_{\pi}^{*}$. The
set $S'$ contains $\pi(i)$. Using an argument similar to Case~1,
one can show that $\left|S'\right|\ge\left|S\right|$. But since $A'$
is not connected, we cannot directly use the same argument to show
that $S'$ forms a clique in $G^{*}$. Observe that $A'\cup\{\pi(i)\}$
is connected and thus the argument can be applied to $S'\backslash\{\pi(i)\}$
showing that it forms a clique. This clique can be enlarged by adding
$\pi(i)$ as for every $v\in S'\backslash\{\pi(i)\}$ a path through
one of the components $C_{k}$ exists where $v$ and $\pi(i)$ have
the highest ranks and thus an edge $\{v,\pi(i)\}$ must exist. The
vertex set $S'$ therefore forms a clique of at least the required
size. It remains to show that enough vertices exist whose search space
contains the $S'$ clique. As $\pi(i)$ has the lowest rank in the
$S'$ clique, the whole clique is contained within the search space
of $\pi(i)$. It is thus sufficient to show that $\pi(i)$ is contained
in enough search spaces. As $\pi(i)$ is adjacent to each component
$C_{k}$, a path from each vertex $v\in A'$ to $\pi(i)$ exists such
that $\pi(i)$ has maximum rank showing that $S'$ is contained in
the search space of $v$. This completes the proof as $\left|A'\right|\ge n/3$. 
\end{enumerate}
\end{proof}
\begin{theorem}
\label{thm:Approx}Let $G$ be a graph from $K_{\alpha}$ with a minimum balanced
separator with $\Theta(n^{\alpha})$ vertices. Then a ND-order gives
an $O(1)$-approximation of the average and maximum search spaces
of an optimal metric-independent contraction hierarchy in terms of
vertices and arcs.\end{theorem}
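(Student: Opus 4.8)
The plan is to sandwich the ND-order between a matching upper and lower bound, where the upper bound comes from Lemma~\ref{lem:nd-performance} applied to the concrete ND-order and the lower bound comes from Lemma~\ref{lem:sep-creates-clique} applied to an \emph{arbitrary} order, and hence in particular to the optimal one. Since $G\in K_\alpha$, Lemma~\ref{lem:nd-performance} directly gives that the ND-order achieves $O(n^\alpha)$ vertices in the maximum search space, and the remarks following that lemma extend this to $O(n^\alpha)$ vertices on average and to $O(n^{2\alpha})$ arcs in both the maximum and the average search space. These are the four quantities I need to upper-bound for the numerator of the approximation ratio.

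For the denominator I would argue that no metric-independent order can do asymptotically better. By hypothesis the minimum balanced separator $S$ satisfies $|S|=\Theta(n^\alpha)$, so Lemma~\ref{lem:sep-creates-clique} guarantees that for \emph{every} order $\pi$ the chordal super graph $G_\pi^*$ contains a clique on $\Theta(n^\alpha)$ vertices, and that at least $n/3$ vertices carry this clique inside their search space. A clique on $|S|$ vertices contributes $|S|$ vertices and $\binom{|S|}{2}=\Theta(n^{2\alpha})$ arcs to each such search space. Hence for any order the maximum search space has $\Omega(n^\alpha)$ vertices and $\Omega(n^{2\alpha})$ arcs, and averaging over all $n$ vertices gives $\tfrac{1}{n}\cdot\tfrac{n}{3}\cdot|S|=\Omega(n^\alpha)$ vertices and $\tfrac{1}{n}\cdot\tfrac{n}{3}\cdot\binom{|S|}{2}=\Omega(n^{2\alpha})$ arcs on average. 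Because these bounds hold for every order, they hold for whichever order minimizes each of the four measures, so each optimal quantity is $\Omega$ of the corresponding ND quantity. Dividing the four upper bounds by the four lower bounds then yields $O(1)$ in every case, which is the claimed guarantee.

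The step that requires the most care, and where the $\Theta(n^\alpha)$ hypothesis does the real work, is the interplay between the two lemmas: Lemma~\ref{lem:nd-performance} needs the separators to be \emph{small} ($O(n^\alpha)$) to cap the ND search space, whereas Lemma~\ref{lem:sep-creates-clique} needs the minimum balanced separator to be \emph{large} ($\Omega(n^\alpha)$) to force a correspondingly large clique into every order's search space; only when both hold simultaneously, that is when $|S|=\Theta(n^\alpha)$, do numerator and denominator match up to constants. A secondary point I would verify explicitly is that the clique edges genuinely count toward the arc measure of the search space: since the search space is a subgraph of $G_\pi^\wedge$ and Lemma~\ref{lem:sep-creates-clique} places the entire clique inside it, the $\binom{|S|}{2}$ clique edges reappear as arcs of that search space, so the arc lower bounds are legitimate and the four ratios are all simultaneously $O(1)$.
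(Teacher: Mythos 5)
Your proposal is correct and follows essentially the same route as the paper's own proof: Lemma~\ref{lem:nd-performance} supplies the $O(n^{\alpha})$-vertex / $O(n^{2\alpha})$-arc upper bounds for the ND-order, and Lemma~\ref{lem:sep-creates-clique} supplies the matching $\Omega(n^{\alpha})$ / $\Omega(n^{2\alpha})$ lower bounds for every order (hence the optimal one), with the $n/3$-vertices clause handling the average-case measures exactly as in the paper. Your two explicit refinements---noting that the lower bound applies separately to whichever order optimizes each of the four measures, and checking that the clique's edges genuinely appear as arcs of the search space in $G_{\pi}^{\wedge}$---are sound and, if anything, slightly more careful than the paper's phrasing.
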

\begin{proof}
The key observation of this proof is that the top level separator solely dominates the performance. 
Denote by $\pi_{nd}$ the ND-order and by $\pi_{opt}$ an optimal order. 
First, we show a lower bound on the performance of $\pi_{opt}$.
We then demonstrate that $\pi_{nd}$ achieves this lower bound showing that $\pi_{nd}$ is an $O(1)$-approximation.

As the minimum balanced separator has cardinality $\Theta(n^{\alpha})$, we know by Lemma~\ref{lem:sep-creates-clique} that a clique with $\Theta(n^{\alpha})$ vertices exists in $G_{\pi_{opt}}^{*}$.
As this clique is in the search space of at least one vertex with respect to $\pi_{opt}$, we know that the maximum number of vertices in the search space is at least $\Omega(n^{\alpha})$. 
Further, as this clique contains $\Theta(n^{2\alpha})$ arcs we also have a lower bound of $\Omega(n^{2\alpha})$ on the maximum number of arcs in a search space.
From these bounds for the worst case search space, we cannot directly derive bounds for the average search space.
Fortunately, Lemma~\ref{lem:sep-creates-clique} does not only tell us that this clique exists but that it must also be inside the search space of at least $n/3$ vertices.
For the remaining $2n/3$ vertices we use a very pessimistic lower bound: We assume that their search space is empty.
The resulting lower bound for the average number of vertices is $2/3\cdot\Omega(0)+1/3\cdot\Omega(n^{\alpha})=\Omega(n^{\alpha})$ and the lower bound for the average number of arcs is $2/3\cdot\Omega(0)+1/3\cdot\Omega(n^{2\alpha})=\Omega(n^{2\alpha})$.

We required that $G\in K_{\alpha}$, \ie, that recursive $O(n^\alpha)$ balanced separators exists.
This allows us to apply Lemma~\ref{lem:nd-performance}.
We therefore know that the number of vertices in the maximum search space of $G_{\pi_{nd}}^{\wedge}$ is in $O(n^{\alpha})$.
In the worst-case this search space contains $O(n^{2\alpha})$ arcs.
As the average case can never be better than the worst case, these upper bounds directly translate to upper bounds for the average search space.

As the given upper and lower bounds match, we can conclude that $\pi_{nd}$ is a $O(1)$-approximation in terms of average and maximum search space in terms of vertices and arcs.
\end{proof}

\section{Constructing the Contraction Hierarchy}\label{sec:ch-construction}

In this section, we describe how to efficiently compute the hierarchy~$G_{\pi}^{\wedge}$ for a given graph~$G$ and order~$\pi$. 
Weighted contraction hierarchies are commonly constructed using a dynamic adjacency array representation of the core graph. 
Our experiments show that this approach also works for the unweighted case, however, requiring more computational and memory resources because of the higher growth in shortcuts. 
It has been proposed~\cite{z-wchw-13} to use hash-tables on top of the dynamic graph structure to improve speed but at the cost of significantly increased memory consumption. 
In this section, we show that the contraction hierarchy construction can be done significantly faster on unweighted and undirected graphs. 
Note that in our toolchain, graph weights and arc directions are accounted for during the customization phase.

\begin{wrapfigure}{o}{6.5cm}%
\begin{center}
\includegraphics[scale=1.5]{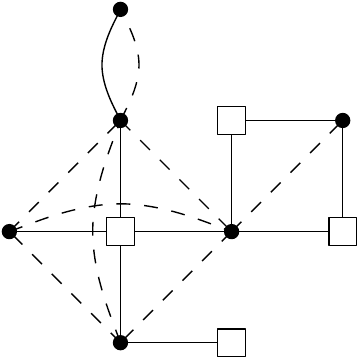}
\end{center}
\caption{Dots represent vertices in $G'$ and $H$. Squares are additional super vertices in $H$. Solid edges are in $H$ and dashed ones in $G'$. Notice how the neighbors of each super vertex in $H$ form a clique in $G'$. Furthermore, there are no two adjacent super vertices in $H$, \ie, they form an independent set.}
\label{fig:quotient-graph}
\end{wrapfigure}

Denote by $n$ the number of vertices in $G$ (and $G_{\pi}^{\wedge}$), by $m$ the number of edges in $G$, by $\hat{m}$ the number of arcs in $G_{\pi}^{\wedge}$, and by $\alpha(n)$ the inverse $A(n,n)$ Ackermann function.
For simplicity we assume that $G$ is connected. 
Our approach enumerates all arcs of~$G_{\pi}^{\wedge}$ in $O(\hat{m}\,\alpha(n))$ running time and has a memory consumption in $O(m)$. To store the arcs of $G_{\pi}^{\wedge}$, additional space in $O(\hat{m})$ is needed. 
The approach is heavily based upon the method of the quotient graph~\cite{gl-aqgms-78}. 
To the best of our knowledge it has not yet been applied in the context of route planning and there exists no complexity analysis for the specific variant employed by us. 
Therefore we discuss both the approach and present a running time analysis in the remainder of the section.

Recall that to compute the contraction hierarchy~$G_{\pi}^{\wedge}$ from a given input graph~$G$ and order~$\pi$, one iteratively contracts each vertex, adding shortcuts between its neighbors.
Let~$G' = G_{\pi,i}$ be the core graph in iteration~$i$.
We do not store~$G'$ explicitly but employ a special data structure called~\emph{contraction graph} for efficient contraction and neighborhood enumeration. 
The contraction graph~$H$ contains both yet uncontracted core vertices as well as an independent set of virtually contracted \emph{super vertices}, see Figure~\ref{fig:quotient-graph} for an illustration.
These super vertices enable us to avoid the overhead of dynamically adding shortcuts to $G'$.
For each vertex in $H$ we store a marker bit indicating whether it is a super vertex.
Note that $G'$ can be obtained by contracting all super vertices in $H$.

\subsection{Contracting Vertices}

A vertex $x$ in $G'$ is contracted by turning it into a super vertex.
However, creating new super vertices can violate the independent set property.
We restore it by merging neighboring super vertices:
Denote by $y$ a super vertex that is a neighbor of $x$.
We rewire all edges incident to $y$ to be incident to $x$ and remove $y$ from $H$.
To support efficiently merging vertices in $H$, we store a linked list of neighbors for each vertex.
When merging two vertices we link these lists together.
Unfortunately, combining these lists is not enough as the former neighbors $z$ of $y$ still have $y$ in their list of neighbors.
We therefore further maintain a union-find data structure:
Initially all vertices are within their own set.
When merging $x$ and $y$, the sets of $x$ and $y$ are united.
We chose $x$ as representative as $y$ was deleted.\footnote{Or alternatively, we can let the union-find data structure choose the new representative. We then denote by $x$ the new representative and by $y$ the other vertex. In this variant, it is possible that the new $x$ is the old $y$, which can be confusing. For this reason, we describe the simpler variant, where $x$ is always chosen as representative and thus $x$ always refers to the same vertex.}
When $z$ enumerates its neighbors, it finds a reference to $y$.
It can then use the union-find data structure to determine that the representative of $y$'s set is $x$.
The reference in $z$'s list is thus interpreted as pointing to $x$.

It is possible that merging vertices can create multi-edges and loops.
For example, consider that the neighborhood list of $y$ contains $x$.
After merging, the united list of $x$ will therefore contain a reference to $x$.
Similarly it will contain a reference to $y$, which after looking up the representative is actually $x$.
Two loops are thus created at $x$ per merge.
Furthermore, consider a vertex $z$ that is a neighbor of both $y$ and $x$. 
In this case the neighborhood list of $x$ will contain two references to $z$. %
These multi-edges and loops need to be removed.
We do this lazily and remove them in the neighborhood enumeration instead of removing them in the merge operation.

\subsection{Enumerating Neighbors}

Suppose that we want to enumerate the neighbors of a vertex $x$ in $H$.
Note that $x$'s neighborhood in $H$ differs from its neighborhood in $G'$.
The neighborhood of $x$ in $H$ can contain super vertices, as super vertices are only contracted in $G'$.
We maintain a boolean marker that indicates which neighbors have already been enumerated.
Initially no marker is set.
We iterate over $x$'s neighborhood list.
For each reference we lookup the representative $v$.
If $v$ was already marked or is $x$, we remove the reference from the list.
If $v$ was not marked and is not $x$, we mark it and report it as a neighbor.
After the enumeration we reset all markers by enumerating the neighbors again.

However, during the execution of our algorithm, we are not interested in the neighborhood of $x$ in $H$, but we want the neighborhood of $x$ in $G'$, \ie, the algorithm should not list super vertices.
Our algorithm conceptually first enumerates the neighborhood of $x$ and then contracts $x$.
We actually do this in reversed order.
We first contract $x$.
After the contraction $x$ is a super vertex.
Because of the independent set property, we know that $x$ has no super vertex neighbors in $H$.
We can thus enumerate $x$'s neighbors in $H$ and exploit that in this particular situation the neighborhoods of $x$ in $G'$ and $H$ coincide.

\subsection{Performance Analysis}

As there are no memory allocations, it is clear that the working space memory consumption is in $O(m)$.
Proving a running time in $O(\hat{m} \alpha(n))$ is less clear.
Denote by $d(x)$ the degree of $x$ just before $x$ is contracted.
$d(x)$ coincides with the upward degree of $x$ in $G_{\pi}^{\wedge}$ and thus $\sum d(x) = \hat{m}$.
We first prove that we can account for the neighborhood cleanup operations outside of the actual algorithm.
This allows us to assume that they are free within the main analysis.
We then show that contracting a vertex $x$ and enumerating its neighbors is in $O(d(x)\alpha (n))$.
Processing all vertices has thus a running time in $O(\hat{m} \alpha(n))$.

The neighborhood list of $x$ can contain duplicated references and thus its length can be larger than the number of neighbors of $x$.
Further, for each entry in the list, we need to perform a union find lookup.
The costs of a neighborhood enumeration can thus be larger than $O(d(x) \alpha(n))$.
Fortunately, the first neighborhood enumeration compactifies the neighborhood list and thus every subsequent enumeration runs in $O(d(x) \alpha(n))$.
Removing a reference has a cost in $O(\alpha(n))$.
Our algorithm never adds references.
Initially there are $\Theta(m)$ references. 
The total costs for removing references over the whole algorithm are thus in $O(m \alpha(n))$.
As our graph is assumed to be connected, we have that $m \in O(m')$ and therefore $O(m \alpha(n))\subseteq O(\hat{m} \alpha(n))$.
We can therefore assume that removing references is free within the algorithm.
As removing a reference is free, we can assume that even the first enumeration of the neighbors of $x$ is within $O(d(x) \alpha(n))$.
Merging two vertices consists of redirecting a constant number of references within a linked list.
The merge operation is thus in $O(1)$.

Our algorithm starts by enumerating all neighbors of $x$ to determine all neighboring super vertices in $O(d(x)\alpha (n))$ time.
There are at most $d(x)$ neighboring super vertices and therefore the costs of merging all super vertices into $x$ is in $O(d(x))$.
We subsequently enumerate all neighbors a second time to output the arcs of $G_{\pi}^{\wedge}$.
The costs of this second enumeration is also within $O(d(x)\alpha (n))$.
The whole algorithm thus runs in $O(\hat{m} \alpha(n))$ time as $\sum d(x) = \hat{m}$, which completes the proof.

\subsection{Adjacency Array}

While the described algorithm is efficient in theory, linked lists cause too many cache misses in practice. %
We therefore implemented a hybrid of a linked list and an adjacency array, which has the same worst case performance, but is more cache-friendly in practice.
An element in the linked list does not only hold a single reference, but a small set of references organized as small arrays called blocks.
The neighbors of every original vertex form a single block. 
The initial linked neighborhood list are therefore composed of a single block.
We merge two vertices by linking their blocks together.
If all references are deleted from a block, we remove it from the list.

\section{Enumerating Triangles}\label{sec:enum-triangles}

\begin{wrapfigure}{o}{3.5cm}%
\vspace{-2em}
\begin{center}
\includegraphics[scale=1.5]{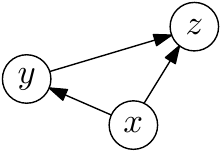}
\end{center}

\caption{A triangle in~$G_{\pi}^{\wedge}$. The triple~$\{x,y,z\}$ is a lower
triangle of the arc~$(y,z)$, an intermediate
triangle of the arc~$(x,z)$, and an upper
triangle of the arc~$(x,y)$.}
\label{fig:triangle}
\end{wrapfigure}

A triangle $\{x,y,z\}$ is a set of 3 adjacent vertices.
A triangle can be an upper, intermediate or lower triangle with respect to an arc $(x,y)$, as illustrated in Figure~\ref{fig:triangle}.
A triangle $\{x,y,z\}$ is a \emph{lower triangle} of $(y,z)$ if $x$ has the lowest rank among the three vertices.
Similarly $\{x,y,z\}$ is a \emph{upper triangle} of $(x,y)$ if $z$ has the highest rank and $\{x,y,z\}$ is a \emph{intermediate triangle} of $(x,z)$ if $y$'s rank is between the ranks of $x$ and $z$.
The triangles of an edge $(a,b)$ can be characterized using the upward $N_u$ and downward $N_d$ neighborhoods of $a$ and $b$.
There is a lower triangle $\{a,b,c\}$ of an arc $(a,b)$ if and only if $c\in N_d(a)\cap N_d(b)$.
Similarly, there is an intermediate triangle $\{a,b,c\}$ of an arc $(a,b)$ with $\pi^{-1}(a)<\pi^{-1}(b)$ if and only if $c \in N_u(a)\cap N_d(b)$ and an upper triangle $\{a,b,c\}$ of an arc $(a,b)$ if and only if $c \in N_u(a)\cap N_u(b)$.
The triangles of an arc can thus be enumerated by intersecting the neighborhoods of the arc's endpoints.

Efficiently enumerating all lower triangles of an arc is an important base operation of the customization~(Section~\ref{sec:customization}) and path unpacking algorithms~(Section~\ref{sec:query}).  
It can be implemented using adjacency arrays or accelerated using extra preprocessing. 
Note that in addition to the vertices of a triangle we are interested in the IDs of the participating arcs as we need these to access the metric of an arc.

\paragraph{Basic Triangle Enumeration}

Triangles can be efficiently enumerated by exploiting their characterization using neighborhood intersections.
We construct an upward and a downward adjacency array for $G_{\pi}^{\wedge}$, where incident arcs are ordered by their head respectively tail vertex ID.
The lower triangles of an arc $(x,y)$ can be enumerated by simultaneously scanning the downward neighborhoods of $x$ and $y$ to determine their intersection.
Intermediate and upper triangles are enumerated analogously using the upward adjacency arrays.
For later access to the metric of an arc, we also store each arc's ID in the adjacency arrays.
This approach requires space proportional to the number of arcs in $G_{\pi}^{\wedge}$.

\paragraph{Triangle Preprocessing}

Instead of merging the neighborhoods on demand to find all lower triangles, we propose to create a \emph{triangle adjacency array} structure that maps the arc ID of $(x,y)$ to the pair of arc ids of $(z,x)$ and $(z,y)$ for every lower triangle $\{x,y,z\}$ of $(x,y)$. 
This requires space proportional to the number of triangles~$t$ in $G_{\pi}^{\wedge}$, but allows for a very fast access. 
Analogous structures allow us to efficiently enumerate all upper triangles and all intermediate triangles.

\paragraph{Hybrid Approach}

For less well-behaved graphs the number of triangles $t$ can significantly outgrow the number of arcs in $G_{\pi}^{\wedge}$.
In the worst case $G$ is the complete graph and the number of triangles $t$ is in $\Theta(n^{3})$ whereas the number of arcs is in $\Theta(n^{2})$.
It can therefore be prohibitive to store a list of all triangles. 
We therefore propose a hybrid approach. 
We only precompute the triangles for the arcs $(u,v)$ where the level of $u$ is below a certain threshold.
The threshold is a tuning parameter that trades space for time.

\paragraph{Comparison with CRP}

Triangle preprocessing has similarities with micro and macro code in CRP~\cite{dw-fcrn-13}. 
In the following, we compare the space consumption of these two approaches against our lower triangles preprocessing scheme. However, note that at this stage we do not yet consider travel direction on arcs.  
Hence, let $t$ be the number of undirected triangles and $m$ be the number of arcs in $G_{\pi}^{\wedge}$; further let $t'$ be the number of directed triangles and $m'$ be the number of arcs used in~\cite{dw-fcrn-13}.
If every street is a one-way street, then $m' = m$ and $t' = t$; otherwise, without one-way streets, $m'=2m$ and $t'=2t$.

Micro code stores an array of triples of pointers to the arc weights of the three arcs in a directed triangle, \ie, it stores the equivalent of $3t'$ arc IDs. 
Computing the exact space consumption of macro code is more difficult.
However, it is easy to obtain a lower bound:
Macro code must store for every triangle at least the pointer to the arc weight of the upper arc.
This yields a space consumption equivalent to \emph{at least} $t'$ arc IDs.
In comparison, our approach stores for each triangle the arc IDs of the two lower arcs.
Additionally, the index array of the triangle adjacency array, which maps each arc to the set of its lower triangles, maintains $m+1$ entries. Each entry has a size equivalent to an arc ID.
Our total memory consumption is thus $2t+m+1$ arc IDs.

Hence, our approach always requires less space than micro code. It has similar space consumption as macro code if one-way streets are rare, otherwise it needs at most twice as much data. 
However, the main advantage of our approach over macro code is that it allows for random access, which is crucial in the algorithms presented in the following sections.

\section{Customization}\label{sec:customization}

Up to now we only considered the metric-independent first preprocessing phase. 
In this section, we describe the second metric-dependent preprocessing phase, known as customization. That is, we show how to efficiently extend the weights of the input graph to a corresponding metric with weights for all arcs in $G_\pi^\wedge$.
We consider three different distances between the vertices:
We refer to $\distI(s,t)$ as the shortest $st$-path distance in the input graph $G$.
With $\distUD(s,t)$ we denote the shortest $st$-path distance in $G_\pi^\wedge$ when only considering up-down paths.
Finally, let $\distA(s,t)$ be the shortest $st$-path distance in $G_\pi^*$, \ie, when allowing arbitrary not-necessarily up-down paths in $G_\pi^\wedge$.

For correctness of the CH query algorithms (\cf~Section~\ref{sec:query}) it is necessary that between any pair of vertices $s$ and $t$ a shortest up-down $st$-path in $G_\pi^\wedge$ exists with the same distance as the shortest $st$-path in the input graph $G$. 
In other words, $\distI(s,t)=\distA(s,t)=\distUD(s,t)$ must hold for all vertices $s$ and $t$.
We say that a metric that fulfills $\distI(s,t)=\distA(s,t)$ \emph{respects} the input weights.
If additionally $\distA(s,t)=\distUD(s,t)$ holds, we call the metric \emph{customized}.
Note that customized metrics are not necessarily unique. However, there is a special customized metric, called \emph{perfect} metric $m_P$, where for every arc $(x,y)$ in $G_\pi^\wedge$ the weight of this arc $m_P(x,y)$ is equal to the shortest path distance $\distI(x,y)$.
We optionally use the perfect metric to perform perfect witness search.

Constructing a respecting metric is trivial: 
Assign to all arcs of~$G_\pi^\wedge$ that already exist in $G$ their input weight and to all other arcs $+\infty$.
Computing a customized metric is less trivial.
We therefore describe in Section~\ref{sec:basic-custom} the basic customization algorithm that computes a customized metric $m_C$ given a respecting one.
Afterwards, we describe the perfect customization algorithm that computes the perfect metric $m_P$ given a customized one (such as for example $m_C$).
Finally, we show how to employ the perfect metric to perform a perfect witness search.

\subsection{Basic Customization}\label{sec:basic-custom}

A central notion of the basic customization algorithm is the \emph{lower triangle inequality}, which is defined as following:
A metric $m_C$ fulfills it if for all lower triangles $\{x,y,z\}$ of each arc $(x,y)$ of $G_\pi^\wedge$, it holds that $m_C(x,y) \le m_C(x,z)+m_C(z,y)$.
We show that every respecting metric that also fulfills this inequality is customized.
Our algorithm exploits this by transforming the given respecting metric in a coordinated way that maintains the respecting property and assures that the lower triangle inequality holds.
The resulting metric is thus customized.
We first describe the algorithm and prove that the resulting metric is respecting and fulfills the inequality. 
We then prove that this is sufficient for the resulting metric to be customized.

Our algorithm iterates over all arcs $(x,y) \in G_\pi^\wedge$ ordered \emph{increasingly} by the rank of $x$ in a bottom-up fashion.
For each arc $(x,y)$, it enumerates all lower triangles $\{x,y,z\}$ and checks whether the path $x\rightarrow z \rightarrow y$ is shorter than the path $x\rightarrow y$. 
If this is the case, then it decreases $m_C(x,y)$ so that both paths are equally long.
Formally, it performs for every arc $(x,y)$ the operation $m_C(x,y)\leftarrow\min\{m_C(x,y),m_C(x,z)+m_C(z,y)\}$.
Note, that this operation never assigns values that do not correspond to a path length and therefore $m_C$ remains respecting.
By induction over the vertex levels, we can show that after the algorithm is finished, the lower triangle inequality holds for every arc, \ie, for every arc $(x,y)$ and lower triangle $\{x,y,z\}$ the inequality $m_C(x,y)\le m_C(x,z)+m_C(z,y)$ holds.
The key observation is that by construction the rank of $z$ must be strictly smaller than the ranks of $x$ and $y$. 
The final weights of~$m_C(x,z)$ and $m_C(z,y)$ have therefore already been computed when considering~$(x,y)$. 
In other words, when the algorithm considers the arc $(x,y)$, the weights $m_C(x,z)$ and $m_C(z,y)$ are guaranteed to remain unchanged until termination.

\begin{theorem}
\label{thm:ch-correct}
Every respecting metric that additionally fulfills the lower triangle inequality is customized. 
\end{theorem}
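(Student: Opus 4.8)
The metric $m_C$ already respects the input weights, so $\distI(s,t)=\distA(s,t)$ holds; to conclude that it is customized I only need to establish $\distA(s,t)=\distUD(s,t)$ for every pair $s,t$, which will then give $\distI=\distA=\distUD$. Since every up-down path is in particular an arbitrary path in $G_{\pi}^{\wedge}$, the inequality $\distA(s,t)\le\distUD(s,t)$ is immediate. All the work therefore lies in the reverse direction $\distUD(s,t)\le\distA(s,t)$, which I would prove by taking a shortest $st$-path $P$ in $G_{\pi}^{*}$ realizing $\distA(s,t)$ and transforming it, without ever increasing its length, into an up-down path.

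\textbf{The transformation.} I would look at the sequence of ranks $\pi^{-1}$ along $P$. If $P$ is not already up-down, then this sequence is not unimodal, so it has an interior local minimum: a subpath $x\rightarrow z\rightarrow y$ with $\pi^{-1}(z)<\pi^{-1}(x)$ and $\pi^{-1}(z)<\pi^{-1}(y)$. The key structural observation is that the edge $\{x,y\}$ must already be present in $G_{\pi}^{*}$: since $x$ and $y$ are both neighbors of $z$ of higher rank, the (metric-independent) contraction of $z$ during the construction of $G_{\pi}^{*}$ inserts the shortcut $\{x,y\}$ if it was not there before. Consequently $\{x,y,z\}$ is a lower triangle of the arc $(x,y)$, because $z$ has the lowest rank of the three, so the lower triangle inequality applies and yields $m_C(x,y)\le m_C(x,z)+m_C(z,y)$. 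I can therefore replace the detour $x\rightarrow z\rightarrow y$ by the single arc $x\rightarrow y$, obtaining a path in $G_{\pi}^{*}$ that is no longer than $P$.

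\textbf{Termination and conclusion.} Each such replacement removes the valley vertex $z$ and decreases the hop-length of the path by one, so after finitely many steps the process halts. At that point the rank sequence has no interior local minimum and is therefore unimodal, i.e.\ it increases to a single peak and then decreases---which is exactly an up-down path---while the length has never increased. This up-down path witnesses $\distUD(s,t)\le\distA(s,t)$, and together with the trivial direction gives $\distA(s,t)=\distUD(s,t)$. Combined with the respecting property $\distI(s,t)=\distA(s,t)$ this yields $\distI(s,t)=\distA(s,t)=\distUD(s,t)$, so $m_C$ is customized.

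\textbf{Main obstacle.} The unimodality characterization and the termination count are routine; the crux is the structural step that a rank-valley $x\rightarrow z\rightarrow y$ forces the shortcut arc $(x,y)$ to exist in $G_{\pi}^{*}$. This is where the combinatorics of vertex contraction meets the metric inequality: it is precisely the contraction (chordality) property of the CH that guarantees a lower triangle is available to which the lower triangle inequality can be applied. I would also take care to note that each replacement keeps the path inside $G_{\pi}^{*}$---it does, since $\{x,y\}$ is an edge there---so that the shortcutting argument can be iterated until no valley remains.
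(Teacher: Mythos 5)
Your proposal is correct and follows essentially the same argument as the paper's proof: locate a rank-valley $x\rightarrow z\rightarrow y$ on a shortest (not necessarily up-down) path, observe that the contraction of $z$ before $x$ and $y$ forces the edge $\{x,y\}$ to exist so the lower triangle inequality applies, replace the detour by the single arc, and iterate until an up-down path remains. Your write-up is somewhat more explicit about the trivial inequality $\distA\le\distUD$ and the hop-count termination argument, but these are presentation details rather than a different route.
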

\begin{proof}
We need to show that between any pair of vertices $s$ and $t$ a shortest up-down $st$-path exists.
As we assumed for simplicity that $G$ is connected, there always exists a shortest not-necessarily up-down path from $s$ to $t$.
Either this is an up-down path, or a subpath $x\rightarrow z\rightarrow y$ with~$\pi^{-1}(x)>\pi^{-1}(z)$ and $\pi^{-1}(y)>\pi^{-1}(z)$ must exist.
As~$z$ is contracted before~$x$ and $y$, an edge~$\{x,y\}$ must exist. 
Because of the lower triangle inequality, we further know that $m(x,y)\le m(x,z)+m(z,y)$ and thus replacing $x\rightarrow z\rightarrow y$ by $x\rightarrow y$ does not make the path longer.
Either the path is now an up-down path or we can apply the argument iteratively.
As the path has only a finite number of vertices, this is guaranteed to eventually yield the up-down path required by the theorem and thus this completes the proof.
\end{proof}

\subsection{Perfect Customization}

\begin{wrapfigure}{o}{4.5cm}%
\begin{center}
\includegraphics[scale=1.5]{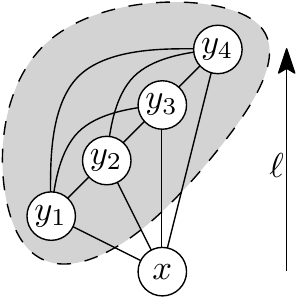}
\end{center}

\caption{The vertices $y_1\ldots y_4$ denote the upper neighborhood $N_{u}(x)$ of $x$. 
They form a clique (grey area) because $x$ was contracted first. As $\ell(x)<\ell(y_j)$ for every $j$, we know by the induction hypothesis that the arcs in this clique are weighted by shortest path distances. We therefore have an all-pair shortest path distance table among all~$y_j$. We have to show that using this information we can compute shortest path distances for all arcs outgoing of $x$.}
\label{fig:perfect-custom}
\end{wrapfigure}

Given a customized metric $m_C$, we want to compute the perfect metric $m_P$.
We first copy all values of $m_C$ into $m_P$.
Our algorithm then iterates over all arcs $(x,y)$ \emph{decreasing} by the rank of $x$ in a top-down fashion.
For every arc it enumerates all intermediate and upper triangles $\{x,y,z\}$ and checks whether the path over $z$ is shorter and adjusts the value of $m_P(x,y)$ accordingly, \ie, it performs $m_P(x,y)\leftarrow\min\{m_P(x,y),m_P(x,z)+m_P(z,y)\}$.
After all arcs have been processed $m_P$ is the perfect metric, as is shown in the following theorem.
\begin{theorem}
\label{thm:perfect-custom}
After the perfect customization, $m_P(x,y)$ corresponds to the shortest $xy$-path distance for every arc $(x,y)$, \ie, $m_P$ is the perfect metric.  
\end{theorem}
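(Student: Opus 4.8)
The plan is to prove the statement by induction along the processing order, showing that after the top-down sweep every arc weight equals the corresponding input-graph distance. Throughout I would maintain the invariant that $m_P(u,v)$ always equals the length of some $u$--$v$ walk in $G$: the algorithm only ever replaces a weight by a sum $m_P(x,z)+m_P(z,y)$, which concatenates two such walks, so $m_P(u,v)\ge\distI(u,v)$ holds at every moment; since $m_P$ is initialized to the customized metric $m_C$ and the updates only decrease values, we also have $\distI(u,v)\le m_P(u,v)\le m_C(u,v)$ throughout. The induction would be organized by decreasing rank of the tail $x$ (equivalently by decreasing level $\ell(x)$, as in Figure~\ref{fig:perfect-custom}), with hypothesis: once all arcs whose tail has rank higher than $x$ have been processed, their weights equal the true $\distI$-distances. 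Since $N_u(x)$ is a clique in $G_\pi^*$ (perfect elimination order) and every vertex of $N_u(x)$ has strictly higher rank than $x$, the hypothesis supplies an exact all-pairs table $m_P(y_i,y_j)=\distI(y_i,y_j)$ on $N_u(x)$ by the time $x$ is processed.

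The crux is a decomposition lemma for a single outgoing arc $(x,y_j)$ with $y_j\in N_u(x)$. Because $m_C$ is customized, a shortest up-down path $P$ from $x$ to $y_j$ exists with length $\distUD(x,y_j)=\distI(x,y_j)$. As $x$ is the strictly lowest-ranked vertex of $P$, its first arc is $x\to y_i$ for some $y_i\in N_u(x)$. I would then run the chain $\distI(x,y_j)=m_C(x,y_i)+|P'|\ge m_C(x,y_i)+\distI(y_i,y_j)\ge\distI(x,y_i)+\distI(y_i,y_j)\ge\distI(x,y_j)$, where $P'$ is the remainder of $P$ from $y_i$ to $y_j$ and the last step is the triangle inequality. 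Forcing equality everywhere yields both $\distI(x,y_j)=\distI(x,y_i)+\distI(y_i,y_j)$ and, crucially, $m_C(x,y_i)=\distI(x,y_i)$. The latter is what removes the apparent circular dependence among arcs sharing the tail $x$: combined with the invariant $\distI(x,y_i)\le m_P(x,y_i)\le m_C(x,y_i)$ it pins $m_P(x,y_i)=\distI(x,y_i)$ regardless of whether $(x,y_i)$ has already been processed.

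To finish I would exhibit the right triangle. If $y_i=y_j$ then $P$ is the single arc and $m_C(x,y_j)=\distI(x,y_j)$, so the initial value is already exact. Otherwise $\{x,y_i,y_j\}$ is a triangle, since $N_u(x)$ is a clique, and it is enumerated when $(x,y_j)$ is processed: it is an upper triangle if $\pi^{-1}(y_i)>\pi^{-1}(y_j)$ and an intermediate one if $\pi^{-1}(y_i)<\pi^{-1}(y_j)$, which is exactly why the algorithm sweeps both triangle types. The corresponding update compares $m_P(x,y_j)$ against $m_P(x,y_i)+m_P(y_i,y_j)=\distI(x,y_i)+\distI(y_i,y_j)=\distI(x,y_j)$, so afterwards $m_P(x,y_j)\le\distI(x,y_j)$; together with the permanent lower bound this gives equality and closes the induction step. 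I expect the main obstacle to be precisely this circular-looking dependency within a single tail---arcs $(x,\cdot)$ are updated using other arcs $(x,\cdot)$---which the decomposition lemma dissolves by certifying that the relevant neighbor already carries its exact weight in $m_C$; a secondary point requiring care is the up/down direction bookkeeping, so that $m_P(y_i,y_j)$ is indeed the value guaranteed correct by the induction hypothesis.
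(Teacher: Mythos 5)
Your proof is correct, and its skeleton matches the paper's: a top-down induction in which the upward neighborhood $N_u(x)$ forms a clique whose internal arcs already carry exact distances by the induction hypothesis, followed by splitting a shortest up-down $x$--$y_j$ path at its second vertex $y_i \in N_u(x)$, shortcutting the suffix, and observing that the resulting two-arc path is exactly what the enumeration of intermediate and upper triangles of $(x,y_j)$ feeds into the minimum. Where you genuinely depart from the paper is in how the within-tail circularity is dissolved. The paper resolves it by asserting that the metric remains customized \emph{at every moment} of the top-down sweep, so that when $(x,y_j)$ is processed the two-arc path $x \rightarrow y_k \rightarrow y_j$ is shortest with respect to the current weights, whatever value the not-yet-final $m_P(x,y_k)$ holds; this dynamic invariant is stated without proof (it does hold, since updates only lower weights to lengths of existing paths, which preserves both the respecting and the customized property). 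You instead avoid any dynamic invariant: your equality-forcing chain $\distI(x,y_j) = m_C(x,y_i) + |P'| \ge m_C(x,y_i) + \distI(y_i,y_j) \ge \distI(x,y_i)+\distI(y_i,y_j) \ge \distI(x,y_j)$ certifies that the \emph{initial} customized metric already satisfies $m_C(x,y_i)=\distI(x,y_i)$ on the first arc of the shortest up-down path, and the sandwich $\distI \le m_P \le m_C$ then pins $m_P(x,y_i)$ to that exact value throughout the sweep, independent of the order in which arcs with tail $x$ are handled. This buys a self-contained argument that makes the one step the paper glosses over fully rigorous, and it transfers immediately to the parallel (level-synchronized) variant; the paper's formulation buys a reusable invariant that it later leans on informally in its parallelization discussion. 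Both routes are sound, and your treatment of the circular dependency is arguably the more careful one.
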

\begin{proof}
We have to show that after the algorithm has finished processing a vertex~$x$, all of its outgoing arcs in $G_{\pi}^{\wedge}$ are weighted by the shortest path distance. 
We prove this by induction over the level of the processed vertices.
The top-most vertex is the only vertex in the top level. 
It does not have any upward arcs and thus the algorithm does not have anything to do. 
This forms the base case of the induction. 
In the inductive step, we assume that all vertices with a strictly higher level have already been processed.
As consequence, we know that the upward neighbors of $x$ form a clique weighted by shortest path distances.
Denote these neighbors by $y_i$.
The situation is depicted in Figure~\ref{fig:perfect-custom}.
The weights of the $y_i$ encode a complete shortest path distance table between the upward neighbors of $x$. 

Pick some arbitrary arc~$(x,y_j)$. 
We show the correctness of our algorithm by proving that either $m_C(x,y_j)$ is already the shortest path distance or a neighbor $y_k\in N_{u}(x)$ must exist such that $x\rightarrow y_k \rightarrow y_j$ is a shortest up-down path.
For the rest of this paragraph assume the existence of $y_k$, we prove its existence in the next paragraph.
If $m_C(x,y_j)$ is already the shortest $xy_j$-path distance, then enumerating triangles will not change $m_C(x,y_j)$ and is thus correct.
If $m_C(x,y_j)$ is not the shortest $xy_j$-path distance, then enumerating all intermediate and upper triangles of $(x,y_j)$ is guaranteed to find the $x\rightarrow y_k \rightarrow y_j$ path and thus the algorithm is correct.
The upper triangles correspond to paths with $\ell(y_k) > \ell(y_j)$ while the intermediate triangles to paths with $\ell(y_k) < \ell(y_j)$.

It remains to show that the $x\rightarrow y_k \rightarrow y_j$ shortest up-down path actually exists.
As the metric is customized at every moment during the perfect customization, we know that a shortest up-down $xy_j$-path $K$ exists.
As $K$ is an up-down path, we can conclude that the second vertex of $K$ must be an upward neighbor of $x$.
We denote this neighbor by $y_k$ .
$K$ thus has the following structure: $x\rightarrow y_k \rightarrow \ldots \rightarrow y_j$.
As $y_k$ has a higher rank than $x$, $m_P(y_k, y_j)$ is guaranteed to be the shortest $y_{k}y_{j}$-path distance, and therefore we can replace the $y_k \rightarrow \ldots \rightarrow y_j$ subpath of $K$ by $y_k \rightarrow y_j$ and we have proven that the required $x\rightarrow y_k \rightarrow y_j$ shortest up-down path exists, which completes the proof.
\end{proof}

\subsection{Perfect Witness Search}

Using the perfect customization algorithm, we can efficiently compute the weighted CH with a minimum number of arcs with respect to the same contraction order.
We present two variants of our algorithm. 
The first variant consists of removing all arcs $(x,y)$ whose weight after the basic customization $m_C(x,y)$ does not correspond to the shortest $xy$-path distance $m_P(x,y)$.
This variant is simple and always correct, but it does not remove as many arcs as possible, if a pair of vertices $a$ and $b$ exists in the input graph such that there are multiple shortest $ab$-paths.
The second variant\footnote{Note that the second algorithm variant exploits that we defined weights as being non-zero. If zero weights are allowed, it may remove too many arcs. A workaround consists of replacing all zero weights with a very small but non-zero weight.} also removes these additional arcs.
An arc $(x,y)$ is removed if and only if an upper or intermediate triangle $\{x,y,z\}$ exists such that the shortest path from $x$ over $z$ to $y$ is no longer than the shortest $xy$-path.
However, before we can proof the correctness of the second variant, we need to introduce some technical machinery, which will also be needed in the correctness proof of the stalling query algorithm.
We define the ``height'' of a not-necessarily up-down path in $G_{\pi}^{*}$.
We show that with respect to every customized metric, for every path that is not up-down, an up-down path must exist that is strictly higher and is no longer.

\subsubsection{Variant for Graphs with Unique Shortest Paths}

The first algorithm variant consists of removing all arcs $(x,y)$ from the CH for which $m_P(x,y)\neq m_C(x,y)$. 
It is optimal if shortest path are unique in the input graph, \ie, between every pair of vertices $a$ and $b$ there is only one shortest $ab$-path.
This simple algorithm is correct as the following theorem shows.
\begin{theorem}
If the input graph has unique shortest paths between all pairs of vertices, then we can remove an arc $(x,y)$ from the CH if and only if $m_P(x,y)\neq m_C(x,y)$.
\end{theorem}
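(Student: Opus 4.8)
The plan is to reduce the statement to a clean dichotomy and then prove the two implications separately, with the uniqueness hypothesis entering only in the ``only if'' direction. Since $m_C$ is respecting, the single arc $(x,y)$ is itself an $xy$-path in $G_\pi^*$, so $m_C(x,y)\ge\distA(x,y)=\distI(x,y)=m_P(x,y)$; hence $m_P(x,y)\neq m_C(x,y)$ is equivalent to $m_C(x,y)>\distI(x,y)$, i.e.\ to the arc \emph{not} being a shortest $xy$-path, whereas $m_P(x,y)=m_C(x,y)$ means the arc realizes $\distI(x,y)$. So it suffices to show that the arc is removable if and only if its weight strictly exceeds $\distI(x,y)$.

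For the ``if'' direction I would show that an arc with $m_C(x,y)>\distI(x,y)$ lies on no shortest path at all, so deleting all such arcs changes no distance and in particular preserves $\distUD=\distI$. Suppose such an arc occurred on a shortest up-down $st$-path $P$; splitting $P$ at the arc gives $\distI(s,t)=a+m_C(x,y)+b$, where $a,b$ are the lengths of the two remaining pieces. Replacing the arc by a shortest $xy$-path yields an $st$-walk of length $a+\distI(x,y)+b<\distI(s,t)$, which contradicts $a+\distI(x,y)+b\ge\distI(s,t)$. Hence no shortest up-down path uses the arc, so every shortest up-down path avoids all arcs of this type and survives their simultaneous removal; customization is preserved. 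This direction needs neither uniqueness nor the perfect metric beyond $m_P(x,y)=\distI(x,y)$.

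For the ``only if'' direction I would argue the contrapositive: if $m_C(x,y)=\distI(x,y)$ then the arc is necessary. Assume $\pi^{-1}(x)<\pi^{-1}(y)$. Basic customization sets $m_C(x,y)$ to the length of the shortest $xy$-path whose interior vertices all have rank below $\pi^{-1}(x)$, since its bottom-up sweep takes iterated minima over lower triangles. As this equals $\distI(x,y)$, that restricted path is a globally shortest $xy$-path, so by uniqueness it is the unique shortest path $Q$; thus every interior vertex of $Q$ has rank below $\pi^{-1}(x)$. Now let $P'$ be any shortest up-down $xy$-path. Unpacking its shortcuts yields an $xy$-walk of length $\distI(x,y)$, which with positive weights is a simple shortest path, hence equals $Q$; therefore every meeting and interior vertex of $P'$ lies on $Q$ and has rank below $\pi^{-1}(x)$. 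But the meeting vertex of $P'$ has rank at least $\pi^{-1}(y)>\pi^{-1}(x)$, and if it equals $y$ the then all-upward $P'$ still has a first interior vertex of rank above $\pi^{-1}(x)$ --- either way a contradiction, unless $P'$ is the direct arc. So the direct arc is the \emph{unique} shortest up-down $xy$-path, and removing it would raise $\distUD(x,y)$ above $\distI(x,y)$, breaking customization; the arc is not removable.

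I expect the ``only if'' direction to be the main obstacle, for two reasons. First, it relies on the precise fact that $m_C(x,y)$ equals the shortest $xy$-path length over paths with interior ranks below $\min(\pi^{-1}(x),\pi^{-1}(y))$, which I would justify from the bottom-up lower-triangle sweep of basic customization. Second, it is exactly here that uniqueness of shortest paths is indispensable: it is used both to identify the arc's representative path with $Q$ and to force every alternative shortest up-down path to unpack to the same $Q$, so that its high-ranked meeting vertex collides with the low-ranked interior of $Q$. The ``if'' direction, by contrast, is a routine exchange argument. Finally, on directed graphs the whole argument is simply run separately for the upward and downward weights (equivalently, on $G_\pi^\wedge$ and its reverse), which only duplicates the reasoning.
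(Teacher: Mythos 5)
Your proposal is correct, and its overall skeleton matches the paper's: both directions come down to showing that every arc lying on some shortest up-down path satisfies $m_P(x,y)=m_C(x,y)$, and that for such an arc the single-arc path $x\rightarrow y$ is the \emph{only} shortest up-down $xy$-path, so deleting it would destroy customization. Your ``if'' direction is the paper's observation that subpaths of shortest paths are shortest paths, merely phrased as an exchange argument; your opening reduction to $m_C(x,y)>\distI(x,y)$ is also sound. The genuine difference is in the ``only if'' direction. The paper never needs your key lemma that basic customization assigns to each arc $(x,y)$ the shortest $xy$-path distance among paths with interior ranks below $\pi^{-1}(x)$: it simply unpacks \emph{both} the direct arc and the hypothetical alternative up-down path $R$ into input-graph paths. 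The unpacking of the arc has $x$ and $y$ as its highest-ranked vertices, hence cannot contain the witness vertex $z$ of $R$ (which ranks above $x$), so the two unpacked paths are distinct shortest $xy$-paths, contradicting uniqueness outright. You instead identify the unique shortest path $Q$ via the customization-semantics lemma and then derive a rank clash between $Q$'s low-ranked interior and the necessarily high-ranked meeting vertex of any alternative $P'$. Both arguments work, but the paper's route is leaner: unpacking the single arc $(x,y)$ already yields ``$Q$ has all interior ranks below $\pi^{-1}(x)$'' for free, whereas your route carries the extra proof obligation of the semantics lemma --- whose ``$\le$'' half requires the chordality fact that the highest-ranked interior vertex of a rank-restricted path forms a lower triangle with both endpoints, a step your sketch acknowledges but does not spell out. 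If you replace that lemma by the one-line unpacking observation, your proof collapses to essentially the paper's.
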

\begin{proof}
We need to show that after removing all arcs, there still exists a shortest up-down path between every pair of vertices $s$ and $t$.
We know that before removing any arc a shortest up-down $st$-path $K$ exists. 
We show that no arc of $K$ is removed and thus $K$ also exists after removing all arcs.
Every subpath of $K$ must be a shortest path as $K$ is a shortest path.
Every arc of $K$ is a subpath.
However, we only remove arcs such that $m_P(x,y)\neq m_C(x,y)$, \ie, which are not shortest paths.

To show that no further arcs can be removed we need to show that if $m_P(x,y)=m_C(x,y)$, then the path $x\rightarrow y$ is the only shortest up-down path.
Denote the $x\rightarrow y$ path by $Q$.
Suppose that another shortest up-down path $R$ existed. 
$R$ must be different than $Q$, \ie, a vertex $z$ must exist that lies on $R$ but not on $Q$.
As $z$ must be reachable from $x$, we know that $z$ is higher than $x$.
Unpacking the path $Q$ in the input graph yields a path where $x$ and $y$ are the highest ranked vertices and thus this unpacked path cannot contain $z$.
Unpacking $R$ yields a path that contains $z$ and is therefore different.
Both paths are shortest paths from $x$ to $y$ in the input graph.
This contradicts the assumption that shortest paths are unique.
We have thus proven that, if the input graph has unique shortest paths, we can remove an arc $(x,y)$ if and only if $m_P(x,y)\neq m_C(x,y)$.
\end{proof}

\subsubsection{Variant for General Graphs}

\begin{wrapfigure}{o}{7.5cm}%
\vspace{-5em}
\begin{center}

\includegraphics[scale=1.2]{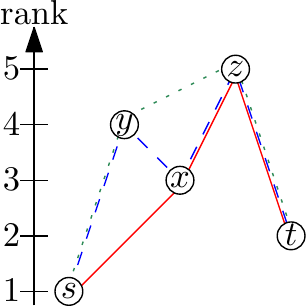}
\vspace{-2em}
\end{center}

\caption{The rank sequence of the solid red path is $[3,2,1]$.
The 3 is the minimum of the ranks of the endpoints of the $\{x,z\}$ edge.
Similarly the 2 is induced by the $\{z,t\}$ edge and the 1 by the $\{s,x\}$ edge.
The rank sequence of the blue dashed path is $[3,3,2,1]$ and the rank sequence of the green dotted path is $[4,2,1]$.
The solid red path is the lowest followed by the blue dashed path and the green dotted path is the highest.
}
\label{fig:stalling}
\end{wrapfigure}

Using the first variant of our algorithm, even when shortest paths are not unique in the original graph is not wrong.
However, it is possible that some arcs are not removed that could be removed.
Our second algorithm variant does not have this weakness.
It removes all arcs $(x,y)$ for which an intermediate or upper triangle $\{x,y,z\}$ exists such that $m_P(x,y)=m_P(x,z)+m_P(z,y)$.
These arcs can efficiently be identified while running the perfect customization algorithm.
An arc $(x,y)$ is marked for removal if an upper or intermediate triangle $\{x,y,z\}$ with $m_C(x,y)\ge m_C(x,z)+m_C(z,y)$ is encountered.
However, before we can proof the correctness of the second variant, we need to introduce some technical machinery.

We want to order paths by ``height''.
To achieve this, we first define for each path $K$ in $G_{\pi}^{*}$ its \emph{rank sequence}.
We order paths by comparing the rank sequences lexicographically.
Denote by $v_i$ the vertices in $K$.
For each edge $\{v_i,v_{i+1}\}$ in $K$ the rank sequence contains $\min \{r(v_i), r(v_{i+1})\}$.
The numbers in the rank sequences are order decreasingly.
Two paths have the same height if one rank sequence is a prefix of the other.
Otherwise we compare the rank sequences lexicographically.
This ordering is illustrated in Figure~\ref{fig:stalling}.
We proof the following technical lemma:
\begin{lemma}
\label{lem:higher-path}
Let $m_C$ be some customized metric.
For every $st$-path $K$ that is no up-down path, an up-down $st$-path $Q$ exists, such that $Q$ is strictly higher $K$ and $Q$ is no longer than $K$ with respect to $m_C$.
\end{lemma}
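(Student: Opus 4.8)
The plan is to eliminate the non-up-down structure of $K$ one local minimum at a time, shortcutting over the locally lowest vertex, and to argue that each such step strictly raises the height while never increasing the length with respect to $m_C$; iterating then yields the required up-down path $Q$.

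First I would use that $K$ is not up-down together with the fact that ranks are pairwise distinct (since $\pi$ is a bijection): this forces $K$ to contain an internal \emph{valley}, i.e., a subpath $x\rightarrow z\rightarrow y$ with $r(z)<r(x)$ and $r(z)<r(y)$. Because $z$ is contracted before both $x$ and $y$, the same reasoning as in the proof of Theorem~\ref{thm:ch-correct} guarantees that the edge $\{x,y\}$ is present in $G_{\pi}^{*}$, so replacing the detour $x\rightarrow z\rightarrow y$ by the single arc $x\rightarrow y$ produces another $st$-path in $G_{\pi}^{*}$. Since $m_C$ is customized it satisfies the lower triangle inequality, hence $m_C(x,y)\le m_C(x,z)+m_C(z,y)$, and the new path is no longer than $K$ with respect to $m_C$.

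The key step, and the part I expect to need the most care, is verifying that this replacement strictly increases the height in the rank-sequence order. Here I would reason at the level of the multiset of edge values. The two edges $\{x,z\}$ and $\{z,y\}$ each contribute the value $r(z)$ to the rank sequence, and since $z$ has a unique rank and occurs only once on the path, these are the only two entries equal to $r(z)$. The replacement deletes both of them and inserts the single value $\min\{r(x),r(y)\}>r(z)$. Sorting the surviving values decreasingly, the block of entries strictly larger than $r(z)$ gains exactly one element while the tail of entries smaller than $r(z)$ is untouched; position by position the new sorted sequence therefore dominates the old one and is strictly larger at the latest by the position previously occupied by the first copy of $r(z)$. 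Consequently the two rank sequences differ at a common position with the new one larger, so neither is a prefix of the other and the new path is strictly higher.

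Finally I would close the argument: each replacement strictly decreases the hop-length of the path, so after finitely many steps no valley remains, which means the path is up-down. Using that ``strictly higher'' is transitive (a routine case analysis on the first position at which the lexicographic comparison differs), the resulting up-down path is strictly higher than the original $K$ and, as each step was length-nonincreasing, it is no longer than $K$; this path is the desired $Q$.
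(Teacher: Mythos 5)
Your proposal is correct and follows essentially the same argument as the paper's proof: locate a valley $x\rightarrow z\rightarrow y$, use the contraction order to obtain the edge $\{x,y\}$, apply the lower triangle inequality of the customized metric to shortcut the valley without increasing the length, and iterate, with termination guaranteed by the strictly decreasing hop-length. If anything, your height comparison is more careful than the paper's, which speaks of replacing the single entry $r(v_i)$ by $\min\{r(v_{i-1}),r(v_{i+1})\}$, whereas your multiset argument correctly accounts for the fact that \emph{both} edges incident to the valley vertex contribute $r(z)$ to the rank sequence, so two equal entries are deleted and one strictly larger entry is inserted.
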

\begin{proof}
Denote by $v_i$ the vertices on the path $K$.
As $K$ is no up-down path, there must exist a vertex $v_i$ on $K$ that has lower ranks than its neighbors $v_{i-1}$ and $v_{i+1}$.
$v_{i-1}$ and $v_{i+1}$ are different vertices because they are part of a shortest path and zero weights are not allowed.
Further, as $v_i$ is contracted before its neighbors, there must be a edge between $v_{i-1}$ and $v_{i+1}$.
As the metric is customized, $m_C(v_{i-1},v_{i+1})\le m_C(v_{i-1},v_{i})+m_C(v_{i},v_{i+1})$ must hold.
We can thus remove $v_i$ from $K$ and replace it with the $(v_{i-1},v_{i+1})$ are without making the path longer.
Denote this new path by $R$.
$R$ is higher than $K$ as we replaced $r(v_i)$ in the rank sequence by $\min\{r(v_{i-1}), r(v_{i+1})\}$, which must be larger.
Either, $R$ is an up-down path or we apply the argument iteratively.
In each iteration the path looses a vertex and therefore we can guarantee that eventually we obtain an up-down path that is higher than $K$ and no longer.
This is the desired up-down path $Q$ that is no longer than $K$ and strictly higher.
\end{proof}
Note that, this lemma does not exploit any property that is inherent to CHs with a metric-independent contraction ordering and is thus applicable to every CH.

Given this technical lemma, we can prove the correctness of the second variant of our algorithm.
\begin{theorem}
We can remove an arc $(x,y)$ if and only if an upper or intermediate triangle $\{x,y,z\}$ exists with $m_P(x,y)=m_P(x,z)+m_P(z,y)$.
\end{theorem}
\begin{proof}
We need to show that for every pair of vertices $s$ and $t$ a shortest up-down $st$-path exists, that uses no removed arc.
We show that a highest shortest up-down $st$-path has this property.
As the metric is customized, we know that a shortest up-down $st$-path $K$ exists before removing any arcs.
If $K$ does not contain an arc $(x,y)$ for which an upper or intermediate triangle $\{x,y,z\}$ exists with $m_P(x,y)=m_P(x,z)+m_P(z,y)$, then there is nothing to show.
Otherwise, we modify $K$ by inserting $z$ between $x$ and $y$.
This does not modify the length of $K$, but we can no longer guarantee that $K$ is an up-down path.
If $\{x,y,z\}$ was an intermediate triangle, then $K$ is still an up-down path.
However, it is strictly higher, as we added $r(z)$ into the rank sequence, which is guaranteed to be larger than $r(x)$.
If $\{x,y,z\}$ was an upper triangle, then $K$ is still no up-down path anymore.
Fortunately, using Lemma~\ref{lem:higher-path} we can transform $K$ into an up-down path, that is no longer and strictly higher.
In both case, the new $K$ is an up-down path or we apply the argument iteratively.
As $K$ gets strictly higher in each iteration and the number of up-down paths is finite, we know that we will eventually obtain a shortest up-down $st$-path where no arc can be removed. 

Further, we need to show that if no such triangle exists, then an arc cannot be removed, \ie, we need to show that the only shortest up-down path from $x$ to $y$ is the path consisting only of the $(x,y)$ arc.
Assume that no such triangle and a further up-down path $Q$ existed.
$Q$ must contain a vertex beside $x$ and $y$ and all vertices in $Q$ must have the rank of $x$ or higher.
Consider the vertex $z$ that comes directly after $x$ in $Q$.
As $x$ is contracted before $z$ and $y$, an arc between $z$ and $y$ must exist.
Therefore, a triangle $\{x,y,z\}$ must exist that is an intermediate triangle, if $z$ has a lower rank than $y$ and is an upper triangle, if $z$ has a higher rank than $y$.
However, we assumed that no such triangle can exist.
We have thus proven that we can remove an arc $(x,y)$ if and only if an upper or intermediate triangle $\{x,y,z\}$ exists with $m_P(x,y)=m_P(x,z)+m_P(z,y)$.
\end{proof}

\subsection{Parallelization}

The basic customization can be parallelized by processing the arcs $(x,y)$ that depart within a level in parallel.
Between levels, we need to synchronize all threads using a barrier.
As all threads only write to the arc they are currently assigned to and only read from arcs processed in a strictly lower level, we can thus guarantee that no read/write conflict occurs.
Hence, no locks or atomic operations are needed.

On most modern processors, the perfect customization can be parallelized analogously to the basic customization algorithm.
However, seeing why this is correct is non-obvious because the exact order in which the threads are executed influences intermediate results.
Fortunately, the end result is always the same and independent of the execution order.
Our algorithm works as following: 
We iterate over all arcs departing within a level in parallel and synchronize all threads between levels.
For every arc $(x,y)$ we enumerate all upper and intermediate triangles and update $m_P(x,y)$ accordingly.
Consider the situation from Figure~\ref{fig:perfect-custom}.
Suppose that thread $A$ processes the arc $(x,y_A)$ at the same time as thread $B$ processes the arc $(x,y_B)$. 
Further, suppose that thread $A$ updates $m_P(x,y_A)$ at the same moment as thread $B$ enumerates the $\{x,y_B, y_A\}$ triangle.
In this situation it is unclear what value thread $B$ will see.
However, our algorithm is correct as long it is guaranteed that thread $B$ will either see the old value or the new value.

In the proof of Theorem~\ref{thm:perfect-custom}, we have shown, that for every vertex $x$ and arc $(x, y_i)$ either the arc $(x, y_i)$ already has the shortest path distance or an upper or intermediate triangle $\{x,y_i,y_j\}$ exists, such that $x \rightarrow y_j \rightarrow y_i$ is a shortest path.
No matter in which order the threads process the arcs, they do not modify shortest path weights.
This implies that the shortest path $x \rightarrow y_j \rightarrow y_i$ is thus retained, regardless of the execution order.
This shortest path is not modified and is guaranteed to exist before any arcs outgoing from the current level are processed.
Every thread is thus guaranteed to see it.
However, other weights can be modified. 
Fortunately, this is not a problem as long as we can guarantee that no thread sees a value that is below the corresponding shortest path distance.
Therefore, if we can guarantee that thread $B$ either sees the old value or the new value, as is the case on x86 processors, then the algorithm is correct.
If thread $B$ can see some mangled combination of the old value's bits and new value's bits, then we need to use locks or make sure that all outgoing arcs of $x$ are processed by the same thread.

\subsection{Directed Graphs}

Up to now we have focused on customizing undirected graphs.
If the input graph~$G$ is \emph{directed}, our toolchain works as follows: Based on the \emph{undirected unweighted} graph induced by~$G$ we compute a vertex ordering~$\pi$~(Section~\ref{sec:ch-order}), build the upward directed Contraction Hierarchy~$G_\pi^{\wedge}$~(Section~\ref{sec:ch-construction}), and optionally perform triangle preprocessing~(Section~\ref{sec:enum-triangles}).
For customization, however, we consider two weights per arc in~$G_\pi^\wedge$, one for each direction of travel. One-way streets are modeled by setting the weight corresponding to the forbidden traversal direction to~$\infty$. 
With respect to $\pi$ we define an upward metric~$m_u$ and a downward metric~$m_d$ on $G_\pi^\wedge$. For each arc~$(x,y) \in G$ in the directed input graph with input weight~$w(x,y)$, we set $m_u(x,y)=w(x,y)$ if $\pi^{-1}(x) < \pi^{-1}(y)$, \ie, if $x$ is ordered before $y$; otherwise, we set $m_d(x,y)=w(x,y)$. All other values of $m_u$ and $m_d$ are set to $\infty$.
In other words, each arc~$(x,y) \in G_\pi^\wedge$ of the Contraction Hierarchy has upward weight~$m_u(x,y) = w(x,y)$ if $(x,y) \in G$, downward weight~$m_d(x,y) = w(y,x)$ if $(y,x) \in G$, and~$\infty$ otherwise.

The basic customization considers both metrics~$m_u$ and $m_d$ simultaneously.
For every lower triangle $\{x,y,z\}$ of $(x,y)$ it sets $m_{u}(x,y)\leftarrow\min\{m_{u}(x,y),\allowbreak m_{d}(x,z)+m_{u}(z,y)\}$ and $m_{d}(x,y)\leftarrow\min\{m_{d}(x,y),\allowbreak m_{u}(x,z)+m_{d}(z,y)\}$.
The perfect customization can be adapted analogously.
For every intermediate triangle $\{x,y,z\}$ of $(x,y)$ the perfect customization sets $m_{u}(x,y)\leftarrow\min\{m_{u}(x,y),\allowbreak m_{u}(x,z)+m_{u}(z,y)\}$ and $m_{d}(x,y)\leftarrow\min\{m_{d}(x,y),\allowbreak m_{d}(x,z)+m_{d}(z,y)\}$.
Similarly for every upper triangle $\{x,y,z\}$ of $(x,y)$ the perfect customization sets $m_{u}(x,y)\leftarrow\min\{m_{u}(x,y),\allowbreak m_{u}(x,z)+m_{d}(z,y)\}$ and $m_{d}(x,y)\leftarrow\min\{m_{d}(x,y),\allowbreak m_{d}(x,z)+m_{u}(z,y)\}$.
The perfect witness search might need to remove an arc only in one direction.
It therefore produces, just as in the original CHs, two search graphs: an upward search graph and a downward search graph.
The forward search in the query phase is limited to the upward search graph and the backward search to the downward search graph, just as in the original CHs.
The arc $(x,y)$ is removed from the upward search graph if and only if an intermediate triangle $\{x,y,z\}$ with $m_{u}(x,y)=m_{u}(x,z)+m_{u}(z,y)$ exists or an upper triangle $\{x,y,z\}$ with $m_{u}(x,y)=m_{u}(x,z)+m_{d}(z,y)$ exists.
Analogously, the arc $(x,y)$ is removed from the downward search graph if and only if an intermediate triangle $\{x,y,z\}$ with $m_{d}(x,y)=m_{d}(x,z)+m_{d}(z,y)$ exists or an upper triangle $\{x,y,z\}$ with $m_{d}(x,y)=m_{d}(x,z)+m_{u}(z,y)$ exists.

\subsection{Single Instruction Multiple Data}

The weights attached to each arc in the CH can be replaced by an interleaved set of $k$ weights by storing for every arc a vector of $k$ elements.
Vectors allows us to customize all $k$ metrics in one go, amortizing triangle enumeration time and they allow us to use single instruction multiple data (SIMD) operations.
Further, as we use essentially two metrics to handle directed graphs, we can store both of them in a 2-dimensional vector.
This allows us to handle both directions in a single processor instruction.
Similarly, if we have $k$ directed input weights we can store them in a $2k$-dimensional vectors.

The processor needs to support component-wise minimum and saturated addition, \ie, $a+b=\mbox{int}_{\max}$ must hold in the case of an overflow.
In the case of directed graphs it additionally needs to support efficiently swapping neighboring vector components.
A current SSE-enabled processor supports all the necessary operations for 16-bit integer components. 
For 32-bit integer saturated addition is missing.
There are two possibilities to work around this limitation: 
The first is to emulate saturated-add using a combination of regular addition, comparison and blend/if-then-else instruction. 
The second consists of using 31-bit weights and use $2^{31}-1$ as value for $\infty$ instead of $2^{32}-1$. 
The algorithm only computes the saturated addition of two weights followed by taking the minimum of the result and some other weight, \ie, if computing $\min(a+b,c)$ for all weights $a$, $b$ and $c$ is unproblematic, then the algorithms works correctly. 
We know that $a$ and $b$ are at most $2^{31}-1$ and thus their sum is at most $2^{32}-2$ which fits into a 32-bit integer.
In the next step we know that $c$ is at most $2^{31}-1$ and thus the resulting minimum is also at most $2^{31}-1$.

\subsection{Partial Updates}
\label{sec:partial-updates}

Until now we have only considered computing metrics from scratch. 
However, in many scenarios this is overkill, as we know that only a few edge weights of the input graph were changed.
It is unnecessary to redo all computations in this case.
The ideas employed by our algorithm are somewhat similar to those presented in \cite{gssv-erlrn-12}, but our situation differs as we know that we do not have to insert or remove arcs. 
Denote by $U=\left\{ ((x_{i},y_{i}),\mynew_{i})\right\}$ the set of arcs whose weights should be updated, where $(x_{i},y_{i})$ is the arc ID and $\mynew_{i}$ the new weight. 
Note that modifying the weight of one arc can trigger further changes. 
However, these new changes have to be at higher levels. 
We therefore organize $U$ as a priority queue ordered by the level of $x_{i}$. 
We iteratively remove arcs from the queue and apply the change. 
If new changes are triggered we insert these into the queue.
The algorithm terminates once the queue is empty.

Denote by $(x,y)$ the arc that was removed from the queue and by $\mynew$ its new weight and by $\myold$ its old weight. 
We first have to check whether $\mynew$ can be bypassed using a lower triangle. 
For this reason, we iterate over all lower triangles $\{x,y,z\}$ of $(x,y)$ and perform $\mynew\leftarrow\min\{\mynew,m(z,x)+m(z,y)\}$.
Furthermore, if $\{x,y\}$ is an edge in the input graph $G$, we might have overwritten its weight with a shortcut weight, which after the update might not be shorter anymore. Hence, we additionally test that $\mynew$ is not larger than the input weight.
If after both checks $\mynew=m(x,y)$ holds, then no change is necessary and no further changes are triggered. 
If $\myold$ and $\mynew$ differ we iterate over all upper triangles $\{x,y,z\}$ of $(x,y)$ and test whether $m(x,z)+\myold=m(y,z)$ holds and if so the weight of the arc $(y,z)$ must be set to $m(x,z)+\mynew$.
We add this change to the queue. 
Analogously we iterate over all intermediate triangles $\{x,y,z\}$ of $(x,y)$ and queue up a change to $(z,y)$ if $m(x,z)+\myold=m(z,y)$ holds.

How many subsequent changes a single change triggers heavily depends on the metric and can significantly vary. 
Slightly changing the weight of a dirt road has near to no impact whereas changing a heavily used highway segment will trigger many changes. 
In the game setting such largely varying running times are undesirable as they lead to lag-peaks.
We propose to maintain a queue into which all changes are inserted.
Every round a fixed amount of time is spent processing elements from this queue. 
If time runs out before the queue is emptied the remaining arcs are processed in the next round. 
This way costs are amortized resulting in a constant workload per turn. 
The downside is that as long the queue is not empty some distance queries will use outdated data. 
How much time is spent each turn updating the metric determines how long an update needs to be propagated along the whole graph. 

\section{Distance Query}\label{sec:query}

In this section, we describe how to compute distance queries\footnote{We refer to the minimum sum over all weights over all $st$-paths as the distance between $s$ and $t$. The term ``distance query'' does not imply that we only consider shortest paths according to geographical distance.}, \ie, a shortest up-down path in $G_{\pi}^{\wedge}$ between two vertices $s$ and $t$ given a customized metric and how to unpack into a shortest path edge sequence in $G$.

\subsection{Basic}

The basic query runs two instances of Dijkstra's algorithm on $G_{\pi}^{\wedge}$
from $s$ and from $t$. If $G$ is undirected, then both searches
use the same metric. Otherwise if $G$ is directed the search from
$s$ uses the upward metric $m_{u}$ and the search from $t$ the
downward metric $m_{d}$. In either case in contrast to \cite{gssv-erlrn-12}
they operate on the same upward search graph $G_{\pi}^{\wedge}$.
Once the radius of one of the two searches is larger
than the shortest path found so far, we stop the search because we
know that no shorter path can exist. 
We alternate between processing vertices in the forward search and processing vertices in the backward search.

\subsection{Stalling}
\label{sec:stalling}

We implemented a basic version of an optimization presented in \cite{gssv-erlrn-12,ss-ehh-12} called stall-on-demand. 
The optimization exploits that the shortest strictly upward $sv$-path in $G_{\pi}^{\wedge}$ can be longer than the shortest $sv$-path in $G_{\pi}^{*}$, which can go up and down arbitrarily. 
The search from $s$ only finds upward paths and if we observe that an up-down path exists that is not longer, then we can prune the upward search. 
Denote by $x$ the vertex removed from the queue. 
We iterate over all outgoing arcs $(x,y)$ and test whether $d(x)\ge m(x,y) + d(y)$ holds.
If it holds for some arc we prune $x$ by not relaxing its outgoing arcs.

If $d(x) > m(x,y) + d(y)$ holds, then pruning is correct because all subpaths of shortest up-down paths must be shortest paths and the upward path ending at $x$ is not shortest path as a shorter up-down path through $y$ exists.
We can also prune when $d(x) \ge m(x,y) + d(y)$, but a different argument is needed.
To the best of our knowledge, correctness has so far not been proven for the $d(x) = m(x,y) + d(y)$ case.
Notice that we do not exploit any special properties of metric independent %
\begin{wrapfigure}{o}{4.5cm}%
\vspace{2em}
\begin{center}
\includegraphics[scale=1.5]{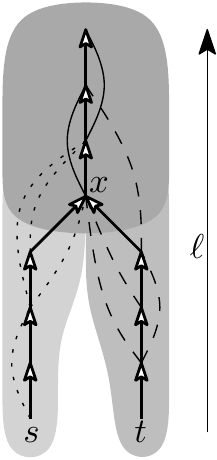}
\end{center}
\vspace{-1em}

\caption{The union of the darkgray and lightgray areas is the search space of~$s$. Analogously the union of the darkgray and middlegray areas is the search space of $t$. The darkgray area is the intersection of both search spaces.  The dotted arcs start in the search space of~$s$, but not in the search space of~$t$. Analogously the dashed arcs start in the search space of~$t$, but not in the search space of~$s$. The solid arcs start in the intersection of the two search spaces. The vertex~$x$ is the LCA of~$s$ and~$t$.
\vspace{-4em}
}
\label{fig:elim-query}
\end{wrapfigure}%
orders and thus our prove works for every CH.

\begin{theorem}
The upward search can be pruned when $d(x) \ge m(x,y) + d(y)$ holds.
\end{theorem}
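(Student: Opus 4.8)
The plan is to prove only the nontrivial direction: that the stalled bidirectional search still reports the true distance $\distI(s,t)$. The converse — that stalling never reports a too-small value — is immediate, since every label produced by the search is the length of an honest up-down path and is therefore at least $\distI(s,t)$. Because the claim is meant to hold for an arbitrary CH, I would avoid the elimination tree entirely and argue purely with the height machinery of Lemma~\ref{lem:higher-path}. Fix a customized metric $m$, recall $\distA=\distUD=\distI$, and among the finitely many shortest up-down $st$-paths let $K$ be a \emph{highest} one in the rank-sequence order, with upward portion $s=u_0\to u_1\to\dots\to u_k=\mu$ and meeting vertex $\mu$. Since prefixes of a shortest path are shortest and each prefix $s\to\dots\to u_i$ is an upward path, the shortest upward distance to $u_i$ equals $\distI(s,u_i)$.

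The heart of the argument is an induction along this upward portion establishing simultaneously that (a) the forward search settles each $u_i$ with the correct label $d(u_i)=\distI(s,u_i)$, and (b) no $u_i$ with $i<k$ is stalled. For (a), positivity of the weights gives $d(u_0)<\dots<d(u_k)$, so $u_{i-1}$ is settled before $u_i$; if $u_{i-1}$ is not stalled it relaxes $(u_{i-1},u_i)$, yielding $d(u_i)\le\distI(s,u_i)$, while $d(u_i)$ is always the length of an actual path, hence $\ge\distI(s,u_i)$. For (b), suppose $u_i$ ($i<k$) were stalled through an arc $(u_i,y)$, i.e. $d(u_i)\ge m(u_i,y)+d(y)$. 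The label $d(y)$ is the length of some upward $s$–$y$ path, so appending the descending step $y\to u_i$ gives an $s$–$u_i$ path in $G_{\pi}^{*}$ of length $m(u_i,y)+d(y)\ge\distI(s,u_i)=d(u_i)$; together with the stalling inequality this forces equality, so $P'=(s\to\dots\to y\to u_i)$ is a second shortest $s$–$u_i$ path.

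I would then replace the prefix of $K$ up to $u_i$ by $P'$, obtaining an $st$-path $K'$ of the same (hence shortest) length. The key observation is that $K'$ is strictly higher than $K$: the descending edge $\{y,u_i\}$ of $P'$ contributes $\min\{\pi^{-1}(y),\pi^{-1}(u_i)\}=\pi^{-1}(u_i)$ to the rank sequence, whereas every edge of the replaced upward prefix contributes a value strictly below $\pi^{-1}(u_i)$; since the suffix from $u_i$ to $t$ is common to $K$ and $K'$, a routine lexicographic comparison of the decreasingly sorted rank sequences shows $K'$ is higher than $K$. As $K'$ has a valley at $u_i$ it is not up-down, so Lemma~\ref{lem:higher-path} produces an up-down $st$-path $Q$ that is no longer than $K'$ (thus still shortest) and strictly higher than $K'$, hence strictly higher than $K$ — contradicting the maximality of $K$. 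This rules out stalling and closes the induction.

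Finally, (a) gives the forward label $d(\mu)=\distI(s,\mu)$, and the identical argument applied to the backward search (whose upward exploration traverses the reversed downward portion of $K$) yields the correct backward label $\distI(\mu,t)$ at $\mu$. Since $\mu$ is settled by both searches and $\distI(s,\mu)+\distI(\mu,t)=\distI(s,t)$, the stalled search reports $\distI(s,t)$, proving correctness of pruning under the relaxed condition $d(x)\ge m(x,y)+d(y)$. I expect the main obstacle to be step (b)'s height comparison: making precise that inserting the single descending step raises the rank sequence while the common suffix is correctly factored out, so that Lemma~\ref{lem:higher-path} can be invoked to contradict maximality; the distance bookkeeping in (a) and the backward symmetry are routine by comparison.
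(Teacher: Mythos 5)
Your proposal is correct and takes essentially the same route as the paper's proof: both hinge on Lemma~\ref{lem:higher-path}, both turn the stalling condition $d(x)\ge m(x,y)+d(y)$ into a detour path through the stalling witness $y$ that descends onto the stalled vertex, and both conclude via strict height increase plus finiteness of up-down paths that a shortest up-down $st$-path unaffected by pruning exists. The only differences are cosmetic: you fix a \emph{highest} shortest up-down path and derive a contradiction where the paper iterates the replacement until it terminates, and your explicit Dijkstra label bookkeeping (settling order, correctness of $d(u_i)$, the meeting vertex) makes precise what the paper leaves implicit.
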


\begin{proof}
We show that for every pair of vertices $s$ and $t$ an unprunable, shortest, up-down $st$-path exists.
Our proof relies on Lemma~\ref{lem:higher-path} which orders paths by height and states that $st$-path that are no up-down paths can be transformed into up-down paths that are no longer and strictly higher.
We know that some shortest $st$-path $K$ exists.
If $K$ is not pruned, then there is nothing to show.
If $K$ is pruned, then there exists a vertex $x$ on $K$ at which the search is pruned.
Without loose of generality we assume that $x$ lies on the upward part of $K$.
Further there must exist a vertex $y$ and a path $Q$ from $s$ to $x$ going through $y$ such that $Q$ is no longer than the $sx$-prefix of $K$.
Consider the path $R$ obtained by concatenating $Q$ with the $xt$-suffix of $K$.
$R$ is by construction not longer than $K$. 
If $x$ is the highest vertex on $K$ then $R$ is an up-down path and $R$ is strictly higher.
Otherwise, $R$ is no up-down path, but using Lemma~\ref{lem:higher-path} $R$ can be transformed into an up-down path that is strictly higher and no longer.
In both cases, $R$ is no longer and strictly higher.
Either, $R$ is unprunable or we apply the argument iteratively.
As there are only finitely many up-down paths and each iteration increases the height of $R$, we eventually end up at an unprunable, shortest, up-down $st$-path, which concludes the proof.
\end{proof}

\subsection{Elimination Tree}

We precompute for every vertex its parent's vertex ID in the elimination tree in a preprocessing step.
This allows us to efficiently enumerate all vertices in $\searchspace(s)$ and $\searchspace(t)$ at query time.
The vertices are enumerated increasing by rank.

We store two tentative distance arrays $d_{f}(v)$ and $d_{b}(v)$. 
Initially these are all set to~$\infty$. 
In a first step we compute the lowest common ancestor (LCA)~$x$ of~$s$ and~$t$ in the elimination tree. 
We do this by simultaneously enumerating all ancestors of~$s$ and~$t$ by increasing rank until a common ancestor is found. 
In a second step we iterate over all vertices~$y$ on the tree-path from~$s$ to~$x$ and relax all forward arcs of such $y$.
In a third step we do the same for all vertices~$y$ from~$t$ to~$x$ in the backward search. 
In a final fourth step we iterate over all vertices~$y$ from~$x$ to the root~$r$ and relax all forward and backward arcs. 
Further in the fourth step we also determine the vertex~$z$ that minimizes $d_{f}(z)+d_{b}(z)$.
A shortest up-down path must exist that goes through~$z$.
Knowing~$z$ is necessary to determine the shortest path distance and to compute the sequence of arcs that compose the shortest path.
In a fifth cleanup step we iterate over all vertices from $s$ and $t$ to the root~$r$ to reset all $d_{f}$ and $d_{b}$ to $\infty$. 
This fifth step avoids having to spend $O(n)$ running time to initialize all tentative distances to $\infty$ for each query.
Consider the situation depicted in Figure~\ref{fig:elim-query}. 
In the first step the algorithm determines $x$. 
In the second step it relaxes all dotted arcs and the tree arcs departing in the lightgray area. 
In the third step all dashed arcs and the tree arcs departing in the middlegray area and in the fourth step the solid arcs and the remaining tree arcs follow.

The elimination tree query can be combined with the perfect witness search.
Before pruning any arc, we compute the elimination tree. 
We then prune the arcs.
It is now possible that a vertex has an ancestor in the tree that is not in its pruned search space.
However, we can still guarantee that every vertex in the pruned search space is an ancestor and this is enough to prove the query correctness. 
To avoid relaxing the outgoing arcs of an ancestor outside of the search space, we prune vertices whose tentative distance $d_{f}(x)$ respectively $d_b(x)$ is $\infty$.

Contrary to the approaches based upon Dijkstra's algorithm the elimination tree query approach does not need a priority queue.
This leads to significantly less work per processed vertex. 
Unfortunately the query must always process all vertices in the search space. 
Luckily, our experiments show that for random queries with $s$ and $t$ sampled uniformly at random the query time ends up being lower for the elimination tree query.
If $s$ and $t$ are close in the original graph, \ie, not sampled uniformly at random, then the Dijkstra-based approaches win.

\subsection{Path Unpacking}

All shortest path queries presented only compute shortest up-down paths.
This is enough to determine the distance of a shortest path in the original graph. 
However, if the sequence of edges that form a shortest path should be computed, then the up-down path must be unpacked. 
The original CH of \cite{gssv-erlrn-12} unpacks an up-down path by storing for every arc $(x,y)$ the vertex $z$ of the lower triangle $\{x,y,z\}$ that caused the weight at $m(x,y)$.
This information depends on the metric and we want to avoid storing additional metric-dependent information. 
We therefore resort to a different strategy:
Denote by $p_{1}\ldots p_{k}$ the up-down path found by the query. 
As long as a lower triangle $\{p_{i},p_{i+1},x\}$ of an arc $(p_{i},p_{i+1})$ exists with $m(p_{i},p_{i+1})=m(x,p_{i})+m(x,p_{i+1})$, our algorithm inserts the vertex $x$ between $p_{i}$ and $p_{i+1}$ into the path.

\section{Experiments}\label{sec:experiments}

In this section we present our careful and extensive experimental evaluation of the algorithms introduced and described before.

\paragraph{Compiler and Machine}
We implemented our algorithms in C++, using g++ 4.7.1 with -O3 for compilation.
The customization and query experiments were run on a dual-CPU 8-core
Intel Xeon E5-2670 processor, which is based on the Sandy Bridge architecture, clocked at 2.6 GHz, with 64 GiB of DDR3-1600
RAM, 20 MiB of L3 and 256 KiB of L2 cache. The order computation experiments reported in Table~\ref{tab:order-time}
were run on a single core of an Intel Core i7-2600K CPU processor.

\paragraph{Instances}

\begin{table}
\tbl{Instances. We report the number of vertices and of directed arcs of the benchmark graphs. We further present the number of edges in the induced undirected graph. We also report the running time of Dijkstra's algorithm with stop criterion averaged over 10\,000 $st$-queries, where $s$ and $t$ are chosen uniformly at random.\label{tab:instances}}{%
\begin{tabular}{lrrrrr}
\toprule 
 & & & & & Dijstra Baseline\\
Instance & \#\,Vertices & \#\,Arcs & \#\,Edges & Symmetric? & Running Time [ms]\\
\midrule
Karlsruhe & 120\,412 & 302\,605 & 154\,869 & no & 6\\
TheFrozenSea & 754\,195 & 5\,815\,688 & 2\,907\,844 & yes & 58\\
Europe & 18\,010\,173 & 42\,188\,664 & 22\,211\,721 & no & 1\,560\\
\bottomrule
\end{tabular}
}%
\end{table}

\begin{wrapfigure}{o}{4.5cm}%
\begin{center}
\includegraphics[scale=0.2]{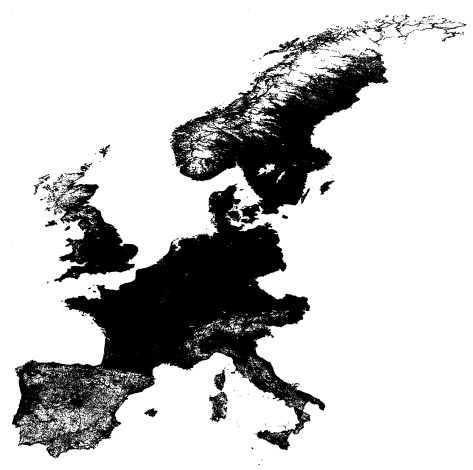}
\end{center}

\caption{All vertices in the DIMACS-Europe graph.}
\label{fig:europe}
\end{wrapfigure}

We evaluate three large instances of practical relevance in detail. 
In Section~\ref{sec:further-instances} we provide summarized experiments on further instances. 
The sizes of our main test instances are reported in Table~\ref{tab:instances}:
The DIMACS-Europe graph\footnote{Visit \url{http://i11www.iti.kit.edu/resources/roadgraphs.php} for details on how to acquire this graph.} was provided by PTV\footnote{\url{http://www.ptvgroup.com}} for the DIMACS challenge~\cite{dgj-spndi-09}.
The vertex positions are depicted in Figure~\ref{fig:europe}.
It is the standard benchmarking instance used by road routing papers over the past few years. 
Note that besides roads it also contains a few ferries to connect Great Britain and some other islands with the continent. 
The Europe graph analyzed here is its largest strongly connected component, which is a common method to remove bogus vertices.
The numbers in Table~\ref{tab:instances} are the numbers after computing the strongly connected component. 
The graph is directed, and we consider two different weights. 
The first weight is the travel time and the second weight is the straight line distance between two vertices on a perfect Earth sphere. 
Note that in the input data highways are often modeled using only a small number of vertices compared to the streets going through the cities. 
This differs from other data sources, such as OpenStreetMap\footnote{\url{http://www.openstreetmap.org}} that have a high number of vertices on highways to model road bends.
As demonstrated in Section~\ref{sec:osm-europe}, degree-2 vertices do not hamper the performance of CHs. 
The Karlsruhe graph is a subgraph of the PTV graph for a larger region around Karlsruhe.
We consider the largest connected component of the graph induced by
all vertices with a latitude between~48.3\textdegree{} and~49.2\textdegree{},
and a longitude between~8\textdegree{} and~9\textdegree{}. The TheFrozenSea
graph is based on the largest Star Craft map presented in \cite{s-bgbp-12}.
The map is composed of square tiles having at most eight neighbors
and distinguishes between walkable and non-walkable tiles. These are
not distributed uniformly, but rather form differently-sized pockets
of freely walkable space alternating with \emph{choke points} of very
limited walkable space. The corresponding graph contains for every
walkable tile a vertex and for every pair of adjacent walkable tiles
an edge. Diagonal edges are weighted by~$\sqrt{2}$, while horizontal
and vertical edges have weight~$1$. The graph is symmetric, \ie,
for each forward arc there is a backward arc, and contains large grid
subgraphs. 
For comparability with other works we report in Table~\ref{tab:instances} the time needed by Dijkstra's algorithm.
The variant of Dijkstra's algorithm used to compute the baseline running times did not reorder the vertices in-memory, used a 4-ary heap, was unidirectional, used the stopping criterion.

\subsection{Orders}
\label{sec:experiments:orders}

\begin{wraptable}{o}{7cm}%
\begin{center}%
\vspace{-2em}
\begin{tabular}{lrrr}
\toprule
Instance & MetDep & Metis & KaHIP\\
\midrule
Karlsruhe & 4.1 & 0.5 & $<$ 1\,532\\
TheFrozenSea & 1\,280.4 & 4.7 & $<$ 22\,828\\
Europe & 813.5 & 131.3 & $<$ 249\,082\\
\bottomrule
\end{tabular}
\end{center}
\caption{Orders. Duration of order computation in seconds.
No parallelization was used. KaHIP was parametrized for quality only, disregarding running time as it is metric-independent. We are certain that large speedups are possible. However, this is not the focus of this work. See the Future Works section~\ref{sec:future-work} for a discussion about how good orders can be quickly computed.\label{tab:order-time}}
\end{wraptable}

\begin{figure}
\begin{centering}
\subfigure[Karlsruhe]{\begin{centering}
\includegraphics[clip,scale=0.6]{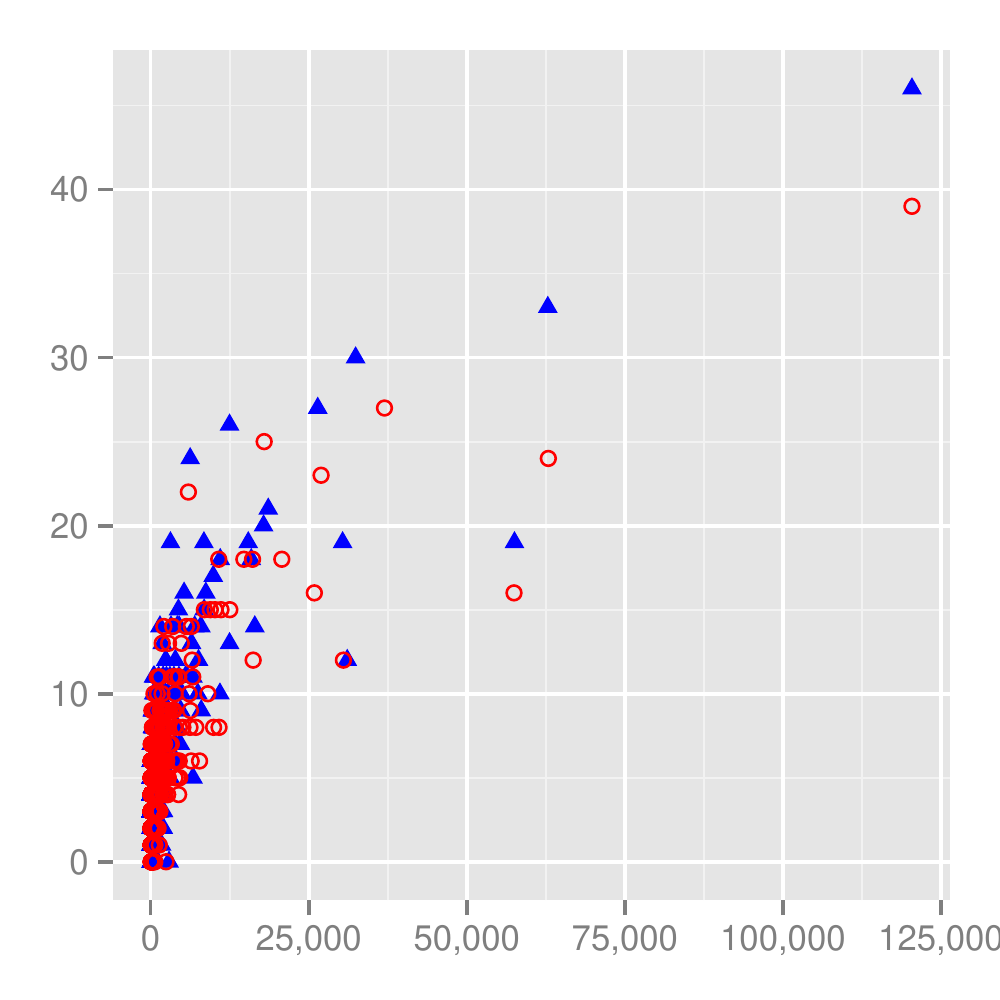}
\par\end{centering}

}\subfigure[Europe]{\centering{}\includegraphics[clip,scale=0.6]{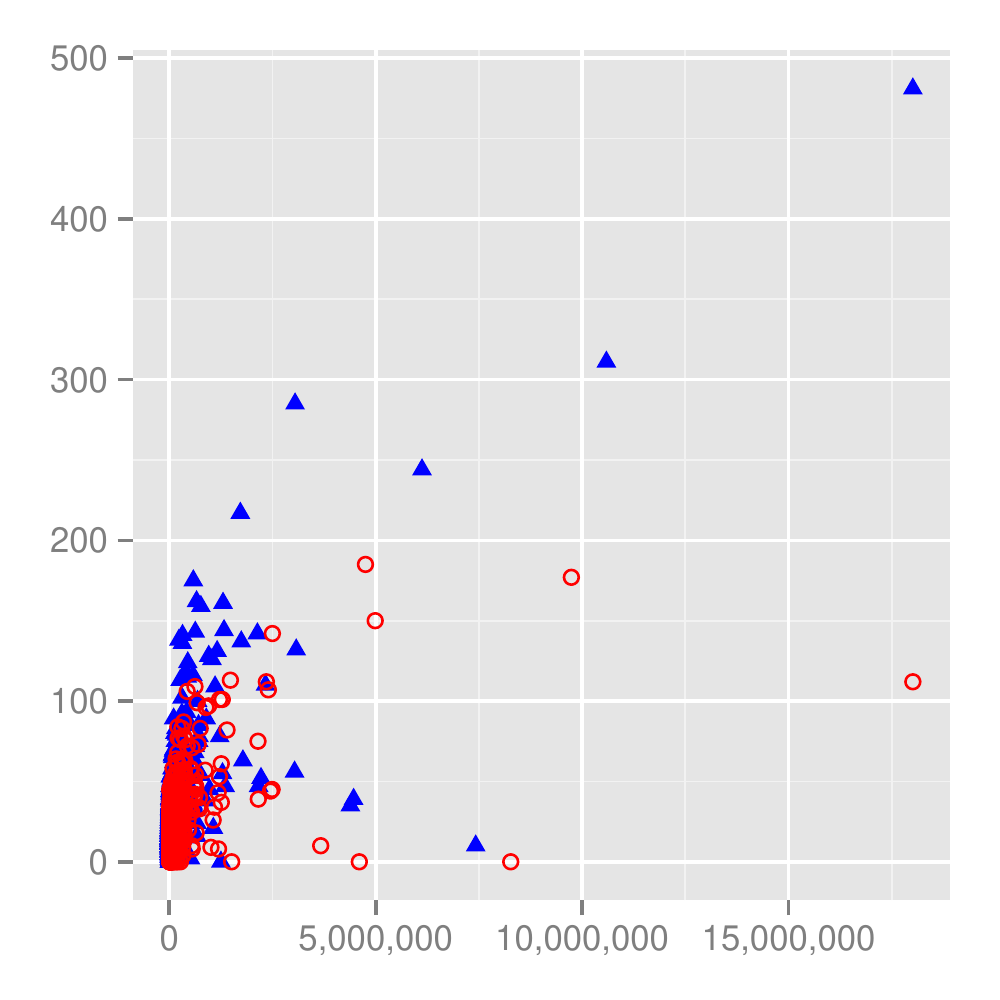}}
\par\end{centering}

\begin{centering}
\subfigure[TheFrozenSea]{\centering{}\includegraphics[clip,scale=0.6]{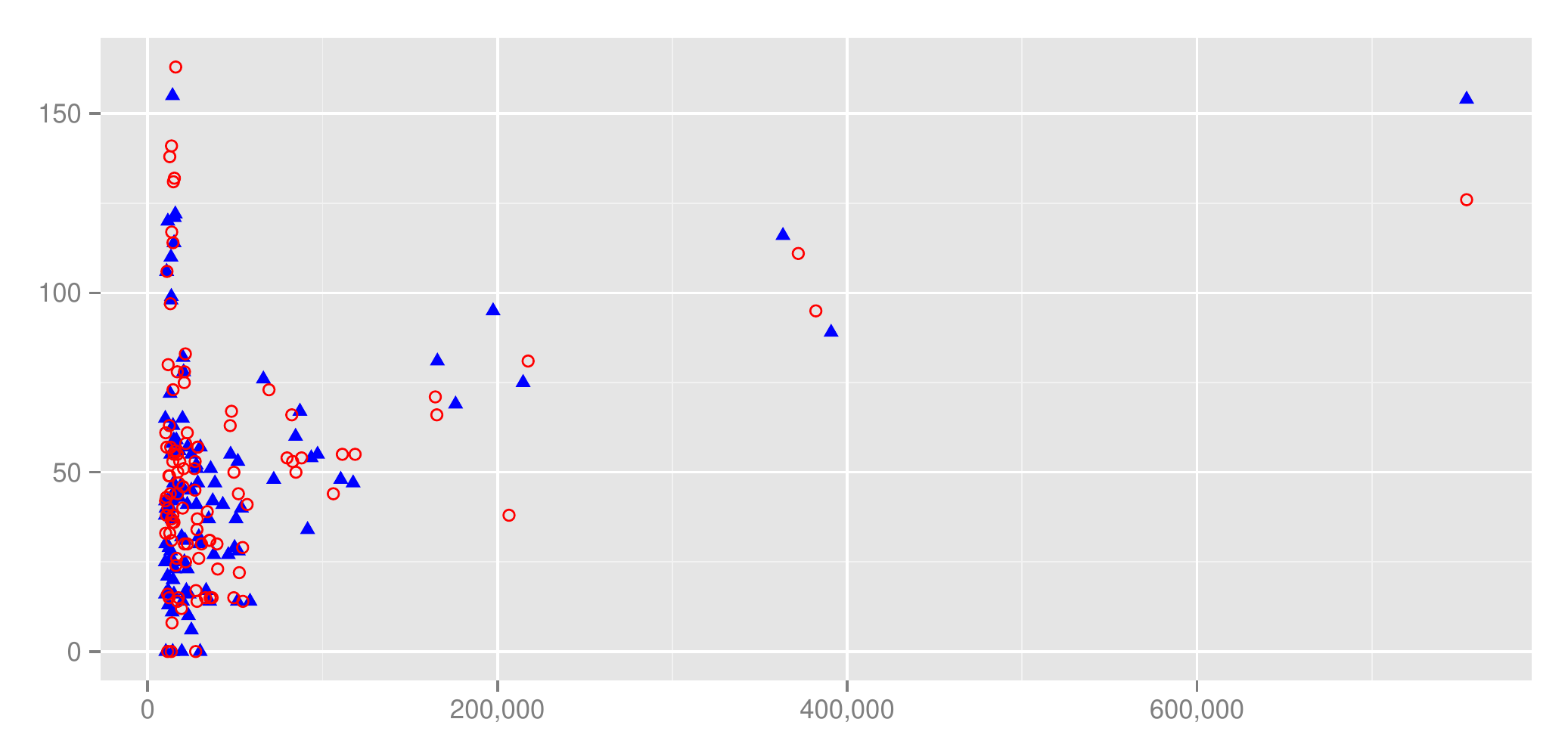}}
\par\end{centering}

\caption{The amount of vertices in the separator (vertical) vs the number of
vertices in the subgraph being bisected (horizontal). We only plot
the separators for (sub)graphs of at least 1000 vertices. The red hollow
circles is KaHIP and the blue filled triangles is Metis.}
\label{fig:separator-sizes}

\end{figure}

We analyze three different vertex orders: 
1)~The greedy metric-dependent order is an order in the spirit of \cite{gssv-erlrn-12}. We refer to it as ``MetDep'' in the tables.
2)~The Metis~5.0.1 graph partitioning package contains a tool called \texttt{ndmetis} to create ND-orders. 
3)~KaHIP~0.61 provides just graph partitioning tools. 
We therefore implemented a very basic nested dissection based program on top of it. 
We aimed at computing the best order possible.
Doing this fast is not the focus of this work.
For every graph we iteratively compute bisections with different random seeds using the ``strong'' configuration until for 10 consecutive runs no better cut is found. 
We recursively bisect the graph until the parts are too small for KaHIP to handle and assign the order arbitrarily in these small parts. 
We set the imbalance for KaHIP to 20\%. 
Note that our program is solely tuned for quality completely disregarding running time. 
It is certainly possible to trade much speed for a negligible or even no decrease in quality. 
We therefore report the running times as upper bounds, as no attempt was made to optimize them.
Please do not cite our reported running times to claim that KaHIP is slow.
This is no valid conclusion.

Table~\ref{tab:order-time} reports the times needed to compute the orders. Interestingly, Metis outperforms
even the metric-dependent greedy vertex ordering strategy. Figure~\ref{fig:separator-sizes} shows
the sizes of the computed separators. As expected, KaHIP results in
better quality. The road graphs seem to have separators following
a $\Theta(\sqrt[3]{n})$-law. On Karlsruhe the separator sizes steadily
decrease from the top level to the bottom level, making Theorem~\ref{thm:Approx}
directly applicable under the assumption that no significantly better separators exist. 
The KaHIP separators on the Europe graph have
a different structure on the top level. The separators first increase
before they get smaller. This is because of the special structure
of the European continent. For example the cut separating Great Britain and Spain
from France is far smaller than one would expect for a graph of that
size. In the next step KaHIP cuts Great Britain from Spain which results
in one of the extremely thin cuts observed in the plot. Interestingly
Metis is not able to find these cuts that exploit the continental
topology. The game map has a structure that differs from road graphs
as the plots have two peaks. This effect results from the large grid
subgraphs. The grids have $\Theta(\sqrt{n})$ separators, whereas
at the higher levels the choke points results in separators that approximately
follow a $\Theta(\sqrt[3]{n})$-law. At some point the bisector
has cut all choke points and has to start cutting through the grids.
The second peak is at the point where this switch happens.

\subsection{CH Construction}

\begin{wraptable}{o}{8.5cm}
\vspace{-2em}
\begin{center}
\begin{tabular}{lrr}
\toprule
Instance & Dyn. Adj. Array & Contraction Graph\\
\midrule
Karlsruhe & 0.6 & $<$0.1\\
TheFrozenSea & 490.6 & 3.8\\
Europe & 305.8 & 15.5\\
\bottomrule
\end{tabular}
\end{center}
\caption{Construction of the Contraction Hierarchy. We report the time in seconds required to compute the arcs in $G_{\pi}^{\wedge}$ given
a KaHIP ND-order $\pi$. No witness search is performed. No weights are assigned.\label{tab:ch-construction}}
\end{wraptable}

Table \ref{tab:ch-construction} compares the performance of our specialized
Contraction Graph data structure, described in Section~\ref{sec:ch-construction},
to the dynamic adjacency structure, as used in~\cite{gssv-erlrn-12} to
compute undirected and unweighted CHs. We do not report numbers for
the hash-based approach of~\cite{z-wchw-13} as it is fully dominated.
Our data structure dramatically improves performance. However to be fair, our approach
cannot immediately be extended to directed or weighted graphs.
Fortunately, this is no problem as we can introduced weights and directions during the customization phase.

\subsection{CH Size}

\begin{table}
\tbl{
Size of the Contraction Hierarchies for different instances and orders. 
We report the average number of vertices and arcs reachable in the upward search space of a vertex.
This number varies depending on whether a witness search is performed or not.
It also varies depending on whether we follow one-way streets in both direction or not.
We also report the number of triangles. 
As an indication for query performance, we report the average search space size in vertices and arcs, by sampling the search space of 1000 random vertices. 
Metis and KaHIP orders are metric-independent.
We report resulting figures after applying different variants of witness search. A heuristic witness search is one 
that exploits the metric in the preprocessing phase.
A perfect witness search is described in Section~\ref{sec:customization}.
\label{tab:ch-sizes}}{%
 \begin{tabular}{@{}c@{\hspace{2ex}}l@{\hspace{2ex}}l@{\hspace{2ex}}r@{\hspace{2ex}}rrr@{\hspace{1ex}}rr@{\hspace{1ex}}r@{}}
\toprule
 &  &  &  &  & & \multicolumn{4}{c}{Average upward search space size}\\
\cmidrule(lr){7-10}
 &  & \multirow{2}{*}[-5pt]{\shortstack{Witness\\search}} &  \multicolumn{2}{c}{\#\,Arcs $[\cdot10^3]$} & \#\,Triangles & \multicolumn{2}{c}{unweighted} & \multicolumn{2}{c}{weighted}\\
  \cmidrule(r){4-5}\cmidrule(lr){7-8}\cmidrule(lr){9-10}
 & Order & & undir. & upward & $[\cdot10^3]$ & \#\,Vertices  & \#\,Arcs  & \#\,Vertices & \#\,Arcs \\
\midrule
\multirow{7}{*}{\begin{sideways}
Karlsruhe~
\end{sideways}} & \multirow{3}{*}{MetDep} 
    & none & 21\,926 & 17\,661 & 37\,439\,858 & 5\,870 & 15\,786\,622 & 5\,246 & 11\,281\,564\\
 &  & heuristic & --- & 244 & --- & --- & --- & 108 & 503\\
 &  & perfect & --- & 239 & --- & --- & --- & 107 & 498\\
\addlinespace 
 & \multirow{2}{*}{Metis} 
    & none & 478 & 463 & 2\,590 & 164 & 6\,579 & 163 & 6\,411\\
 &  & perfect & --- & 340 & --- & --- & --- & 152 & 2\,903\\
\addlinespace 
 & \multirow{2}{*}{KaHIP} 
    & none & 528 & 511 & 2\,207 & 143 & 4\,723 & 142 & 4\,544\\
 &  & perfect & --- & 400 & --- & --- & --- & 136 & 2\,218\\
\midrule
\multirow{5}{*}{\begin{sideways}
TheFrozenSea\,
\end{sideways}} & \multirow{1}{*}{MetDep} & heuristic & --- & 6\,400 & --- & --- & --- & 1\,281 & 13\,330\\
\addlinespace
& \multirow{2}{*}{Metis} & none & 21\,067 & 21\,067 & 601\,846 & 676 & 92\,144 & 676 & 92\,144\\
 &  & perfect & --- & 10\,296 & --- & --- & --- & 644 & 32\,106\\
\addlinespace 
 & \multirow{2}{*}{KaHIP} & none & 25\,100 & 25\,100 & 864\,041 & 674 & 89\,567 & 674 & 89\,567\\
 &  & perfect & --- & 10\,162 & --- & --- & --- & 645 & 24\,782\\
\midrule
\multirow{5}{*}{\begin{sideways}
Europe~
\end{sideways}} & \multirow{1}{*}{MetDep} & heuristic & --- & 33\,912 & --- & --- & --- & 709 & 4\,808\\
 \addlinespace 
 & \multirow{2}{*}{Metis} & none & 70\,070 & 65\,546 & 1\,409\,250 & 1\,291 & 464\,956 & 1\,289 & 453\,366\\
 &  & perfect & --- & 47\,783 & --- & --- & --- & 1\,182 & 127\,588\\
\addlinespace 
 & \multirow{2}{*}{KaHIP} & none & 73\,920 & 69\,040 & 578\,248 & 652 & 117\,406 & 651 & 108\,121\\
 &  & perfect & --- & 55\,657 & --- & --- & --- & 616 & 44\,677\\
\bottomrule
\end{tabular}
}%
\end{table}

\begin{figure}
\begin{centering}
\subfigure[Karlsruhe]{\begin{centering}
\includegraphics[scale=0.6]{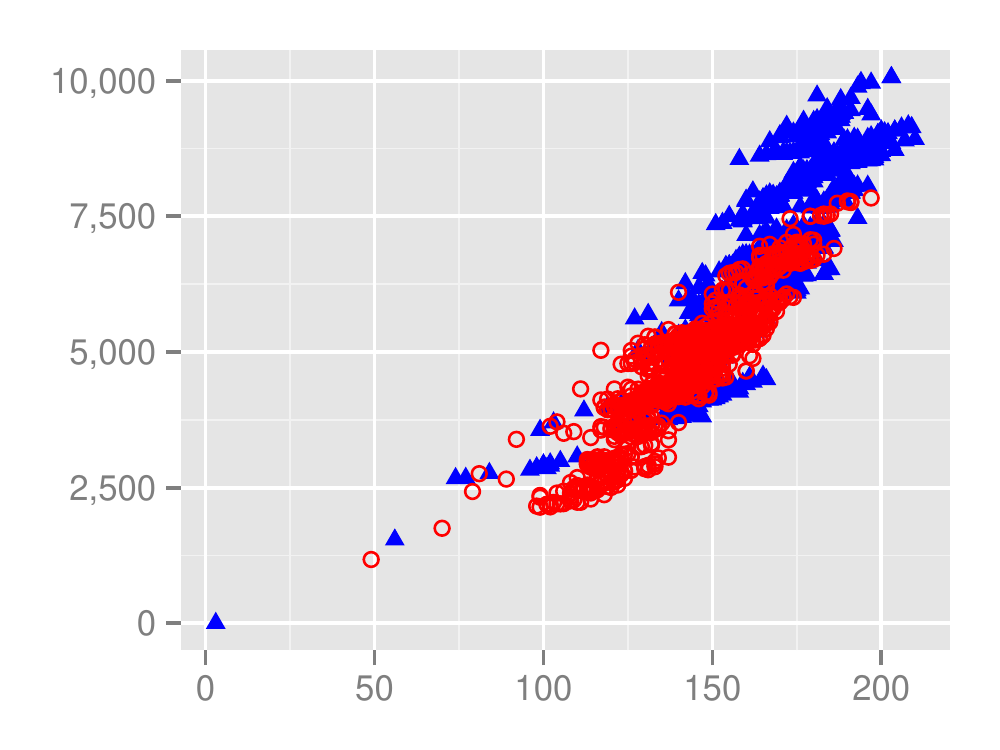}
\par\end{centering}

}~\subfigure[TheFrozenSea]{\begin{centering}
\includegraphics[scale=0.6]{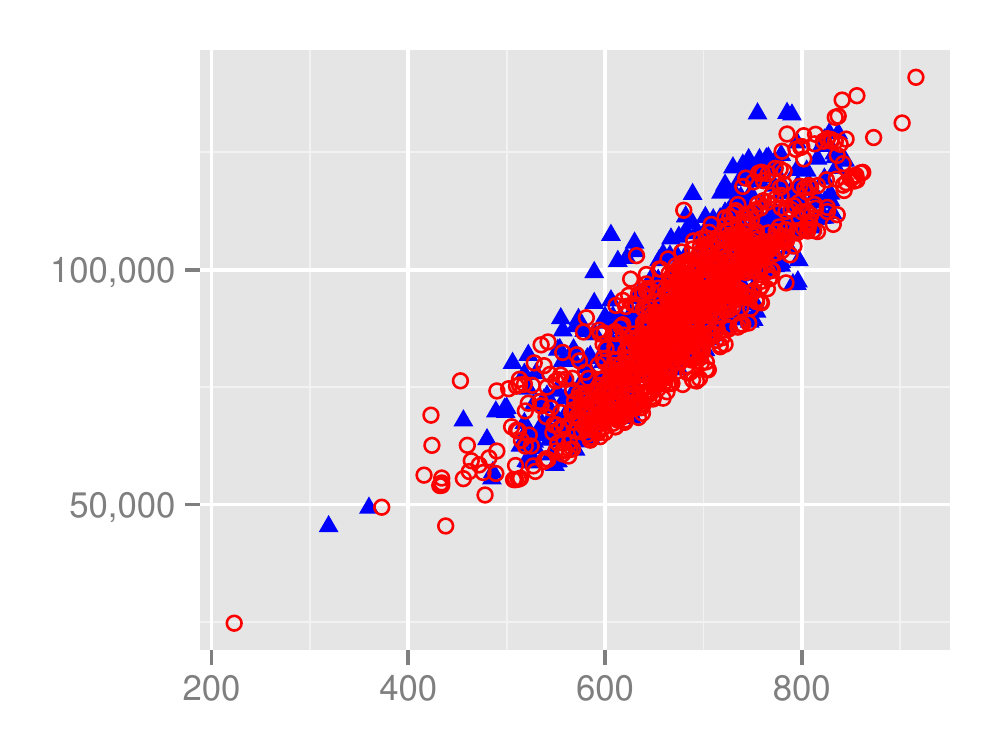}
\par\end{centering}

}
\par\end{centering}

\begin{centering}
\subfigure[Europe]{\begin{centering}
\includegraphics[scale=0.6]{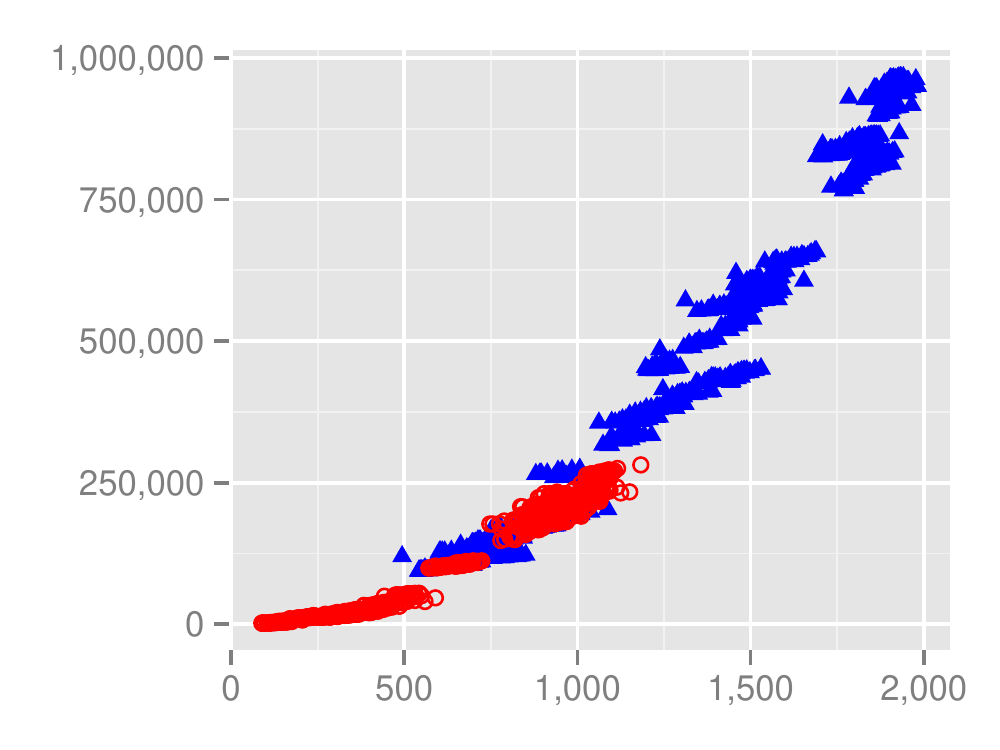}
\par\end{centering}

}
\par\end{centering}

\caption{The number of vertices (horizontal) vs the number of arcs (vertical)
in the search space of 1000 random vertices. The red hollow circles
is KaHIP and the blue filled triangles is Metis.}
\label{fig:ss-size}
\end{figure}

\begin{table}
\tbl{Elimination tree characteristics. Note that unlike in Table~\ref{tab:ch-sizes},
these values are exact and not sampled over a random subset of vertices. We also report upper bounds on the treewidth of the input graphs, after dropping the directions of arcs.
\label{tab:elimination-tree-and-treewidth}}{%
\begin{tabular}{llrrrrr}
\toprule
 &  & \multicolumn{2}{c}{\#\,Children} & \multicolumn{2}{c}{Height} & \multirow{2}{*}[-5pt]{\shortstack{Upper bound\\of Treewidth}}\\
 \cmidrule(lr){3-4}\cmidrule(lr){5-6}
Instance & Order  & avg. & max. & avg. & max.\\
\midrule
\multirow{2}{*}{Karlsruhe} & Metis & 1 & 5 & 163.48 & 211 & 92\\
 & KaHIP & 1 & 5 & 142.19 & 201 & 72\\
\addlinespace
\multirow{2}{*}{TheFrozenSea} & Metis & 1 & 3 & 675.61 & 858 & 282\\
 & KaHIP & 1 & 3 & 676.71 & 949&287\\
\addlinespace
\multirow{2}{*}{Europe} & Metis & 1 & 8 & 1283.45 & 2017 &876\\
 & KaHIP & 1 & 7 & 654.07 & 1232 & 479\\
\bottomrule
\end{tabular}
}%
\end{table}

\begin{table}
\tbl{Detailed analysis of the size of CHs. We evaluate uniform, random and distance weights on the Karlsruhe input graph.
Random weights are sampled from $[0,10000]$. The distance weight is
the straight distance along a perfect Earth sphere's surface. All
weights respect one-way streets of the input graph.\label{tab:ch-sizes-alternative-metric}}{%
\begin{tabular}{ccllrrr}
\toprule
 &  &  & \multirow{2}{*}[-5pt]{\shortstack{Witness\\search}} &  & \multicolumn{2}{c}{Avg. weighted upw. search space}\\
 \cmidrule(lr){6-7}
Instance  & Metric  & Order &  & \#\,Upward arcs & \#\,Vertices  & \#\,Arcs \\
\midrule
\multirow{15}{*}{Karlsruhe} & \multirow{5}{*}{Distance} 
 & \multirow{3}{*}{MetDep} %
 & none & 8\,000\,880 & 3\,276 & 4\,797\,224\\
 &  &  & heuristic & 295\,759 & 283 & 2\,881\\
 &  &  & perfect & 295\,684 & 281 & 2\,873\\
 \addlinespace
 &  & Metis & perfect & 382\,905 & 159 & 3\,641\\
 &  & KaHIP & perfect & 441\,998 & 141 & 2\,983\\
\cmidrule(l){2-7}
 & \multirow{5}{*}{Uniform} 
& \multirow{3}{*}{MetDep} %
& none & 5\,705\,168 & 2\,887 & 3\,602\,407\\
 &  &  & heuristic & 272\,711 & 151 & 808\\
 &  &  & perfect & 272\,711 & 151 & 808\\
\addlinespace
 &  & Metis & perfect & 363\,310 & 153 & 2\,638\\
 &  & KaHIP & perfect & 426\,145 & 136 & 2\,041\\
\cmidrule(l){2-7}
 & \multirow{5}{*}{Random} 
& \multirow{3}{*}{MetDep} %
& none & 6\,417\,960 & 3\,169 & 4\,257\,212\\
 &  &  & heuristic & 280\,024 & 160 & 949\\
&  &  & perfect & 276\,742 & 160 & 948\\
\addlinespace
&  & Metis & perfect & 361\,964 & 154 & 2\,800\\
&  & KaHIP & perfect & 424\,999 & 138 & 2\,093\\
\midrule
\multirow{3}{*}{Europe} & \multirow{3}{*}{Distance} & MetDep & heuristic & 39\,886\,688 & 4\,661 & 133\,151\\
 &  & Metis & perfect & 53\,505\,231 & 1\,257 & 178\,848\\
 &  & KaHIP & perfect & 60\,692\,639 & 644 & 62\,014\\
\bottomrule
\end{tabular}
}%
\end{table}

In Table~\ref{tab:ch-sizes} we report the resulting CH sizes for
various approaches. Computing a CH on Europe\emph{ without witness
search} with the metric-dependent order is infeasible even using the Contraction
Graph data structure. This is even true if we only want to count the number of arcs:
We aborted calculations after several days. We can however say with
certainty that there are at least $1.3\times10^{12}$ arcs in the
CH and the maximum upward vertex degree is at least $1.4\times10^{6}$.
As the original graph has only $4.2\times10^{7}$ arcs, it is safe
to assume that using this order it is impossible to achieve a speedup
compared to Dijkstra's algorithm on the input graph. However, on the
Karlsruhe graph we can actually compute the CH without witness search
and perform a perfect witness search. The numbers show that the heuristic
witness search employed by \cite{gssv-erlrn-12} is nearly optimal.
Furthermore, the numbers clearly show that using metric-dependent orders in
a metric-independent setting, \ie, without witness search, results
in unpractical CH sizes. However, they also show that a metric-dependent order
exploiting the weight structure dominates ND-orders. In Figure~\ref{fig:ss-size} we plot the number
of arcs in the search space vs the number of vertices. The plots show
that the KaHIP order significantly outperforms the Metis order on
the road graphs whereas the situation is a lot less clear on the game
map where the plots suggest nearly a tie.
KaHIP only slightly outperforms Metis, when using a perfect customization.
 Table~\ref{tab:elimination-tree-and-treewidth}
examines the elimination tree. Note that the height of the elimination
tree corresponds\footnote{The numbers in Table~\ref{tab:ch-sizes} and Table~\ref{tab:elimination-tree-and-treewidth} deviate a little because the search spaces in the former table are sampled while in the latter we compute precise values.} to the number of vertices in the (undirected) search
space. As the ratio between the maximum and the average height is
only about 2, we know that no special vertex exists that has a search
space significantly differing from the numbers shown in Table~\ref{tab:elimination-tree-and-treewidth}.
The elimination tree has a relatively small height compared to the
number of vertices in~$G$.

The treewidth of a graph is a measure widely used in theoretical computer
science and thus interesting on its own. The notion of treewidth
is deeply coupled with the notion of chordal super graphs and vertex
separators. See~\cite{bk-tiu-10} for details. The authors show in their Theorem~6
that the maximum upward degree~$d_{u}(v)$ over all vertices~$v$
in~$G_{\pi}^{\wedge}$ is an upper bound to the treewidth of a graph~$G$.
This theorem yields a straightforward algorithm that gives us the
upper bounds presented in Table~\ref{tab:elimination-tree-and-treewidth}. 

Interestingly these numbers correlate with our other findings: The
difference between the bounds on the road graphs reflect that the
KaHIP orders are better than Metis orders. On the game map there is
nearly no difference between Metis and KaHIP, which is in accordance with all
other performance indicators. The fact that the treewidth grows with
the graph size reflects that the running times are not independent
of the graph size. These numbers strongly suggest that road graphs
are not part of a graph class of constant treewidth. However, fortunately,
the treewidth grows sub-linearly. Our findings from Figure~\ref{fig:separator-sizes}
suggest that assuming a $O(\sqrt[3]{n})$ treewidth for road graphs
of $n$ vertices might come close to reality. 

In Table~\ref{tab:ch-sizes-alternative-metric} we evaluate the witness
search performances for different metrics. It turns out that the distance
metric is the most difficult one of the tested metrics. That the distance
metric is more difficult than the travel time metric is well known.
However it surprised us, that uniform and random metrics are easier
than the distance metric. We suppose that the random metric contains
a few very long arcs that are nearly never used. These could just
as well be removed from the graph resulting in a thinner graph with
nearly the same shortest path structure. The CH of a thinner graph
with a similar shortest path structure naturally has a smaller size. To explain
why the uniform metric behaves more similar to the travel time metric
than to the distance metric we have to realize that highways do not have many
degree~2 vertices in the input graph. Note that for different data sources this assumption might not hold. Highways
are therefore also preferred by the uniform metric. We expect
an instance with more degree-2 vertices on highways to behave differently. Interestingly the heuristic
witness search is perfect for a uniform metric. We expect this effect
to disappear on larger graphs. 

\begin{figure}
\begin{centering}
\subfigure[Karlsruhe/KaHIP]{\begin{centering}
\includegraphics[scale=0.6]{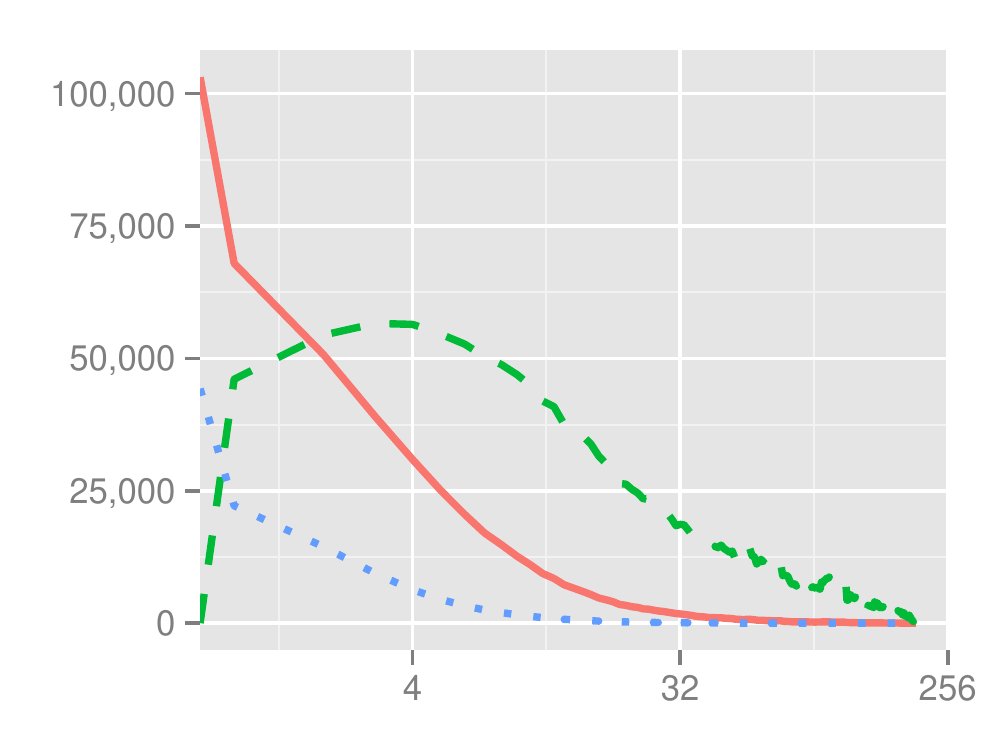}
\par\end{centering}

}~\subfigure[Karlsruhe/Metis]{\begin{centering}
\includegraphics[scale=0.6]{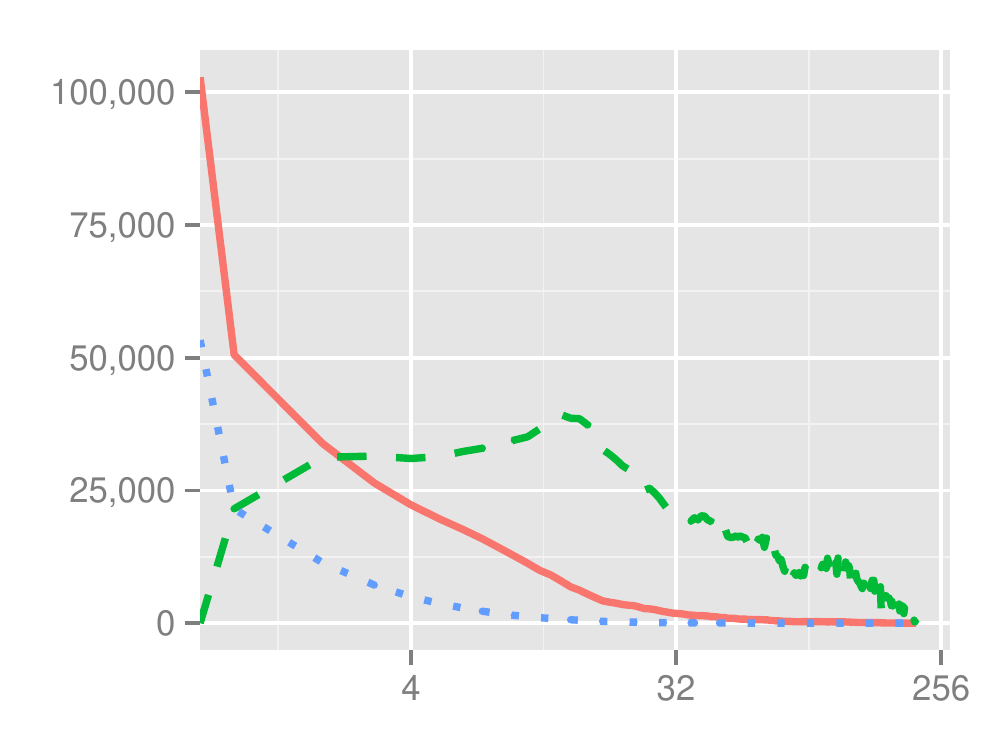}
\par\end{centering}

}
\par\end{centering}

\begin{centering}
\subfigure[TheFrozenSea/KaHIP]{\begin{centering}
\includegraphics[scale=0.6]{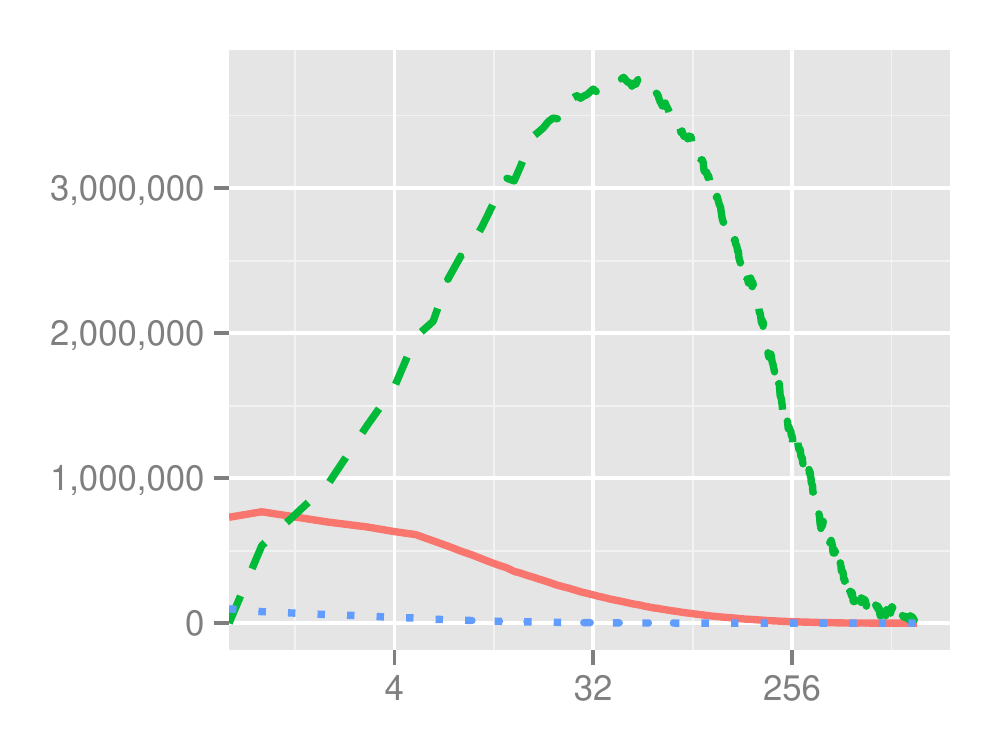}
\par\end{centering}

}~\subfigure[TheFrozenSea/Metis]{\begin{centering}
\includegraphics[scale=0.6]{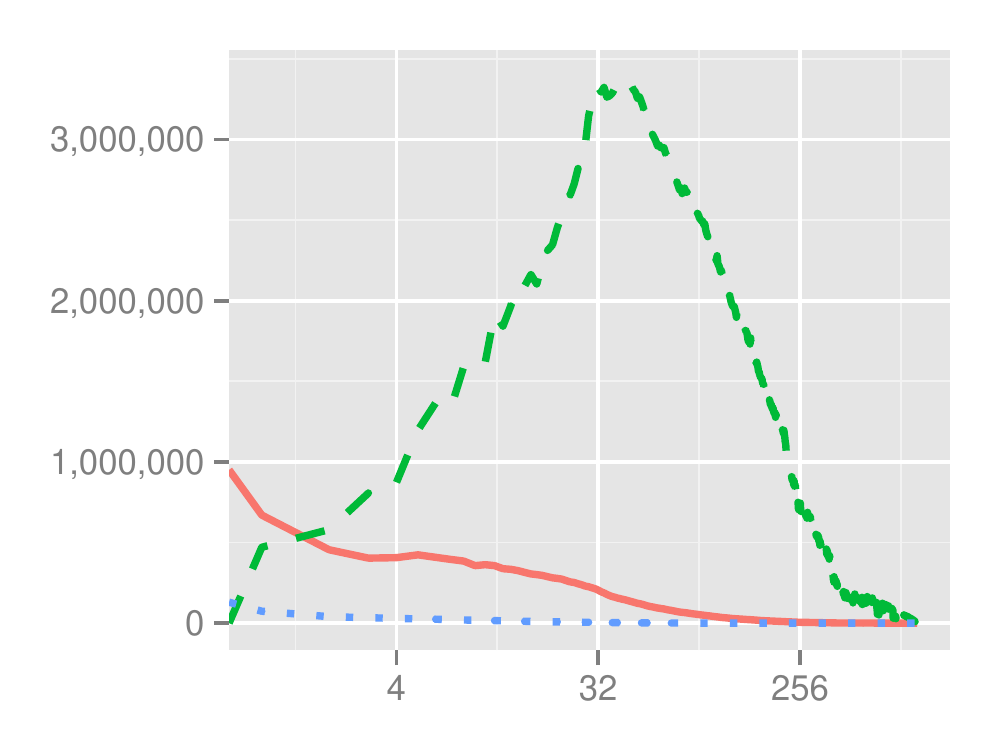}
\par\end{centering}

}
\par\end{centering}

\begin{centering}
\subfigure[Europe/KaHIP]{\begin{centering}
\includegraphics[scale=0.6]{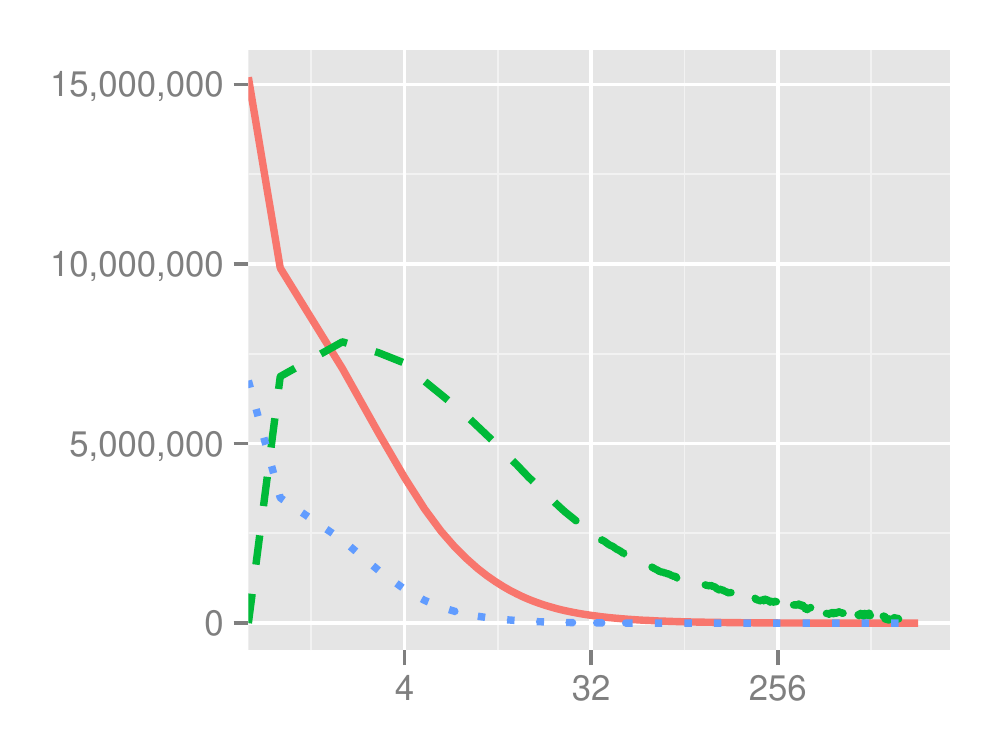}
\par\end{centering}

}~\subfigure[Europe/Metis]{\begin{centering}
\includegraphics[scale=0.6]{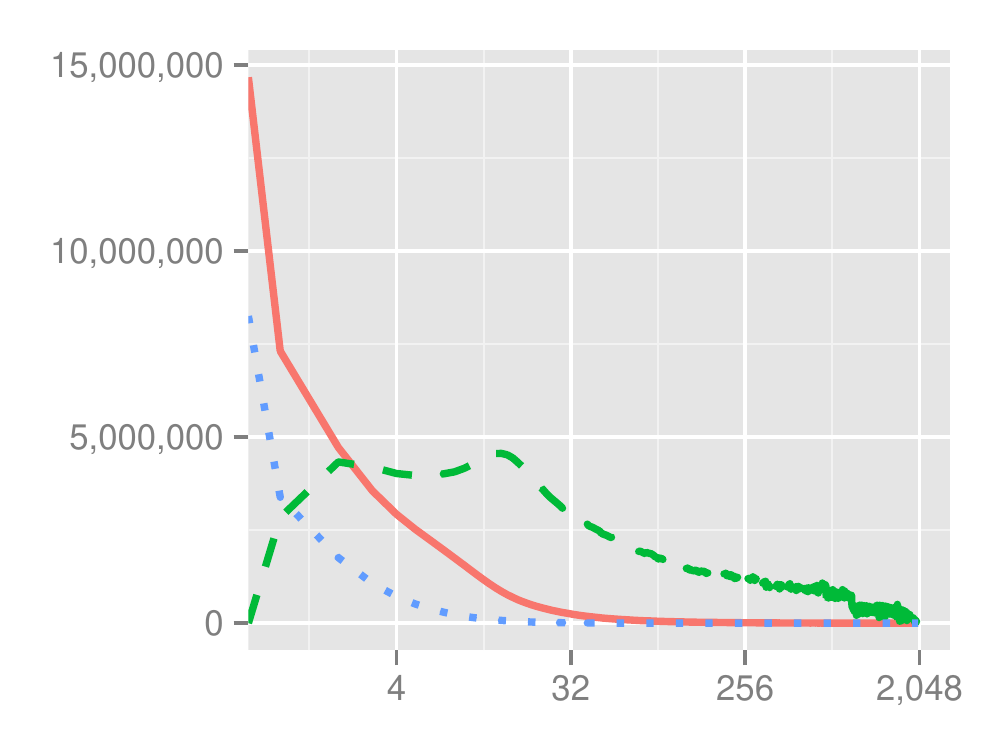}
\par\end{centering}

}
\par\end{centering}

\caption{The number of vertices~(y-axis) per level~(x-axis) is the blue dotted line. The number of arcs departing
in each level is the red solid line, and the number of lower triangles in each level is the green
dashed line. Warning: In contrast to Figure~\ref{fig:level-time}
these figures have a logarithmic $x$-scale.}
\label{fig:level-sizes}

\end{figure}

Recall that a CH is a DAG, and in DAGs each vertex can be assigned
a level. If a vertex can be placed in several levels we put it in
the lowest level. Figure~\ref{fig:level-sizes} illustrates the amount
of vertices and arcs in each level of a CH. The many highly ranked
extremely thin levels are a result of the top level separator clique:
Inside a clique every vertex must be on its own level. A few big separators
therefore significantly increase the level count.

\subsection{Triangle Enumeration}

\begin{figure}[h!]
\begin{centering}
\subfigure[Karlsruhe/KaHIP]{\begin{centering}
\includegraphics[scale=0.6]{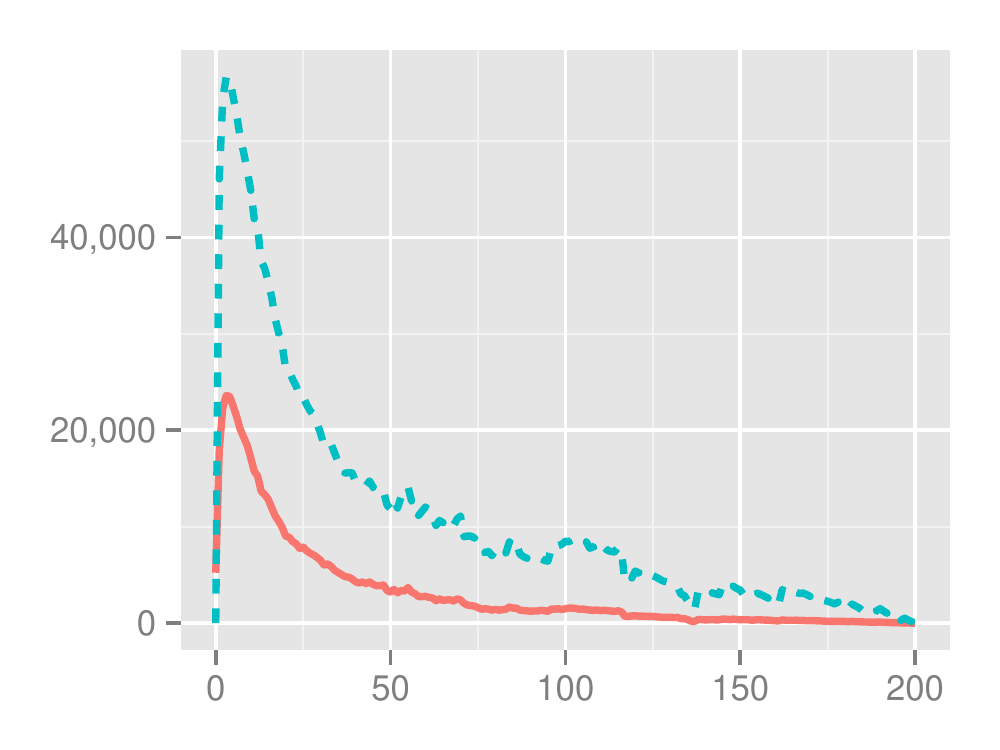}
\par\end{centering}

}~\subfigure[Karlsruhe/Metis]{\begin{centering}
\includegraphics[scale=0.6]{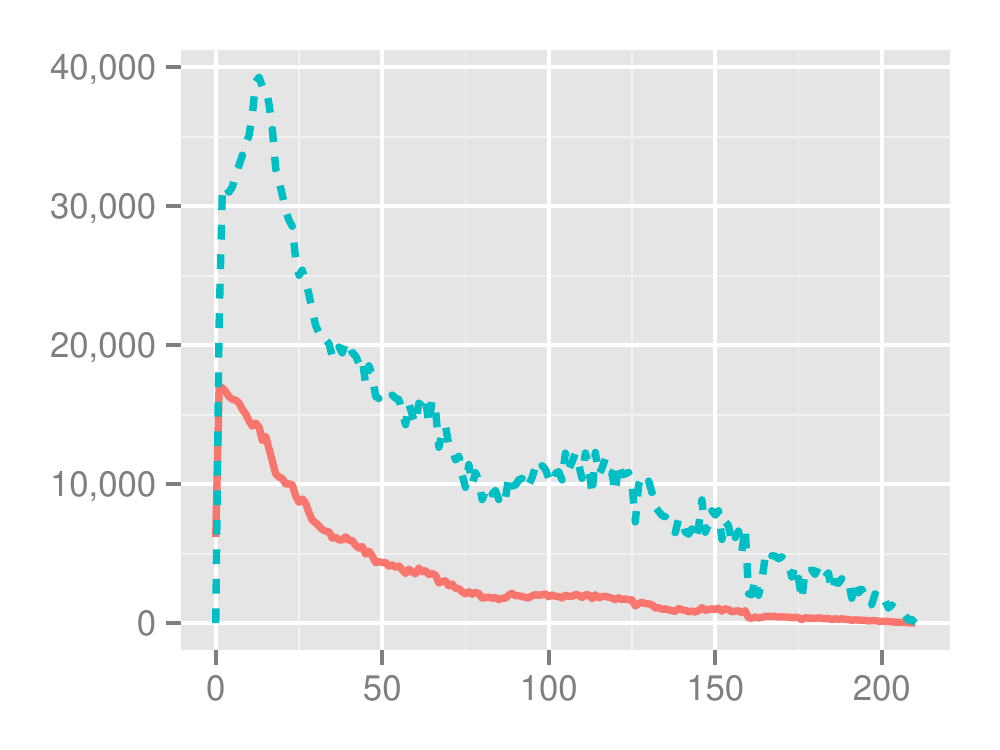}
\par\end{centering}

}
\par\end{centering}

\begin{centering}
\subfigure[TheFrozenSea/KaHIP]{\begin{centering}
\includegraphics[scale=0.6]{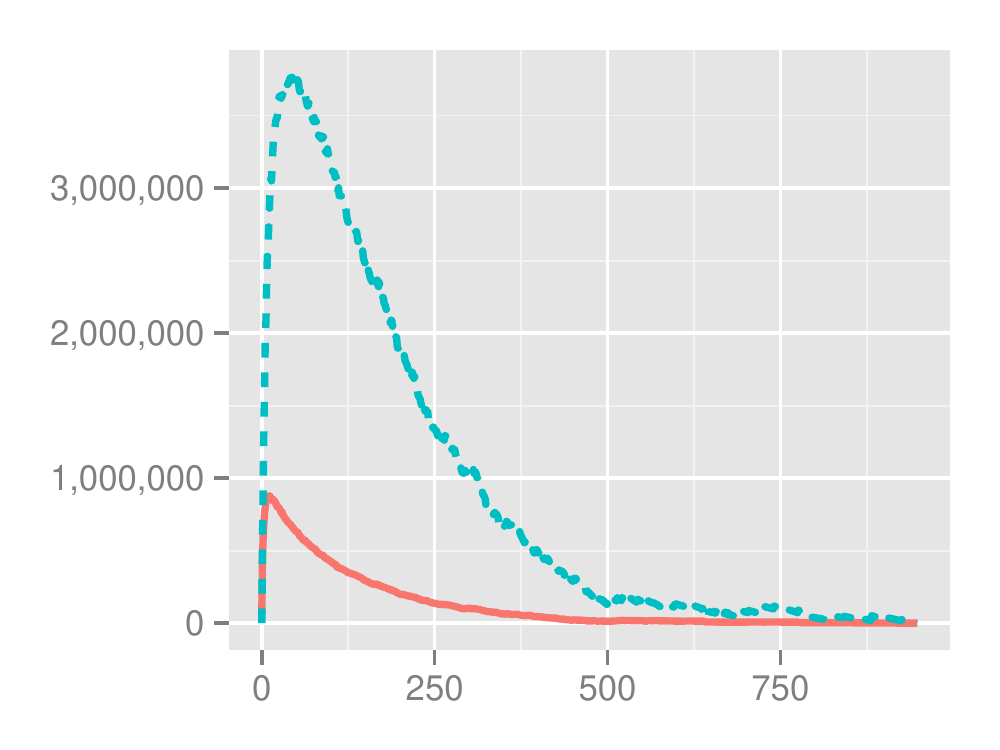}
\par\end{centering}

}~\subfigure[TheFrozenSea/Metis]{\begin{centering}
\includegraphics[scale=0.6]{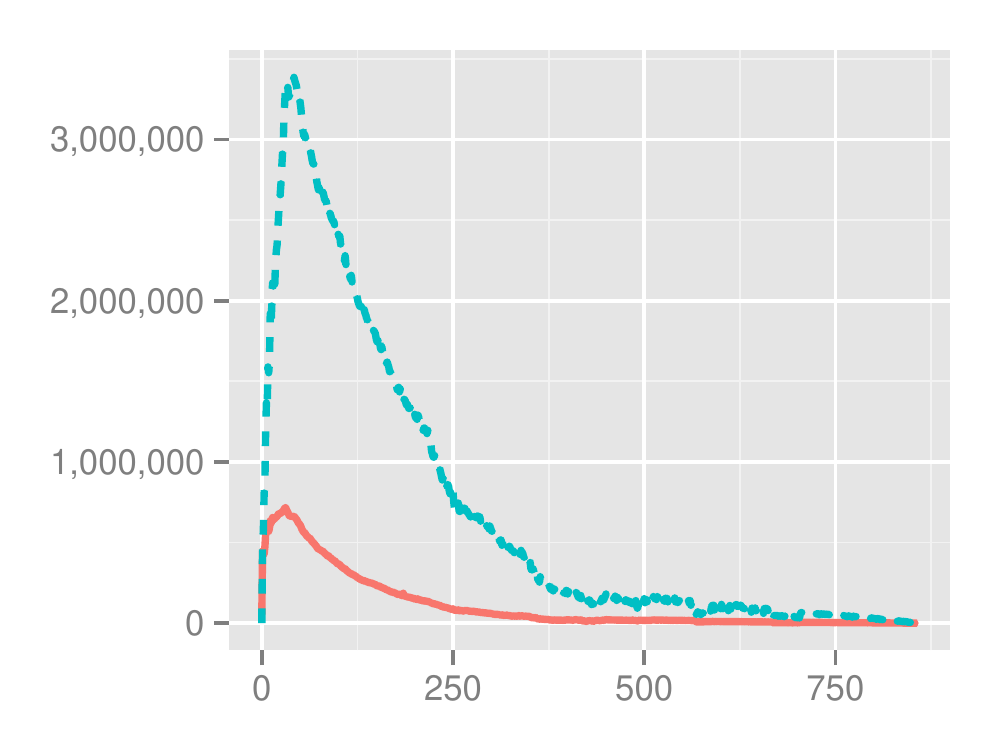}
\par\end{centering}

}
\par\end{centering}

\begin{centering}
\subfigure[Europe/KaHIP]{\begin{centering}
\includegraphics[scale=0.6]{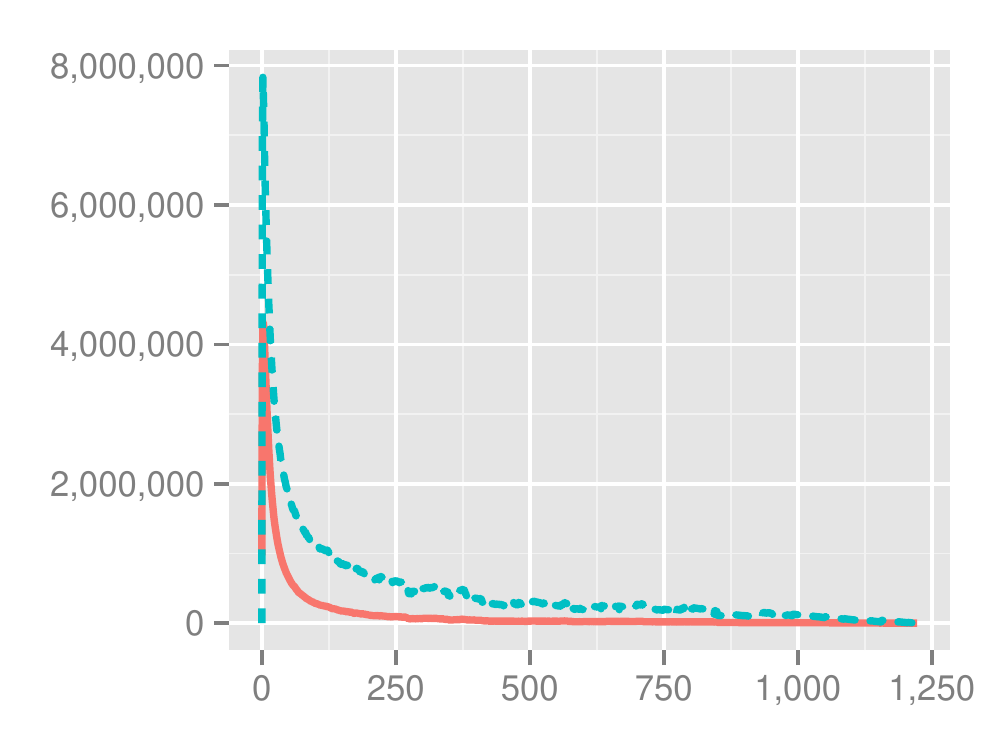}
\par\end{centering}

}~\subfigure[Europe/Metis]{\begin{centering}
\includegraphics[scale=0.6]{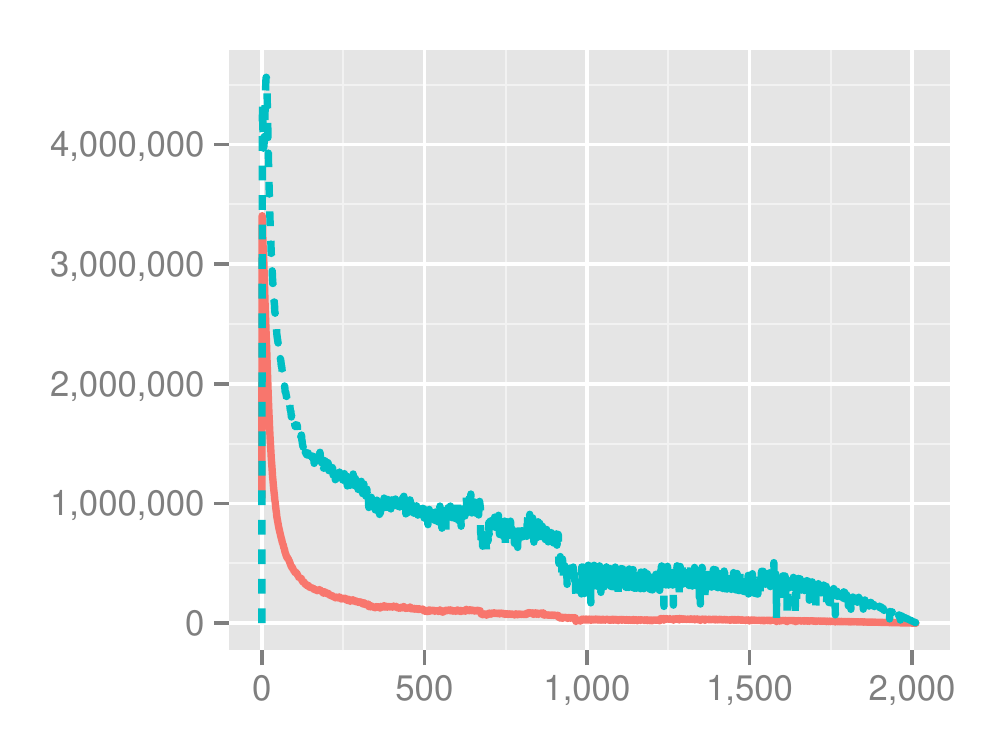}
\par\end{centering}

}
\par\end{centering}

\caption{The number of lower triangles~(y-axis) per level~(x-axis) is the blue dashed line, and the
time needed to enumerate all of them per level is the red solid line. The
time unit is 100 nanoseconds. If the time curve thus rises to 1\,000\,000
on the plot the algorithm needs 0.1 seconds. Warning: In contrast
to Figure~\ref{fig:level-sizes} these figures do not have a logarithmic
$x$-scale.}
\label{fig:level-time}
\end{figure}

\begin{table}
\tbl{Precomputed triangles. As show in Section \ref{sec:enum-triangles},
the memory needed is proportional to $2t+m+1$, where $t$ is the triangle
count and $m$ the number of arcs in the CH. We use 4 byte integers.
We report $t$ and $m$ for precomputing all levels~(``full'')
and all levels below a reasonable threshold level~(``partial'').
We further indicate how much percent of the total unaccelerated enumeration
time is spent below the given threshold level. We chose the threshold
level such that this factor is about $33\,\%$.\label{tab:triangle-size}}{%
\begin{tabular}{clrrrrrr}
\toprule
 &  & \multicolumn{2}{c}{Karlsruhe} & \multicolumn{2}{c}{TheFrozenSea} & \multicolumn{2}{c}{Europe}\\
 \cmidrule(lr){3-4}\cmidrule(lr){5-6}\cmidrule(lr){7-8}
 &  & Metis & KaHIP & Metis & KaHIP & Metis & KaHIP\\
\midrule 
\multirow{4}{*}{\begin{sideways}
full~
\end{sideways}} & \#\,Triangles {[}10\textthreesuperior{}{]} & 2\,590 & 2\,207 & 601\,846 & 864\,041 & 1\,409\,250 & 578\,247\\
 & \#\,CH arcs {[}10\textthreesuperior{}{]} & 478 & 528 & 21\,067 & 25\,100 & 70\,070 & 73\,920\\
 & Memory {[}MB{]} & 22 & 19 & 4\,672 & 6\,688 & 11\,019 & 4\,694\\
\midrule 
\multirow{5}{*}{\begin{sideways}
partial~
\end{sideways}} & Threshold level & 16 & 11 & 51 & 54 & 42 & 17\\
 & \#\,Triangles {[}10\textthreesuperior{}{]} & 507 & 512 & 126\,750 & 172\,240 & 147\,620 & 92\,144\\
 & \#\,CH arcs {[}10\textthreesuperior{}{]} & 367 & 393 & 13\,954 & 15\,996 & 58\,259 & 59\,282\\
 & Memory {[}MB{]} & 5 & 5 & 1\,020 & 1\,375 & 1\,348 & 929\\
 & Enum.~time {[}\%{]} & 33 & 32 & 33 & 33 & 32 & 33\\
\bottomrule
\end{tabular}
}%
\end{table}

We first evaluate the running time of the adjacency-array-based triangle
enumeration algorithm. Figure~\ref{fig:level-time} clearly shows
that most time is spent enumerating the triangles of the lower levels.
This justifies our suggestion to only precompute the triangles for
the lower levels as these are the levels were the optimization is
most effective. However, precomputing more levels does not hurt if
enough memory is available. We propose to determine the threshold
level up to which triangles are precomputed based on the size of the
available unoccupied memory. On modern server machines such as our
benchmarking machine there is enough memory to precompute all levels.
The memory consumption is summarized in Table~\ref{tab:triangle-size}.
However, note that precomputing all triangles is prohibitive in the
game scenario as less available memory should be expected. 

\subsection{Customization}

\begin{table}
\tbl{Customization performance. We report the time needed to compute a maximum customized metric given an initial pair of upward and downward metrics. We show the impact of enabling SSE, precomputing triangles (Pre.\ trian.), multi-threading (\#\,Thr.), and customizing several metric pairs at once.
\label{tab:customization-all}}{%
\begin{tabular}{llrrrrrrrr}
\toprule 
 & & & & \multicolumn{2}{c}{Karlsruhe} & \multicolumn{2}{c}{TheFrozenSea} & \multicolumn{2}{c}{Europe}\\
\cmidrule(lr){5-6}\cmidrule(lr){7-8}\cmidrule(lr){9-10}
   & Pre. & & \#\,Metrics & Metis & KaHIP& Metis & KaHIP & Metis & KaHIP\\
  SSE & trian. & \#\,Thr. & Pairs  & time [s] & time [s] & time [s] & time [s] & time [s] & time [s]\\
\midrule 
 no & none & 1 & 1 & 0.0567 & 0.0468& 7.88 & 10.08& 21.90 & 10.88\\
  yes & none & 1 & 1 & 0.0513 & 0.0427 & 7.33 & 9.34 & 19.91 & 9.55\\
  yes & all & 1 & 1 & 0.0094 & 0.0091 & 3.74 & 3.75 & 7.32 & 3.22\\
  yes & all & 16 & 1 & 0.0034 & 0.0035 & 0.45 & 0.61 & 1.03 & 0.74\\
  yes & all & 16 & 2 & 0.0035 & 0.0033 & 0.66 & 0.76 & 1.34 & 1.05\\
  yes & all & 16 & 4 & 0.0040 & 0.0048 & 1.19 & 1.50 & 2.80 & 1.66\\
\bottomrule
\end{tabular}
}%
\end{table}

In Table~\ref{tab:customization-all} we report the times needed
to compute a customized metric using the basic customization algorithm. A first observation
is that on the road graphs the KaHIP order leads to a faster customization
whereas on the game map Metis dominates. Using all optimizations presented
we customize Europe in below one second. When amortized%
\footnote{We refer to a server scenario of multiple active users that require
simultaneous customization, \eg, due to traffic updates.%
}, we even achieve 415\,ms which is only slightly above the non-amortized
347\,ms reported in~\cite{dw-fcrn-13} for CRP. Note that their experiments
were run on a different machine with a faster clock but $2\times6$
instead of $2\times8$ cores, while using a turn-aware data structure, making an exact comparison difficult.

Previous works have tried to accelerate the preprocessing phase of the original two-phase CH to the point that it can be used in a similar scenario as our technique. 
A fast preprocessing phase can be viewed as form of customization phase.
In~\cite{gssv-erlrn-12} a sequential preprocessing time of 451\,s was reported.
This compares best to our 9.5\,s sequential customization time. 
Note that the machine on which the 451\,s were measured is slower than our machine. 
However, the gap in performance is large enough to conclude that we achieve a significant speedup.
Furthermore, \cite{adgw-hhlsp-12} report a CH preprocessing time of 2\,min when parallelized on 12 cores. This compares best against our 415\,ms parallelized amortized customization time.
While the machine used in~\cite{adgw-hhlsp-12} is slightly older and slower than our machine and the number of cores differs (12 vs.~16), again, the performance gap is large enough to safely conclude that a significant speedup is present.
Besides the differences in running time, both CH preprocessing experiments were only performed for travel time weights.
To the best of our knowledge, nobody has been able to reproduce such running times for less well-behaved weights, such as travel distance.
For example, in~\cite{gssv-erlrn-12} the fastest CH preprocessing time reported for distance metric is 2,853\,s. 
This contrasts with CCH, for which we can prove that the customization and elimination tree query running times are completely independent of the weights used.

\begin{table}
\tbl{Detailed customization performance on TheFrozenSea. We report the time needed to compute a maximum customized metric from an initial metric.
We show the impact of exploiting undirectedness, customizing several metrics at once, reducing the bitwidth of the metric, enabling SSE, multi-threading (\#\,Thr.), and precomputing triangles (Pre.\ trian.). Note that the order in which improvements are investigated is different from Table~\ref{tab:customization-all}. Also note that results are based on the Metis order
as Table~\ref{tab:customization-all} shows that KaHIP is
outperformed.\label{tab:customization-game}}{%
\begin{tabular}{l@{\hspace{1em}}l@{\hspace{1em}}l@{\hspace{0.5em}}lrrrr}
\toprule
       &              & Metric &     &          & Precomputed   & Customization  & Amortized\\
Undirected & \#\,Metrics  & bits   & SSE & \#\,Threads & triangles & time [s] & time [s]\\
\midrule 
no  & 2 (up \& down) & 32 & no  & 1  & none & 7.88 & 7.88\\
yes & 1 & 32 & no  & 1  & none & 6.65 & 6.65\\
yes & 4 & 32 & no  & 1  & none & 9.36 & 2.34\\
yes & 4 & 32 & yes & 1  & none & 8.51 & 2.13\\
yes & 8 & 16 & yes & 1  & none & 8.52 & 1.06\\
yes & 8 & 16 & yes & 2  & none & 5.00 & 0.63\\
yes & 8 & 16 & yes & 2  & all  & 2.16 & 0.27\\
yes & 8 & 16 & yes & 16 & all  & 0.63 & 0.08\\
\bottomrule
\end{tabular}
}%
\end{table}

Unfortunately, the optimizations illustrated in Table~\ref{tab:customization-all}
are pretty far from what is possible with the hardware normally available in a game scenario. Regular
PCs do not have 16 cores and one cannot clutter up the whole RAM with
several GB of precomputed triangles. We therefore ran additionally experiments with different parameters and report the results 
in Table~\ref{tab:customization-game}. The experiments 
show that it is possible to fully customize TheFrozenSea in an amortized%
\footnote{We refer to a multiplayer scenario, where, \eg, fog of war requires
player-specific simultaneous customization.%
} time of 1.06s without precomputing triangles or using multiple cores.
However a whole second is still too slow to be usable, as graphics,
network and game logic also require resources. We therefore evaluated
the time needed by partial updates as described in Section~\ref{sec:partial-updates}.
We report our results in Table~\ref{tab:partial-updates}. The median,
average and maximum running times significantly differ. There are
a few arcs that trigger a lot of subsequent changes whereas for most
arcs a weight change has nearly no effect. The explanation is that
highway arcs and choke point arcs are part of many shortest paths
and thus updating such an arc triggers significantly more changes.

Finally, we report the running times of the perfect customization algorithm
in Table~\ref{tab:perfect-customization}.
The required running time is about 3 times the running time needed by the basic customization.
Recall that the basic customization in essence enumerates all lower triangles, \ie, it visits every triangle once, while the perfect customization also enumerates all intermediate and upper triangles, \ie, it visits every triangle three times.

\begin{table}
\tbl{Partial update performance. We report time required in milliseconds and number of arcs changed for partial metric updates.
We report median, average and maximum over $10000$ runs. In each
run we change the upward and the downward weight of a single random
arc in $G$ to
random values in $[0,10^{5}]$. The metric is reset to initial
state between runs. Timings are sequential without SSE. No triangles
were precomputed.\label{tab:partial-updates}}{%

\begin{tabular}{llrrrrrr}
\toprule 
 &  & \multicolumn{3}{c}{Arcs removed from queue} & \multicolumn{3}{c}{Partial update time [ms]}\\
 \cmidrule(lr){3-5}\cmidrule(lr){6-8}
 &  & med. & avg. & max. & med. & avg. & max.\\
\midrule 
\multirow{2}{*}{Karlsruhe} & Metis &  2 & 3.5 & 857 & 0.001 & 0.003 & 0.9 \\
 & KaHIP & 3 & 3.7 & 466 & 0.001 & 0.002 & 1.0 \\
\addlinespace
\multirow{2}{*}{TheFrozenSea} & Metis & 6 & 311.7 & 14494 & 0.008 & 1.412 & 100.2 \\
 & KaHIP & 6 & 343.1 & 19417 & 0.008 & 1.490 & 164.6\\
\addlinespace
\multirow{2}{*}{Europe} & Metis & 2 & 10.2 & 14188 & 0.005 & 0.052 & 134.6 \\
 & KaHIP & 3 & 9.8 & 8202 & 0.008 & 0.045 & 81.0\\
\bottomrule
\end{tabular}
}%
\end{table}

\begin{table}
\tbl{Perfect Customization. We report the time required to turn an initial metric into a perfect
metric. Runtime is given in seconds, without use of SSE.\label{tab:perfect-customization}}{%
\begin{tabular}{rrrrrrrr}
\toprule
  & & \multicolumn{2}{c}{Karlsruhe} & \multicolumn{2}{c}{TheFrozenSea} & \multicolumn{2}{c}{Europe}\\
\cmidrule(lr){3-4}\cmidrule(lr){5-6}\cmidrule(lr){7-8}
 \#\,Thr. & Pre. trian. & Metis & KaHIP & Metis & KaHIP & Metis & KaHIP\\
\midrule 
1 & none & 0.15 & 0.13 & 30.54  & 33.76 & 67.01 & 32.96 \\
16 & none & 0.03 & 0.02 & 3.26  & 4.37& 14.41 & 5.47 \\
1 & all & 0.05 & 0.05 & 8.95 & 12.51 & 23.93 & 10.75 \\
16 & all & 0.01 & 0.01 & 1.93 & 2.29 & 3.50 & 2.35 \\
\bottomrule 
\end{tabular}
}%
\end{table}

\subsection{Query Performance}
\label{sec:experiments:query}

\begin{table}
\tbl{Contraction Hierarchies query performance. We report the query time \emph{in microseconds} as well as the search space visited, averaged over $10^6$ queries with source and target vertices picked uniformly at random. We use ``visited'' to differentiate from the maximum reachable search space given in Table~\ref{tab:ch-sizes}.
For query algorithms that use stalling, we additionally report the number of vertices stalled, as well as the number of arcs touched during the stalling test. 
Note that stalled vertices are not counted as visited.
All reported vertex and arc counts only refer to the forward search.
We evaluate several algorithmic variants. Each variant is composed of an input graph, a contraction order, and whether a witness search is used.
``+w'' means that a witness search is used, whereas ``-w'' means that no witness search is used.
``MetDep+w'' corresponds to the original CHs. 
The metrics used for ``-w'' are directed and maximum.
Results for Karlsruhe and the TheFrozenSea instances are in this table.
Table~\ref{tab:query-experiments-2} contains the results for the Europe instance.%
\label{tab:query-experiments}}{%
\begin{tabular}{ccllrrrrr}
\toprule 
 &  & &       & \multicolumn{2}{c}{Visited up. search space}          & \multicolumn{2}{c}{Stalling} & Time \\
 \cmidrule(lr){5-6} \cmidrule(lr){7-8}
 Instance & Metric & Variant & Algorithm & \#\,Vertices &  \#\,Arcs  & \#\,Vertices & \#\,Arcs & [\musec]\\

\midrule 
\multirow{28}{*}{\begin{sideways}Karlsruhe\end{sideways}} & \multirow{14}{*}{\begin{sideways}Travel-Time\end{sideways}}
 & \multirow{2}{*}{MetDep+w} & Basic & 81 & 370 & --- & --- & 17\\
 & &  & Stalling & 43 & 182 & 167 & 227 & 16\\
\addlinespace
 &  & \multirow{3}{*}{Metis-w} & Basic & 138 & 5\,594 & --- & --- & 62\\
 &  &  & Stalling & 104 & 4\,027 & 32 & 4\,278 & 67\\
 &  &  & Tree & 164 & 6\,579 & --- & --- & 33\\
\addlinespace
 &  & \multirow{3}{*}{KaHIP-w} & Basic & 120 & 4\,024 & --- & --- & 48\\
 &  &  & Stalling & 93 & 3\,051 & 26 & 3\,244 & 55\\
 &  &  & Tree & 143 & 4\,723 & --- & --- & 25\\

\addlinespace
 &  & \multirow{3}{*}{Metis+w} & Basic & 127 & 2\,432 & --- & --- & 32 \\
 &  &  & Stalling & 104 & 2\,043 & 19 & 2\,146 & 41\\
 &  &  & Tree & 164 & 2\,882 & --- & --- & 17\\
\addlinespace
 &  & \multirow{3}{*}{KaHIP+w} & Basic & 114 & 1\,919 & --- & --- & 27\\
 &  &  & Stalling & 93 & 1\,611 & 18 & 1\,691 & 35\\
 &  &  & Tree & 143 & 2\,198 & --- & --- & 14\\

\cmidrule(l){2-9} 
 & \multirow{14}{*}{\begin{sideways}Distance\end{sideways}} 
 & \multirow{2}{*}{MetDep+w} & Basic & 208 & 1978 & --- & --- & 57\\
 &  &  & Stalling & 70 & 559 & 46 & 759 & 35\\
\addlinespace
 & & \multirow{3}{*}{Metis-w} & \multirow{1}{*}{Basic} & 142 & 5\,725 & --- & --- & 65\\
 &  &  & Stalling & 115 & 4\,594 & 26 & 4\,804 & 75\\
 &  &  & Tree & 164 & 6\,579 & --- & --- & 33\\
\addlinespace
 &  & \multirow{3}{*}{KaHIP-w} & Basic & 123 & 4\,117 & --- & --- & 50\\
 &  &  & Stalling & 106 & 3\,480 & 17 & 3\,564 & 59\\
 &  &  & Tree & 143 & 4\,723 & --- & --- & 26\\

\addlinespace
 &  & \multirow{3}{*}{Metis+w} & Basic & 138 & 3\,221 & --- & --- & 39\\
 &  &  & Stalling & 115 & 2\,757 & 21 & 2\,867 & 50\\
 &  &  & Tree & 164 & 3\,604 & --- & --- & 21\\
\addlinespace
 &  & \multirow{3}{*}{KaHIP+w} & Basic & 122 & 2\,626 & --- & --- & 32\\
 &  &  & Stalling & 106 & 2\,302 & 14 & 2\,350 & 43\\
 &  &  & Tree & 143 & 2\,956 & --- & --- & 17\\

\bottomrule
\end{tabular}
}%
\end{table}

\begin{table}
\tbl{Contraction Hierarchies query performance. Continuation of Table~\ref{tab:query-experiments}.
\label{tab:query-experiments-2}}{%
\begin{tabular}{ccllrrrrr}
\toprule 
 &  & &       & \multicolumn{2}{c}{Visited up. search space}          & \multicolumn{2}{c}{Stalling} & Time \\
 \cmidrule(lr){5-6} \cmidrule(lr){7-8}
 Instance & Metric & Variant & Algorithm & \#\,Vertices &  \#\,Arcs  & \#\,Vertices & \#\,Arcs & [\musec]\\

\midrule 
\multirow{14}{*}{\begin{sideways}TheFrozenSea\end{sideways}} & \multirow{14}{*}{\begin{sideways}Map-Distance\end{sideways}}
 & \multirow{2}{*}{MetDep+w} & Basic & 1\,199 & 12\,692 & --- & --- & 539\\
 &  &  & Stalling & 319 & 3\,460 & 197 & 4\,345 & 286\\
\addlinespace
 &  & \multirow{3}{*}{Metis-w} & Basic & 610 & 81\,909 & --- & --- & 608\\
 &  &  & Stalling & 578 & 78\,655 & 24 & 79\,166 & 837\\
 &  &  & Tree & 676 & 92\,144 & --- & --- & 317\\
\addlinespace
 &  & \multirow{3}{*}{KaHIP-w} & Basic & 603 & 82\,824 & --- & --- & 644\\
 &  &  & Stalling & 560 & 74\,244 & 50 & 74\,895 & 774\\
 &  &  & Tree & 674 & 89\,567 & --- & --- & 316\\

\addlinespace
 &  & \multirow{3}{*}{Metis+w} & Basic & 567 & 28\,746 & --- & --- & 243\\
 &  &  & Stalling & 474 & 25\,041 & 86 & 25\,445 & 333\\
 &  &  & Tree & 676 & 31\,883 & --- & --- & 120\\
\addlinespace
 &  & \multirow{3}{*}{KaHIP+w} & Basic & 578 & 22\,803 & --- & --- & 203\\
 &  &  & Stalling & 475 & 19\,978 & 81 & 20\,138 & 276\\
 &  &  & Tree & 674 & 24\,670 & --- & --- & 106\\
\midrule 
\multirow{28}{*}{\begin{sideways}Europe\end{sideways}} & \multirow{14}{*}{\begin{sideways}Travel-Time\end{sideways}} 
 & \multirow{2}{*}{MetDep+w} & Basic & 546 & 3\,623 & --- & --- & \,283\\
 &  &  & Stalling & 113 & 668 & 75 & 911 & 107\\
\addlinespace
 &  & \multirow{3}{*}{Metis-w} & Basic & 1\,126 & 405\,367 & --- & --- & 2\,838\\
 &  &  & Stalling & 719 & 241\,820 & 398 & 268\,499 & 2\,602\\
 &  &  & Tree & 1\,291 & 464\,956 & --- & --- & 1\,496\\
\addlinespace
 &  & \multirow{3}{*}{KaHIP-w} & Basic & 581 & 107\,297 & --- & --- & 810\\
 &  &  & Stalling & 418 & 75\,694 & 152 & 77\,871 & 857\\
 &  &  & Tree & 652 & 117\,406 & --- & --- & 413\\

\addlinespace
 &  & \multirow{3}{*}{Metis+w} & Basic & 1\,026 & 110\,590 & --- & --- & 731\\
 &  &  & Stalling & 716 & 83\,047 & 271 & 89\,444 & 951\\
 &  &  & Tree & 1\,291 & 126\,403 & --- & --- & 398\\
\addlinespace
 &  & \multirow{3}{*}{KaHIP+w} & Basic & 549 & 41\,410 & --- & --- & 305\\
 &  &  & Stalling & 418 & 33\,078 & 117 & 34\,614 & 425\\
 &  &  & Tree & 652 & 45\,587 & --- & --- & 161\\

\cmidrule(l){2-9} 
 & \multirow{14}{*}{\begin{sideways}Distance\end{sideways}} 
 & \multirow{2}{*}{MetDep+w} & Basic & 3\,653 & 104\,548 & --- & --- & 2\,662\\
 &  &  & Stalling & 286 & 7\,124 & 426 & 11\,500 & 540\\
\addlinespace
 &  & \multirow{3}{*}{Metis-w} & Basic & 1\,128 & 410\,985 & --- & --- & 3\,087\\
 &  &  & Stalling & 831 & 291\,545 & 293 & 308\,632 & 3\,128\\
 &  &  & Tree & 1\,291 & 464\,956 & --- & --- & 1\,520\\
\addlinespace
 &  & \multirow{3}{*}{KaHIP-w} & Basic & 584 & 108\,039 & --- & --- & 867\\
 &  &  & Stalling & 468 & 85\,422 & 113 & 87\,315 & 1\,000\\
 &  &  & Tree & 652 & 117\,406 & --- & --- & 426\\

\addlinespace
 &  & \multirow{3}{*}{Metis+w} & Basic & 1\,085 & 157\,400 & --- & --- & 1\,075\\
 &  &  & Stalling & 823 & 124\,472 & 247 & 127\,523 & 1\,400\\
 &  &  & Tree & 1\,291 & 177\,513 & --- & --- & 557\\
\addlinespace
 &  & \multirow{3}{*}{KaHIP+w} & Basic & 575 & 56\,386 & --- & --- & 425\\
 &  &  & Stalling & 467 & 46\,657 & 101 & 47\,920 & 578\\
 &  &  & Tree & 652 & 61\,714 & --- & --- & 214\\

\bottomrule
\end{tabular}
}%
\end{table}

We experimentally evaluated the running times of the query algorithms.
For this we ran $10^{6}$ shortest path distance queries with the
source and target vertices picked uniformly at random. The presented times are
averaged running times on a single core without SSE.

In Table~\ref{tab:query-experiments} and~\ref{tab:query-experiments-2} we compare the query running times.
The ``MetDep+w'' variant use a metric-dependent order and a non-perfect witness search in the spirit of~\cite{gssv-erlrn-12}.
The details are described in Section~\ref{sec:trad-ch-order}.
The ``Metis-w'' and ``KaHIP-w'' variants use a metric-independent order computed by Metis or KaHIP.
Only a basic customization was performed, \ie, no witness search was performed.
The ``Metis+w'' and ``KaHIP+w'' variants use the same metric-independent order but a perfect customization followed by a perfect witness search was performed.
We evaluate three query variants.
The ``Basic'' variant uses a bidirectional variant of Dijkstra's algorithm with stopping criterion.
The ``Stalling'' variant additionally uses the stall-on-demand optimization as described in Section~\ref{sec:stalling}.
Finally, we also evaluate the elimination tree query and refer to it as ``Tree''.
This query requires the existence of an elimination tree of low depth and is therefore not available for metric-dependent orders.
We ran our experiments on all three of our main benchmark instances.
Experiments on additional instances are available in Section~\ref{sec:further-instances}.
For both road graphs, we evaluate the travel-time and distance variants.
We report the average running time needed to perform a distance query, \ie, we do not unpack the paths.
We further report the average number of ``visited'' vertices in the forward search.
For the ``basic'' and ``stalling'' queries, these are the vertices removed from the queue.
For the ``tree'' query, we regard every ancestor as ``visited''.
The numbers for the backward search are analogous and therefore not reported.
We also report the average number of arcs relaxed in forward search of each query variant.
Finally, we also report the average number of vertices stalled and the average number of arcs that need to be tested in the stalling test.
Note that, a stalled vertex is not counted as ``visited''.

An important detail necessary to reproduce these results consists of reordering the vertex IDs according to the contraction order.
Preliminary experiments showed that this reordering results in better cache behavior and a speed-up of about 2 to 3 because much query time is spent on the topmost clique and this order assures that these vertices appear adjacent in memory.

As already observed by the original CH authors, we confirm that the stall-on-demand heuristic improves running times by a factor of~2--5 compared to the basic algorithm for ``greedy+w''.
Interestingly, this is not the case with any variant using a metric-independent order.
This can be explained by the density of the search spaces.
While, the number of vertices in the search spaces are comparable between metric-independent orders and metric-dependent order, the number of arcs are not comparable and thus metric-independent search spaces are denser.
As consequence, we need to test significantly more arcs in the stalling-test, which makes the test more expensive and therefore the additional time spent in the test does not make up for the time economized in the actual search.
We thus conclude that stall-on-demand is not useful, when using metric-independent orders.

Very interesting is the comparison between the elimination tree query and the basic query.
The elimination tree query always explores the whole search space.
In contrast to the basic query, it does not have a stopping criterion.
However, the elimination tree query does not require a priority queue.
It performs thus less work per vertex and arc than the basic query.
Our experiments show, that the basic query always explores large parts of the search space regardless of the stopping criterion.
The elimination tree query therefore does not visit significantly more.
A consequence of this effect is that the time spent in the priority queue outweighs the additional time necessary to explore the remainder of the search space.
The elimination tree query is therefore always the fastest among the three query types when using metric-independent orders.
Combining a perfect witness search with the elimination tree query results in the fastest queries for metric-independent orders.
However, the perfect witness search results in three times higher customization times. 
Whether, it is superior therefore depends on the specific application and the specific trade-off between customization and query running time needed.

The orders computed by KaHIP are nearly always significantly better than those produced by Metis.
However, significantly more running time must be invested in the preprocessing phase to obtain these better order.
It therefore depends on the situation which order is better.
If the running time of the preprocessing phase is relevant, then Metis seems to strike a very good balance between all criteria. 
However, if the graph topology is fixed, as we expect it to be, then the flexibility gained by using Metis is not worth the price.
Interestingly, on the game map KaHIP and Metis seem to be very close in terms of search space size.
The difference is only apparent when using the perfect customization.
For a setup with basic customization, the two orders should are nearly indistinguishable.

The metric-dependent orders with travel-time outperform the metric-independent orders.
However, it is very interesting how close the query times actually are. 
On the Europe graph, the basic query visits about the same number of vertices, regardless of whether a metric-dependent or the KaHIP order is used.
The real difference lies in the number of arcs that need to be relaxed.
This number is significantly higher with metric-independent orders.
However, the effect this has on the actual running times is comparatively slim. 
Using KaHIP without perfect witness search results in an elimination tree query that is only about 4 times slower than using the stalling query combined with metric-dependent orders.
If a perfect witness search is used then, the gap is below a factor of 2.
Further, the metric-dependent orders only win because of the stall-on-demand optimization.
The KaHIP order combined with perfect customization \emph{outperforms} the basic query combined with metric-dependent orders.

It is well-known that metric-dependent CHs work significantly better with the travel-time metric than with other less well behaved metrics such as the geographic distance.
For such metrics, the KaHIP order outperforms the metric-dependent orders.
For example the basic query with perfect customization visits less vertices and less arcs. 
This is very surprising, especially considering, that the metric-dependent orders that we computed are better than those reported in~\cite{gssv-erlrn-12}, \ie, the gap with respect to the original implementation is even larger.
However, combining the stalling query with metric-dependent orders yields the smallest number of visited vertices and relaxed arcs.
Unfortunately, combining the stalling query with metric-independent orders does not yield the same benefit and even makes the query running times worse.
Fortunately, the metric-independent orders can be combined with the elimination tree query.
As result, the fastest variant is the combination of KaHIP order, perfect witness search, and elimination tree query, which is over a factor of two faster than stalling with the metric-dependent order.
Interestingly, the later is even beaten when no perfect witness search is performed, but with a significantly lower margin.

A huge advantage of metric-independent orders compared to metric-dependent orders is that the resulting CH performs equally well regardless of the weights of the input graph.
The combination of metric-independent order, elimination tree query and basic customization results in setup, where the order in which the vertices are visited and the order in which the arcs are relaxed during the query execution does not even depend on the weights of the input graph.
It is thus impossible to construct a metric, where this setup performs badly.
This contrasts with the CH of~\cite{gssv-erlrn-12}, whose performances varies significantly depending on the input metric.

\begin{table}
\tbl{Detailed elimination tree performance without perfect witness search. We report running time \emph{in microseconds} for the elimination-tree-based query algorithms. We report the time needed to compute the LCA, the time
needed to reset the tentative distances, the time needed to relax
the arcs, the total time of a distance query, and the time needed
for full path unpacking as well as the average number of vertices
on such a path which is metric-dependent.\label{tab:elim-tree-times}}{%
\begin{tabular}{lllrrrrrr}
\toprule 
 &  &  & \multicolumn{4}{c}{Distance query}& \multicolumn{2}{c}{Path}\\
 \cmidrule(lr){4-7}\cmidrule(lr){8-9}
 &  &  & LCA & Reset &  Arc relax & Total & Unpack & Length\\
 &  &  & [\musec] & [\musec] & [\musec] & [\musec] & [\musec] & [vert.]\\
\midrule 
\multirow{4}{*}{Karlsruhe} & \multirow{2}{*}{Travel-Time} & Metis & 0.6 & 0.8 & 31.3 & 33.0 & 20.5 & \multirow{2}{*}{189.6}\\
 &  & KaHIP & 0.6 & 1.4 & 23.1 & 25.2 & 18.6 & \\
\addlinespace
 & \multirow{2}{*}{Distance} & Metis & 0.6 & 0.8 & 31.5 & 33.2 & 27.4 & \multirow{2}{*}{249.4}\\
 &  & KaHIP & 0.6 & 1.4 & 23.5 & 25.7 & 24.7 & \\
\midrule 
\multirow{2}{*}{TheFrozenSea} & \multirow{2}{*}{Map-Distance} & Metis & 2.7 & 3.1 & 310.1 & 316.5 & 220.0 & \multirow{2}{*}{596.3}\\
 &  & KaHIP & 3.0 & 3.2 & 308.7 & 315.5 & 270.8 & \\
\midrule 
\multirow{4}{*}{Europe} & \multirow{2}{*}{Travel-Time} & Metis & 4.6 & 19.0 & 1471.2 & 1496.3 & 323.9 & \multirow{2}{*}{1390.6}\\
 &  & KaHIP & 3.4 & 9.9 & 399.4 & 413.3 & 252.7 & \\
\addlinespace
 & \multirow{2}{*}{Distance} & Metis & 4.7 & 19.0 & 1494.5  & 1519.9 & 608.8 & \multirow{2}{*}{3111.0}\\
 &  & KaHIP & 3.6 & 10.0 & 411.6 & 425.8 & 524.1 & \\
\bottomrule 
\end{tabular}
}%
\end{table}

In Table~\ref{tab:elim-tree-times}, we give a more in-depth experimental analysis of the elimination tree query algorithm without perfect witness search. 
We break the running times down into the time needed to compute the least common ancestor~(LCA), the time needed to reset the tentative distances and the time needed to relax all arcs. 
We further report the total distance query time, which is in essence the sum of the former three.
We additionally report the time needed to unpack the full path. 
Our experiments show that the arc-relaxation phase clearly dominates the running times. 
It is therefore not useful to further optimize the LCA computation or to accelerate tentative distance resetting using, \eg, timestamps. 
We only report path unpacking performance without precomputed lower triangles. 
Using them would result in a further speedup with a similar speed-memory trade-off as already discussed for customization.

\subsection{Comparison with Related Work}

\begin{table}
\tbl{Comparison with related work on the DIMACS Europe instance with travel time and distance metric. We compare our approaches, CCH, CCH with amortized customization~(CCH+a), and CCH with perfect witness search~(CCH+w), with different CRP and CH implementations from the literature. We report performance of the metric-\emph{dependent} fraction of overall preprocessing, \ie, vertex ordering and contraction time for CH, customization time for CRP and CCH. We further report average query search space, including stalled vertices for CH~(which might not be included in the CH figures taken from~\cite{dgpw-crprn-13}).
We finally report running time in microseconds. If parallelized, the number of threads used is noted in parenthesis.
Since the CH performance in~\cite{gssv-erlrn-12} was evaluated on a ten year old machine (AMD Opteron 270), we obtained the source code and re-ran experiments on our hardware (Intel Xeon E5-2670) for better comparability. Also note that the latest CRP implementation by~\cite{dgpw-crprn-13}, evaluated on an Intel Xeon X5680, is turn-aware (\enabled), \ie, it uses turn tables~(set to zero in the reported experiments);  We therefore additionally take results from~\cite{dgpw-crp-11} obtained on an Intel Core-i7 920, which uses a turn-unaware implementation but parallelizes queries.
\label{tab:comparison-related-work}}{%
\setlength{\tabcolsep}{1.0ex}
\begin{tabular}{llllcrrr}
\toprule
&&&&& Metric-Dep.\\
&&&&& Prepro.& \multicolumn{2}{c}{Queries}\\
\cmidrule(lr){6-6}
\cmidrule(lr){7-8}
&&&&Turn-&Time [s] & Search Space & Time [\musec] \\
Algorithm & Implementation & Machine & Metric & aware & (\#\,Threads) & [\#\,Vertices] & (\#\,Threads) \\
\midrule
CH & \cite{gssv-erlrn-12} & Opt 270 &  Time & \disabled & 1\,809\phantom{.00}~~\phantom{0}(1)& 356 & 152~~(1)\\ %
CH & \cite{gssv-erlrn-12} &Opt 270 & Dist & \disabled & 5\,723\phantom{.00}~~\phantom{0}(1) & 1\,582 & 1\,940~~(1)\\ %
CH & \cite{gssv-erlrn-12} & E5-2670 &  Time & \disabled & 1\,075.88~~\phantom{0}(1)& 353 & 91~~(1)\\ 
CH & \cite{gssv-erlrn-12} & E5-2670 &  Dist & \disabled & 3\,547.44~~\phantom{0}(1)&  1\,714 & 1\,135~~(1)\\ %
CH & our & E5-2670 & Time & \disabled & 813.53~~\phantom{0}(1)& 375 & 107~~(1) \\ 
CH & our & E5-2670 & Dist & \disabled & 9\,390.32~~\phantom{0}(1)& 1422 & 540~~(1) \\ 
CH & \cite{dgpw-crprn-13} & X5680 & Time & \disabled & 109\phantom{.00}~~(12)& 280 & 110~~(1)\\ %
CH & \cite{dgpw-crprn-13} & X5680 & Dist & \disabled & 726\phantom{.00}~~(12)& 858 & 870~~(1)\\ %
\addlinespace
CRP & \cite{dgpw-crprn-13}& X5680 & Time & \enabled & 0.37~~(12) & 2\,766 & 1\,650~~(1)\\ %
CRP & \cite{dgpw-crprn-13} & X5680 & Dist & \enabled & 0.37~~(12) & 2\,942 & 1\,910~~(1)\\ 
CRP & \cite{dgpw-crp-11} & i7 920 & Time & \disabled & 4.7\phantom{0}~~\phantom{0}(4) & 3\,828& 720~~(2) \\ %
CRP & \cite{dgpw-crp-11} & i7 920 & Dist & \disabled & 4.7\phantom{0}~~\phantom{0}(4)& 4\,033 & 790~~(2)\\
\addlinespace
CCH & our & E5-2670 & Time & \disabled & 0.74~~(16)& 1\,303 & 413~~(1)\\
CCH & our & E5-2670 & Dist & \disabled & 0.74~~(16)& 1\,303 & 426~~(1)\\
CCH+a & our & E5-2670 & Time & \disabled & 0.42~~(16)& 1\,303 & 416~~(1)\\
CCH+a & our & E5-2670 & Dist & \disabled & 0.42~~(16)& 1\,303 & 421~~(1)\\
CCH+w & our & E5-2670 & Time & \disabled & 2.35~~(16)& 1\,303 & 161~~(1) \\
CCH+w & our & E5-2670 & Dist & \disabled & 2.35~~(16)& 1\,303 & 214~~(1)\\
\bottomrule
\end{tabular}
 }
\end{table}

We conclude our experimental analysis on the DIMACS Europe road network with a final comparison of related techniques, as shown in Table~\ref{tab:comparison-related-work}. For Contraction Hierarchies~(CH), we report results based on implementations by~\cite{gssv-erlrn-12,dgpw-crprn-13} and ourselves, covering different trade-offs in terms of preprocessing versus query speed. More precisely, we observe that our own CH implementation~(used for detailed analysis and comparison in Section~\ref{sec:experiments:orders}--\ref{sec:experiments:query}) 
has slightly slower queries on travel time metric but factor of~2.1 faster queries on distance metric, at the cost of higher preprocessing time.
Recall from Section~\ref{sec:trad-ch-order} that we employ a different vertex priority function and no lazy updates.
For Customizable Route Planning~(CRP), we report results from~\cite{dgpw-crprn-13,dgpw-crp-11}.

Traditional, metric-dependent CH offers the fastest query time~(91\,\musec, on our machine), but it incurs substantial metric-dependent preprocessing costs, even when parallelized~(109\,s, 12 cores). Furthermore, CH performance is very sensible regarding metrics used: For distance metric, preprocessing time increases by factor of~3.2--11.5 and query time by factor of~4.9--12.8.

In contrast to traditional CH, Customizable Contraction Hierarchies~(CCH) by design achieve a performance trade-off with much lower metric-dependent preprocessing costs, similar to CRP. Accounting for differences in hardware, CCH basic customization time is about a factor of~2--3 slower than CRP customization, but still well below a second. On the other hand, CCH query performance is factor of~2--4 faster than CRP, both in terms of search space as well as query time~(even when accounting for differences due to turn-aware implementation and hardware used). Most interestingly, on travel distance, CCH outperforms even the best CH result in terms of query performance.
Overall, CCH is more robust \wrt~to the metric than CRP: By design, CCH customization processes the same sequence of lower triangles for any metric, while the CCH elimination-tree query~(given a fixed source and target) processes the same sequence of vertices and arcs for any metric.

The CRP implementation of~\cite{dgpw-crprn-13} uses SSE to achieve its customization time of 0.37\,s. In a server scenario where customization is run for many users concurrently~(\eg, to customize traffic updates for all active users), we propose to amortize triangle enumeration time by using SSE to customize metrics for four users at once~(\cf~``Metric Pairs'' in Table~\ref{tab:customization-all}). With this amortized customization~(CCH+a, 0.42\,s), we can almost close the gap to CRP customization performance.

For even better CCH query performance, we may employ perfect customization and witness search (CCH+w). It increases customization time by factor of~3.2~(enumerating all lower, intermediate and upper triangles), but enables a CCH query variant that, while still visiting all vertices in the elimination tree, needs to consider far fewer arcs~(\cf~Table~\ref{tab:query-experiments-2}). Thereby, CCH+w further improves CCH query performance by factor of~1.9 for distance metric and factor of~2.6 for time metric. With 161\,\musec{} for travel time, CCH+w query times are almost as fast as the best CH result of~91\,\musec.

\section{Further Instances}
\label{sec:further-instances}

\subsection{OpenStreetMap-based Road Graphs}
\label{sec:osm-europe}

\begin{wrapfigure}{o}{4.5cm}%
\begin{center}
\includegraphics[scale=0.1]{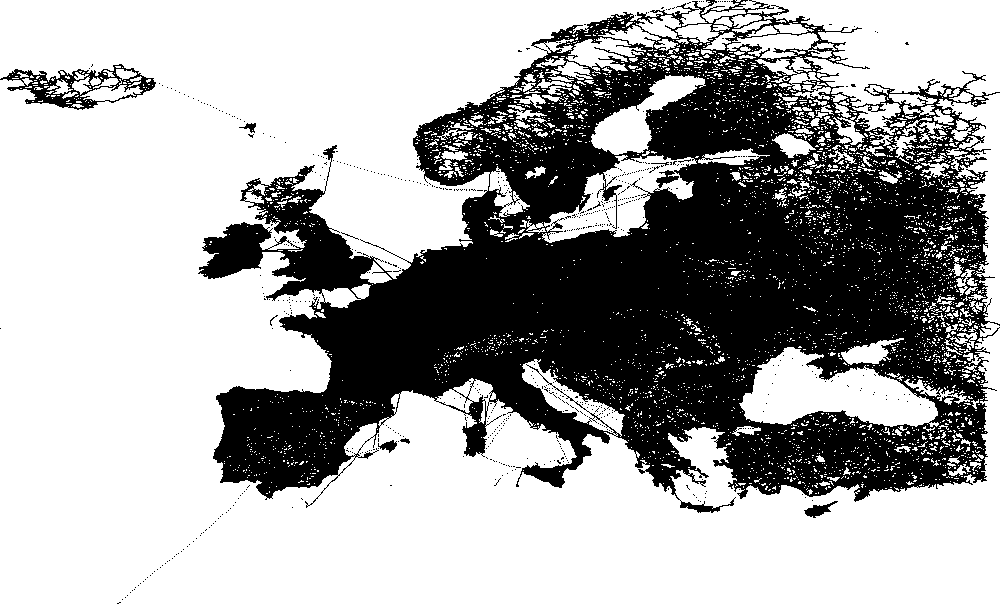}
\end{center}

\caption{All vertices in the OSM-Europe graph.}
\label{fig:osm-europe}
\end{wrapfigure}

OpenStreetMap (OSM) is a very popular collaborative effort to create a map of the world. 
From this huge data source very large road graphs can be extracted, that are very detailed depending on the exact region considered.
Using the data provided by GeoFabrik\footnote{\url{http://download.geofabrik.de/}} and the tools provided by OSRM\footnote{\url{http://project-osrm.org/}}, we extracted a road graph of Europe and report its size in Table~\ref{tab:osm-size}.
The exact graph is available in DIMACS format on our website\footnote{\url{http://i11www.iti.kit.edu/resources/roadgraphs.php}}.
The geographic region corresponding to the graph is depicted in Figure~\ref{fig:osm-europe}.
Note that compared to the DIMACS Europe, our OSM Europe graph also contains Eastern Europe and Turkey.
The graph's east border ends at the east border of Turkey and then goes upward cutting through Russia.
On the other hand, the DIMACS Europe graph stops at the German-Polish border.

\begin{wraptable}{o}{7.5cm}
\vspace{-2em}

\begin{center}
\begin{tabular}{rrr}
\toprule
 & OSM-Eur & DIMACS-Eur\tabularnewline
\midrule 
Vertices & 174M & 18M\tabularnewline
Arcs & 348M & 42M\tabularnewline
\addlinespace
Deg. 1 Vertices & 8M & 4M\tabularnewline
Deg. 2 Vertices & 143M & 2M\tabularnewline
Deg. $>$2 Vertices & 23M & 11M\tabularnewline
\bottomrule 
\end{tabular}
\end{center}
\vspace{-1em}
\caption{Size of the DIMACS Europe instance compared to the OSM Europe instance.\label{tab:osm-size}}
\vspace{1em}
\end{wraptable}

At first glance the DIMACS Europe graph looks drastically smaller, at least in terms of vertex count.
However, this is very misleading.
A peculiarity of OSM is that the road graphs have a huge number of degree-2 vertices.
These vertices are used to encode the curvature of a road.
This information is needed to correctly represent a road graph on a map but not necessarily for routing.
However, most other data sources, including the one on which the DIMACS graph is based, encode this information as arc attributes and thus have fewer degree-2 vertices.
Accelerating shortest path computations on graphs with a huge number of vertices of degree 1 or 2 is significantly easier relative to the graph size.
One reason is that Dijkstra's algorithm cannot exploit the abundance of low-degree vertices.
Dijkstra's algorithm with stopping criterion needs on average 27\,s for a $st$-query with $s$ and $t$ picked uniformly at random on the OSM-Europe graph.
This contrasts with the DIMACS Europe graph, where only 1.6\,s are needed.
A slower baseline obviously leads to larger speedups.
Table~\ref{tab:osm-size} shows that the difference between the two Europe graphs in terms of vertex count is significantly smaller, when discarding degree 1 and degree-2 vertices.
In fact, relative to their geographical region's area, the two graphs seem to be approximately comparable in size.

\begin{wraptable}{o}{7.5cm}
\begin{center}
\begin{tabular}{llrr}
\toprule
 & Partitioner & Vertices & Arcs\tabularnewline
\midrule 
Input &  & 174M & 348M\tabularnewline
\addlinespace
Search Graph Size & Metis & 174M & 400M\tabularnewline
Avg. Search Space & Metis & 1~312 & 495~930\tabularnewline
\addlinespace
Search Graph Size & KaHIP & 174M & 434M\tabularnewline
Avg. Search Space & KaHIP & 678 & 119~295\tabularnewline
\bottomrule 
\end{tabular}
\end{center}
\vspace{-1em}
\caption{CH sizes for OSM Europe. The search space sizes were obtained by randomly sampling 10\,000 vertices uniformly at random.\label{tab:osm-ch-sizes}}
\vspace{1em}
\begin{center}
\begin{tabular}{ccrr}
\toprule
 &  & Triangle  & Customization \tabularnewline
 & \#Thr. & space {[}GB{]} & time {[}s{]}\tabularnewline
\midrule
\multirow{4}{*}{Metis} & 1 &  --- & 43.1\tabularnewline
 & 16 & --- & 5.3\tabularnewline
 & 1 & 16.0 & 17.3\tabularnewline
 & 16 & 16.0 & 2.1\tabularnewline
\midrule
\multirow{4}{*}{KaHIP} & 1 &  --- & 30.6\tabularnewline
 & 16 & --- & 3.4\tabularnewline
 & 1 & 7.2 & 11.4\tabularnewline
 & 16 & 7.2 & 1.7\tabularnewline
\bottomrule
\end{tabular}
\end{center}
\vspace{-1em}
\caption{Customization performance on OSM~Europe. We vary the number of threads and whether precomputed triangles are used. SSE is enabled, running times are non-amortized and no perfect customization was performed.
\label{tab:osm-custom}}
\vspace{1em}
\begin{center}
\begin{tabular}{ccr}
\toprule 
 &  & Query time {[}ms{]}\tabularnewline
\midrule 
Dijkstra- & Metis & 3.7\tabularnewline
based & Metis & 1.0\tabularnewline
\addlinespace
Elimination- & KaHIP & 1.7\tabularnewline
Tree & KaHIP & 0.5\tabularnewline
\bottomrule
\end{tabular}
\end{center}
\vspace{-1em}
\caption{Query performance on OSM~Europe, averaged over 10~000 random $st$-pairs chosen uniformly
at random.\label{tab:osm-query}}
\vspace{1em}

\end{wraptable}%

We computed contraction orders for OSM-Europe.
The sizes of the resulting CHs are reported in Table~\ref{tab:osm-ch-sizes}.
These sizes can be compared with the ``undirected'' numbers of Table~\ref{tab:ch-sizes}.
We did not perform experiments with a perfect witness search.
Metis ordered the vertices within 29 minutes, whereas the KaHIP-based ordering algorithm needed slightly less than 3 weeks.
However, as already discussed in detail, we did not optimize the later for speed and therefore one must \emph{not} conclude from this experiment that KaHIP is slow.
The CHs for OSM-based graphs are significantly larger.
The DIMACS-Europe CH only contains 70M arcs for Metis whereas the OSM-Europe CH contains 400M arcs for Metis.
However, this is due to the huge amount of low-degree vertices in the input. 
On the DIMACS graph the size increase compared to the number of input arcs is $70\,\text{M}/42\,\text{M}=1.67$ whereas for the OSM-based graph the size increase is only $400\,\text{M}/348\,\text{M}=1.15$.
This effect can be explained by considering what happens %
when contracting a graph consisting of a single path.
In the input graph every vertex, except the endpoints, has 2 outgoing arcs, one in each direction.
As long as the endpoints are contracted last, every vertex, except the endpoints, in the resulting CH search graph also have degree 2.
There is thus no size increase.
As the OSM-based graph has many degree-2 vertices, this effect dominates and explains the comparatively small size increase.
The search space sizes are nearly identically. 
For example the KaHIP search space contains 117K arcs for the DIMACS Europe graph, whereas it contains 119K arcs for the OSM Europe graph.
This effect is explained by the fact, that both data sources correspond to almost the same geographical region.
The mountains and rivers are thus in the same locations and the number of roads through these geographic obstacles are the same in both graphs, \ie, both graphs have very similar recursive separators.
The small size increase is explained by the fact that the OSM-based graph also includes Eastern Europe.
The customization times are reported in Table~\ref{tab:osm-custom}.
As the OSM-based graph has more arcs, the customization times are higher on that graph.
On the DIMACS Europe graph 0.61s are needed whereas 1.7s are needed on OSM Europe for the KaHIP order and 16 threads, which is a surprisingly small gap considering the differences in input sizes.
Eliminating the degree-2 vertices from the input should further narrow this gap.
As the search space sizes are very similar, it is not surprising that the query running times reported in Table~\ref{tab:osm-query} are nearly identical.

\subsection{Further DIMACS-Instances}

\begin{table}
\tbl{Instance sizes and experimental results for the additional DIMACS road graphs graphs. The instances are weighted by travel time. They are undirected, \ie, no one-way streets exist and the weight of an arc corresponds to its backward arc's weight. The "Uni" numbers exploit this and have one weight per CH arc, whereas the "Bi" numbers do not and have two weights per CH arc. We report the number of vertices, directed arcs after removing multi-arcs, arcs in the CH search graph, the time needed to do a full non-amortized customization with 1 and 16 threads, the average running time of Dijkstra's algorithm with stopping criterion, and the average running time of an elimination-tree distance query. The order were computed with KaHIP. We averaged over $10\,000$ queries where $s$ and $t$ were picked randomly at uniform. We only do a basic customization and no perfect customization.\label{tab:dimacs-extra}}{

\begin{tabular}{lrrrrrrrrrr}
\toprule
 &  &  &  & \multicolumn{4}{c}{Customization  {[}ms{]}} &  & \multicolumn{2}{c}{CH Query}\tabularnewline
 & Vertices & Arcs & CH Arcs & \multicolumn{2}{c}{1 thread} & \multicolumn{2}{c}{16 threads} & Dijkstra  &  \multicolumn{2}{c}{{[}$\mu$s{]}} \tabularnewline
 \cmidrule(lr){5-8} \cmidrule(lr){10-11}
Graph & {[}$\cdot10^{3}${]} & {[}$\cdot10^{3}${]} & {[}$\cdot10^{3}${]} & Uni & Bi & Uni & Bi & {[}$\mu$s{]} & Uni & Bi\tabularnewline
\midrule
NY & 264 & 730 & 1~547 & 46 & 52 & 11 & 12 & 16~303 & 34 & 34\tabularnewline
BAY & 321 & 795 & 1~334 & 29 & 39 & 7 & 8 & 17~964 & 20 & 20\tabularnewline
COL & 436 & 1~042 & 1~692 & 40 & 51 & 12 & 12 & 25~505 & 35 & 41\tabularnewline
FLA & 1~070 & 2~688 & 4~239 & 93 & 117 & 25 & 32 & 63~497 & 30 & 26\tabularnewline
NW & 1~208 & 2~821 & 4~266 & 88 & 110 & 24 & 31 & 73~045 & 27 & 27\tabularnewline
NE & 1~524 & 3~868 & 6~871 & 195 & 255 & 54 & 57 & 96~628 & 68 & 64\tabularnewline
CAL & 1~891 & 4~630 & 7~587 & 195 & 250 & 61 & 64 & 114~047 & 43 & 43\tabularnewline
LKS & 2~758 & 6~795 & 12~829 & 478 & 646 & 75 & 87 & 175~084 & 138 & 149\tabularnewline
E & 3~599 & 8~708 & 14~169 & 395 & 514 & 85 & 96 & 233~511 & 86 & 88\tabularnewline
W & 6~262 & 15~120 & 24~115 & 682 & 894 & 121 & 132 & 425~244 & 82 & 84\tabularnewline
CTR & 14~082 & 33~867 & 57~222 & 2~656 & 3~592 & 392 & 416 & 1~050~314 & 276 & 285\tabularnewline
USA & 23~947 & 57~709 & 97~902 & 3~617 & 7~184 & 698 & 979 & 1~883~053 & 264 & 286\tabularnewline
\bottomrule
\end{tabular}

}
\end{table}

During the DIMACS challenge on shortest path~\cite{dgj-spndi-09} several benchmark instances were made available. 
Among them is the Europe instance used throughout our in-depth experiments in previous chapters.
Besides this instance, also a set of graphs representing the road network of the USA was published.
In Table~\ref{tab:dimacs-extra} we report experiments for these additional DIMACS road graphs.
Other than the DIMACS-Europe instance, these USA instances originate from the U.S.~Census Bureau. 
Note that the USA instances have some known data quality issues: The graphs are generally undirected~(no one-way streets) and highways are sometimes not connected at state borders.
The DIMACS-Europe comes from another data source and does not have these limitations. 
This is the reason why we focus on the Europe instance in the main part of our evaluation.

However, as the graphs are undirected we can evaluate the impact of using a single undirected metric has on customization running times compared to using two directed metrics~(as used on DIMACS Europe).
Experiments using a single metric are marked with ``Uni'' in the table, whereas the experiments with two metrics are marked with ``Bi''.
The query running times are very similar.
This is not surprising as the number of relaxed arcs does not depend on whether one or two weights are used.
For larger graphs there is a slightly larger difference in running times.
We believe that this is a cache effect.
As the ``Bi'' variant has twice as many weights, less arcs fit into the L3~cache. 
For the smaller graphs this effect does not occur because the higher CH levels occupy less memory than the cache's size and thus doubling the memory consumption is non-problematic.

The difference in customization times between the two variants is larger.
The number of enumerated triangles is the same, but twice as many instructions are executed per triangle.
We would thus expect a factor of~2 difference in the customization running times.
However, this factor is only observed on the largest instance.
On all smaller instances, the gap is significantly smaller.
Again, this is most likely the result of cache effects. 

\subsection{Further Game Instances}

\begin{table}
\tbl{Instance sizes and experimental results for the additional game-based graphs. We report the number of vertices, undirected edges, arcs in the CH search graph, the running time Metis needed to compute the order, the time needed to do a full non-amortized customization with 1 and 16 threads using an undirected weight, the average running time of Dijkstra's algorithm with stopping criterion and the average running time of an elimination-tree distance query. We averaged over $10\,000$ queries where $s$ and $t$ were picked randomly at uniform.}{

\begin{tabular}{lrrrrrrrr}
\toprule
 & Vertices & Edges & CH Arcs & Metis & \multicolumn{2}{r}{Customize {[}ms{]}} & Dijkstra  & CH Query\tabularnewline
Graph & {[}$\cdot10^{3}${]} & {[}$\cdot10^{3}${]} & {[}$\cdot10^{3}${]} & {[}s{]} & 1thr. & 16thr. & {[}$\mu$s{]} &  {[}$\mu$s{]}\tabularnewline
\midrule
16room\_005 & 231 & 838 & 2~959 & 1.196 & 359 & 41 & 10~913 & 15\tabularnewline
AcrosstheCape & 392 & 1~534 & 12~632 & 2.452 & 4~789 & 563 & 23~609 & 239\tabularnewline
blastedlands & 131 & 507 & 3~740 & 0.896 & 1~250 & 144 & 6~075 & 304\tabularnewline
maze512-4-3 & 209 & 686 & 1~641 & 0.996 & 138 & 35 & 7~810 & 6\tabularnewline
ost100d & 137 & 531 & 3~722 & 0.880 & 1~076 & 124 & 5~607 & 116\tabularnewline
random512-35-8 & 161 & 421 & 1~805 & 0.988 & 469 & 39 & 7~422 & 223\tabularnewline
random512-40-8 & 114 & 280 & 797 & 0.684 & 115 & 16 & 4~856 & 41\tabularnewline
\bottomrule
\end{tabular}
}
\end{table}

Besides our main game benchmark instance TheFrozenSea, the benchmark data set of~\cite{s-bgbp-12} contains a large variety of different game maps. 
To demonstrate that our technique also works on other game maps we ran our experiments on a selection of different graphs from the set.
``16room\_005'' is a synthetic map with many rooms in grid shape that are connected through small doors.
``AcrosstheCape'' is another Star Craft map that is sometimes used as benchmark instance.
``blastedlands'' originates from WarCraft 3 and is the largest map in that set in terms of vertices.
``maze512-4-3'' is a synthetic map that consists of a random maze with corridors that are 4 fields wide.
``ost100d'' is the largest map from the Dragon Age Origins map set.
``random512-35-8'' and ``random512-40-8'' are synthetic maps that contain random obstacles. 
The difference between them is the amount of space covered by obstacles.
The website\footnote{\url{http://www.movingai.com/benchmarks/}} from which the data originates includes pictures depicting each instance.

All experiments were run using a single undirected metric with 32bits per weight. The customization running times are non-amortized. We did not perform experiments with a perfect witness search.

All additional game-based instances have fewer vertices than TheFrozenSea. 
Further, the CH query is the slowest on TheFrozenSea with 316 $\mu$s.
Interestingly, a full customization is slower on AcrosstheCape than on TheFrozenSea by about a factor of~2.
This is most likely due to slight differences in the structures of the maps.
However, we believe that it is safe to conclude from the experiments that our technique works across a wide range of maps.

\section{Conclusions}
\label{sec:conclusion}

We have extended Contraction Hierarchies~(CH) to a three-phase customization approach and demonstrated in an extensive experimental evaluation that our Customizable Contraction Hierarchies approach is practicable and efficient not only on real world road graphs but also on game maps. 
We have proposed new algorithms that improve on the state-of-the-art for nearly all stages of the toolchain: Using our contraction graph data structure, a metric-independent CH can be constructed faster than with the established approach based on dynamic arrays.
We have shown that the customization phase is essentially a triangle enumeration algorithm.
We have provided two variants of the customization: The basic variant yields faster customization running times, while perfect customization and witness search computes CHs with a provable minimum number of shortcuts within seconds given a metric-independent vertex order.
We proposed an elimination-tree based query that unlike previous approaches is not based on Dijkstra's algorithm and thus does not use a priority queue.
This results in significantly lower overhead per visited arc, enabling faster queries.
Finally, our extensive experimental analysis shed some light onto the inner workings of Contraction Hierarchies. 

\subsection{Future Work}
\label{sec:future-work}

Good separators are the foundation of Customizable Contraction Hierarchies: Finding better separators directly improves both customization as well as query performance. For our purposes, the time required to compute good separators was of no primary concern~(we do it once per graph). Hence, \emph{our} nested dissection implementation \emph{based} on KaHIP~\cite{ss-tlagh-13} was not optimized for speed but rather to demonstrate that good separators exist.

For some applications this may be too slow. However, since we performed the experiments reported in this paper, significant improvements have been made in this domain: The works of~\cite{w-fsns-14} lay the foundations of a well-implemented KaHIP-based algorithm. \cite{ss-obsrn-15}~introduce a new and surprisingly simple road graph bisection algorithm called Inertial Flow. FlowCutter~\cite{hs-gbpo-15}, available as preprint, computes the best metric-independent contraction orders we have observed so far, much faster than the nested dissection implementation presented in this work. %
This results in decreased query and customization times and reduced memory consumption for Customizable Contraction Hierarchies.

Our experiments suggest, that even metric-dependent Contraction Hierarchies implicitly exploit the existence of small graph cuts.
However, this does not seem to be the only exploited feature as metric-independent orders behave differently when it comes to details:
For example, the stall-on-demand optimization only works with metric-dependent orders, for both travel time as well as distance metric, but not for the metric-independent orders.
Further investigations into this effect might yield additional valuable insights.
A good starting point could be the theoretical works of~\cite{afgw-hdspp-10}.

Further investigation into algorithms explicitly exploiting treewidth~\cite{cz-spdst-00,pwk-caspl-12} seems promising. Also, determining the precise treewidth of road networks could prove useful. 

In practice, route planning services consider several additional real-world constraints beyond the scope of this paper. These include, \eg, turn costs and restrictions, historic traffic data for rush hours, and range constraints due to limited electric vehicle batteries.
For turns, we assume that Customizable Contraction Hierarchies is well applicable as the size of graph cuts cannot grow arbitrarily when turn-expanding the road graph. We are interested in further in-depth experimental analysis of aforementioned scenarios.

\ack{We would like to thank Ignaz Rutter and Tim Zeitz for very inspiring conversations.}

\bibliographystyle{plain}

\end{document}